\spnewtheorem{claim}{Claim}{\bfseries}{\rmfamily}
\DeclareRobustCommand*{\ora}{\overrightarrow}
\DeclareMathOperator{\type}{\textsf{type}}
\DeclareMathOperator{\types}{\textsf{types}}
\DeclareMathOperator{\numtypes}{\#\textsf{types}}
\newcommand{\alltypes}{\mathbb{T}}
\newcommand{\ktypes}{\mathcal{K}}
\newcommand{\oH}[1]{{\ora{H_{#1}}}}
\newcommand{\oHp}[1]{{\ora{H'_{#1}}}}
\newcommand{\oHpp}[1]{{\ora{H''_{#1}}}}
\newcommand{\RR}{\mathbb{R}}
\newcommand{\G}{\mathcal{G}}
\newcommand{\D}{\mathcal{D}} %disk
\newcommand{\R}{\mathcal{R}} %region
\newcommand{\C}{\mathcal{C}} %curve
\newcommand{\pD}{\partial\mathcal{D}}
\newcommand{\mS}{\mathcal{S}} %representation
\newcommand{\GS}{{G_\mathcal{S}}}
\newcommand{\PS}{{\ora{P_\mathcal{S}}}}
\newcommand{\PSprime}{{\ora{P_\mathcal{S'}}}}
\newcommand{\dPS}{{\ora{P^*_\mathcal{S}}}}
\newcommand{\dPSprime}{{\ora{P^*_\mathcal{S'}}}}
\newcommand{\RM}{{\mathcal{R}_M}}
\newcommand{\RMprime}{{\mathcal{R}_{M'}}}
\newcommand{\mR}{{\mathcal{R}}}
\newcommand{\mA}{{\mathcal{A}}}
\DeclareMathOperator\tw{\textsf{tw}}
\newcommand{\trh}{\textsc{Triangle Hitting}\xspace}
\newcommand{\fvs}{\textsc{Feedback Vertex Set}\xspace}
\newcommand{\FVS}{FVS\xspace}
\renewcommand{\TH}{TH\xspace}
\newcommand{\ruleref}[1]{(\hyperref[#1]{R\ref*{#1}})}
\renewcommand{\O}{\mathcal{O}}
\renewcommand{\o}{o}
\newcommand{\DEUXDIR}{2-DIR\xspace}
\newcommand{\CONTACTSEG}{contact-segment\xspace}
\newcommand{\CONTACTDEUXDIR}{contact-\DEUXDIR{}}
\def\cqedsymbol{\ifmmode$\lrcorner$\else{\unskip\nobreak\hfil
\penalty50\hskip1em\null\nobreak\hfil$\lrcorner$
\parfillskip=0pt\finalhyphendemerits=0\endgraf}\fi} 
\newcommand{\cqed}{\renewcommand{\qed}{\cqedsymbol}}
\newcommand{\apx}[1]{\hyperref[#1]{\LeftScissors}}
\title{Feedback Vertex Set for pseudo-disk graphs in subexponential FPT time}
\titlerunning{FVS for pseudo-disk graphs in subexponential FPT time}
\author{Gaétan Berthe\inst{1}\orcidID{0000-0003-0017-6922} \and
Marin Bougeret\inst{1}\orcidID{0000-0002-9910-4656} \and
Daniel Gonçalves\inst{1}\orcidID{0000-0003-3228-9622} \and
Jean-Florent Raymond\inst{2}\orcidID{0000-0003-4646-7602}
}
\authorrunning{G. Berthe et al.}
\institute{LIRMM, Université de Montpellier, CNRS, Montpellier, France. \and CNRS, ENS de Lyon, Université Claude Bernard Lyon 1, Inria, LIP, UMR 5668, 69342, Lyon cedex 07, France.}
\begin{document}

\maketitle

\begin{abstract}
In this paper, we investigate the existence of parameterized algorithms running in subexponential time for two fundamental cycle-hitting problems: Feedback Vertex Set (\FVS) and Triangle Hitting (\TH). We focus on the class of pseudo-disk graphs, which forms a common generalization of several graph classes where such results exist, like disk graphs and square graphs. In these graphs, we show that \TH can be solved in time $2^{\O(k^{3/4}\log k)}n^{\O(1)}$, and given a geometric representation \FVS can be solved in time $2^{\O(k^{6/7}\log k)}n^{\O(1)}$.
\keywords{geometric intersection graphs, subexponential FPT algorithms, cycle-hitting problems, pseudo-disk graphs}
\end{abstract} 

\section{Introduction}
\label{sec:intro}

\paragraph{Context.}
The purpose of Parameterized Complexity is to provide an accurate view of the algorithmic complexity of a (typically NP-hard) decision problem and to understand the different contributions to the running time of the parameters of the instance.
In this framework, a standard objective is to find algorithms whose time complexities have the form $f(k)\cdot n^{\O(1)}$ where $n$ and $k$ respectively denote the size and some parameter of the instance, and $f$ is some computable function. Hence, the possibly super-polynomial part of the running time is confined to the $f(k)$ term. Such algorithms are called \emph{Fixed Parameter Tractable}, or \emph{FPT} for short.

In this paper we mainly focus on \fvs (\FVS for short) which is the problem of deciding, given a graph $G$ and an integer $k$, whether $G$ has a set of $k$ vertices whose deletion yields a forest. Some NP-hard problems like \FVS cannot be solved in time $2^{o(k)}n^{\O(1)}$ in general graphs \cite{Cygan2015Book} (assuming the Exponential Time Hypothesis) but nevertheless admit algorithms with such running times (called \emph{subexponential FPT algorithms}) when the inputs are restricted to certain graph classes.
This was initially proved for particular problems in planar graphs and related classes (like map graphs) and later unified by Demaine el al.\ \cite{bidim} in a general framework called \emph{Bidimensionality Theory}, which in essence states that every \emph{bidimensional}\footnote{Informally: yes-instances are minor-closed and a solution on the $(r,r)$-grid has size $\Omega(r^2)$.} problem (like \FVS) can be solved in subexponential FPT time on any graph class excluding a minor. 

The next step has then been to move away from minor-closed graph classes and investigate in which other classes the basic NP-hard graph problems admit subexponential FPT algorithms. Natural candidates in this direction are intersection graphs of objects in the plane. Indeed, while such graphs are not planar, the underlying planarity may allow to lift techniques and ideas from the bidimensionality theory. 
This is not straightforward in general and required new ideas as explained for example in~\cite{lokSODA22} which discusses the extension of bidimensionality to (unit) disk graphs.

These developments led to subexponential FPT algorithms in disk graphs~\cite{lokSODA22} running in time $2^{\O(k^{13/14}\log k)}n^{\O(1)}$ for \FVS, and in $2^{\O(k^{9/10}\log k)}n^{\O(1)}$ for \TH (the \trh problem where given a graph $G$ and integer $k$ one has to decide if there is a set of $k$ vertices whose deletion yields to a triangle-free graph).
Recently~\cite{Faster2023Shinwoo}, these running times have been improved  to $2^{\O(k^{7/8}\log k)}n^{\O(1)}$ for \FVS and $2^{\O(k^{2/3}\log k)}n^{\O(1)}$ for \TH when a disk representation is given, and a $2^{\O(k^{9/10}\log k)}n^{\O(1)}$ for \FVS and $2^{\O(k^{3/4}\log k)}n^{\O(1)}$ for \TH otherwise.

On the other hand, we proved in a companion paper \cite{berthe2023subexponential} that under the Exponential Time Hypothesis, neither \TH nor \FVS do admit algorithms running in time $2^{o(\sqrt{n})}$ (and thus nor in time $2^{o(\sqrt{k})}n^{\O(1)}$) in $K_{2,2}$-free contact-\DEUXDIR{} graphs of maximum degree~6, which is a very restricted subclass of pseudo-disk graphs (as shown in \autoref{sec:contactsegp-arepseudo}). In the non-parameterized setting, the best known algorithm for \FVS runs in time $2^{\Tilde{\O}(n^{2/3})}$, and actually this algorithm works for the wider class of string graphs~\cite{bonnet2019optimality}.

\paragraph{Contribution and techniques.}

In this paper we consider pseudo-disk graphs, a classical generalization of disk graphs where each vertex is now a pseudo-disk (a subset of the plane that is homeomorphic to a disk), and such that for any two intersecting pseudo-disks, their boundaries intersect on at most two points. 
We show in \autoref{sec:robust} that the algorithms of~\cite{Faster2023Shinwoo} and the analysis of their complexities can be generalized to pseudo-disk graphs with minor modifications and using some results of this paper. Here, in the case of solving the \FVS problem with a representation given as input, we provide a new approach leading to an improved time complexity of $2^{\O(k^{6/7}\log k)}n^{\O(1)}$ (\autoref{thm:FVS-pseudo}), improving on the state of the art for disk graphs.

For this, we apply the following strategy. Given an input $(G,k)$, the objective is to reduce the treewidth of $G$ to $o(k)$ in order to solve \FVS in $2^{\O(\tw(G))}n^{\O(1)}$ using a standard dynamic programming algorithm~\cite{Cygan2015Book}. To do so, we first branch using techniques of~\cite{lokSODA22} to reduce our input, to a collection of pseudo-disk graphs with bounded clique number (or equivalently with \emph{ply} at most $p$). Then we reduce the sizes of these instances to kernels of size $\O(p^4k)$. This implies, using a bound on the number of edges in a pseudo-disk graph of bounded clique number, that these kernels have  treewidth $\O(\sqrt{pn})=\O(\sqrt{p^5k})$.

\paragraph{Organization of the paper.}
In \autoref{sec:prelim} we give the necessary definitions and properties and present the aforementioned preprocessing step. \autoref{sec:fvspseu} reduces our main result to a technical kernelization lemma, that is proved in \autoref{sssec:lemma}.
\autoref{sec:contactsegp-arepseudo} is devoted to a subclass of pseudo-disk graphs, contact graphs of segments, for which the analysis of the algorithm provides a better running time.
In \autoref{sec:robust}, it is shown that the approach of~\cite{Faster2023Shinwoo} provides subexponential time algorithms for \FVS and \TH, also when the input graph is a pseudo-disk graph. We conclude with directions for future research in \autoref{sec:disc}.

\section{Preliminaries}
\label{sec:prelim}

\subsection{Basics}
In this paper logarithms are binary and all graphs are simple, loopless and undirected.
Unless otherwise specified we use standard graph theory terminology, as in \cite{diestel2005graph} for instance.
Given a graph $G$, we denote by $\omega(G)$ the maximum order of a clique in $G$.
We denote by $d_G(v)$ the degree of $v \in V(G)$, or simply $d(v)$ when $G$ is clear from the context. We use the notation $\Delta(G)$ for the maximum degree of the vertices of~$G$.
We denote by $\tw(G)$ the treewidth of $G$.

\subsection{Graph classes}
\label{sec:graphCl}

 \begin{figure}[!ht]
\centering           
    \includegraphics[width=\textwidth]{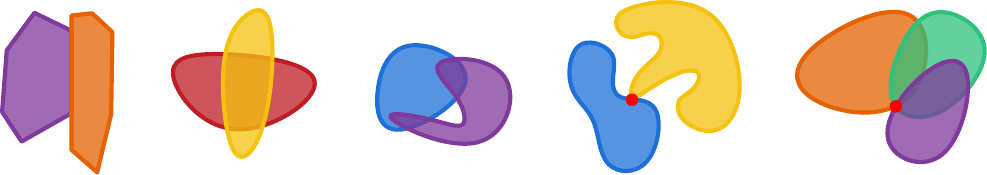}
    \caption{Three forbidden intersections between pseudo-disks, and two avoidable examples.}
    \label{fig:examples}
\end{figure}

 In this article,  we are mainly concerned with geometric graphs described by the intersection or contact of objects in the Euclidean plane.
 The most general class we consider are \emph{string graphs}, which are intersection graphs of strings (a.k.a. Jordan arcs).

\paragraph{Pseudo-disk graphs.} The focus on this paper is on \emph{pseudo-disk graphs}.
A \emph{pseudo-disk} $\D$ is a subset of the plane that is homeomorphic to a disk. We denote its boundary by $\pD$ and call \emph{internal points} any point of $\D$ that does not belong to $\pD$. A set of pseudo-disks $\mS$ forms a \emph{system of pseudo-disks} if, for any two intersecting elements $\D_1$, $\D_2\in \mS$,
their borders, $\pD_1$ and $\pD_2$, intersect on at most two points.
Under minor perturbation, any system of pseudo-disk, can be such that
for any two intersecting elements $\D_1, \D_2\in \mS$, either their borders do not intersect but in that case one is contained in the other, or their borders intersect on exactly two points while $\D_1 \cap \D_2$ contains internal points.
Similarly, we can require that no point belongs to more than two boundaries (see \autoref{fig:examples}). Given a system of pseudo-disks $\mS$, we denote by $\GS$ the corresponding intersection graph, and this defines the class of pseudo-disk graphs.

Note that pseudo-disk graphs are in particular string graphs and they form a common generalization of various classes of ``fat'' intersection graphs such as disk graphs, intersection graphs of axis-parallel squares, and more generally any intersection graph obtained from homothetic copies of a given convex set, but they also generalize the contact graphs of segments (see \autoref{sec:contactsegp-arepseudo}).

Given $\mS$ a system of pseudo-disks and $z$ a point in the plane, the \emph{ply} of $z$ (with respect to $\mS$) is the number of pseudo-disks of $\mS$ containing $z$. The \emph{ply} of a maximal connected region $\R$ of $\mathbb{R}^2\setminus \bigcup_{\D\in \mS} \partial \D$, is the ply of its points. The \emph{ply} of $\mS$ is the maximum ply of a point of the plane with respect to $\mS$. A pseudo-disk graph $G$ has \emph{ply} $p$ if it is the intersection graph of a system of pseudo-disks of ply $p$.

\paragraph{Representation of pseudo-disk graphs.} A system of pseudo-disks $\mS$ is represented by the directed plane multi-graph $\PS$ defined as follows (see \autoref{fig:primal-dual-rep}). For any two $\D_1, \D_2\in \mS$, every point in $\pD_1\cap \pD_2$ is a vertex of $\PS$. For any $\D\in \mS$, the Jordan arcs in $\pD$ joining any two such points form the arcs of $\PS$. Those arcs are oriented in such a way that $\pD$ corresponds to a clockwise cycle around $\D$.
It may remain disks $\D\in \mS$ with uncrossed boundaries (i.e. $\pD\cap \pD'=\emptyset$ for any $\pD'\in \mS$). For such a disk, we pick an arbitrary point in $\pD$ as a vertex for $\PS$, and the rest of $\pD$ corresponds to a clockwise loop on this vertex. Note that $\PS$ has at most $2|E(\GS)| + |V(\GS)|$ vertices, and at most $4|E(\GS)|+|V(\GS)|$ edges. This is a convenient feature of pseudo-disk systems to admit a polynomial (in terms of $\GS$) space data structure representing them.
This graph $\PS$ is called the \emph{pseudo-disk representation} of $\mS$ and $\GS$. We denote $\dPS$ the dual graph of~$\PS$. Observe that any arc is oriented from a region (of $\mS$) with lower ply towards one with higher ply.

\begin{figure}
    \centering
    
    \includegraphics[width=\textwidth]{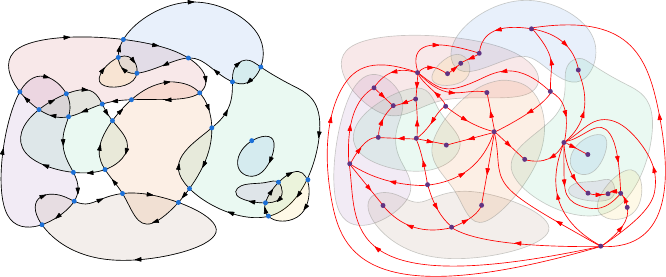}
    \caption{A pseudo-disk representation $\PS$ and its dual $\dPS$.}
    \label{fig:primal-dual-rep}
\end{figure}

for several useful operations, updating a pseudo-disk representation is simple.
\begin{lemma}\label{lem:tech-pseudo-disk-del-contr}
    Given a pseudo-disk graph $G$, deleting any vertex, or any edge $uv$ that does not belong to a triangle, or contracting any edge $uv$ that does not belong to a triangle results in a pseudo-disk graph. 
    Furthermore, given a representation of $G$ of ply $p$, one can update the representation in linear time and the new representation has ply at most $p$. 
\end{lemma}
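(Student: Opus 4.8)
The plan is to handle the three operations one at a time, and in each case to exhibit an explicit local surgery on a pseudo-disk system $\mS$ realizing $G$, then check three things: (i) the modified system is again a system of pseudo-disks whose intersection graph is the modified graph, (ii) its ply is at most that of $\mS$, and (iii) the induced change to the representation $\PS$ (hence to $\dPS$) is confined to a bounded neighbourhood of the affected boundary-crossing vertices, so it can be carried out in linear time. Throughout I would work with the normalized form of $\mS$ described just after the definition of pseudo-disk systems: any two intersecting pseudo-disks are either nested or cross in exactly two boundary points with internal points in common, and no point lies on three boundaries; re-normalizing after each surgery costs only a further minor perturbation.

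\emph{Vertex deletion} is immediate: removing $v$ deletes $\D_v$ from $\mS$, which is trivially still a system of pseudo-disks, cannot increase the ply, and in $\PS$ only requires deleting the arcs of $\pD_v$, suppressing the crossing vertices that lose their second incident boundary, and merging the corresponding cells of $\dPS$. For \emph{edge deletion} of $uv$ lying in no triangle, the key observation is that no third pseudo-disk meets both $\D_u$ and $\D_v$, so the lens $R:=\D_u\cap\D_v$ is disjoint from every other pseudo-disk of $\mS$; consequently the two arcs $\alpha_u\subseteq\pD_u$ and $\alpha_v\subseteq\pD_v$ that bound $R$ each form a single edge of $\PS$ (the no-triangle hypothesis forbids a third boundary from crossing their interiors, and general position forbids a third boundary through their endpoints). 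I would replace $\D_u$ by $\D_u'$ obtained by carving out of $\D_u$ a thin closed neighbourhood of $R$; since $R$ meets $\pD_u$ exactly along the arc $\alpha_u$, this is ``biting a piece off along a boundary arc'', so $\D_u'$ is again homeomorphic to a disk. As $\D_u'\subseteq\D_u$ and the carved region avoids all other pseudo-disks, $\D_u'$ meets exactly the pseudo-disks $\D_u$ met except $\D_v$, with unchanged intersection regions; the new system thus realizes $G-uv$ and, being a subregion of $\mS$, has ply at most $p$. In $\PS$ one deletes $\alpha_u$ and $\alpha_v$, reroutes $\pD_u$ along a new arc just inside the former $\alpha_u$, and updates $\dPS$ near the former lens.

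For \emph{edge contraction} of $uv$ lying in no triangle we have $N(u)\cap N(v)=\emptyset$. Since $\D_u$ and $\D_v$ cross in exactly two boundary points and share internal points, their union $\D_w:=\D_u\cup\D_v$ is homeomorphic to a disk, and I would replace $\D_u,\D_v$ by this single pseudo-disk $\D_w$; it intersects exactly the pseudo-disks indexed by $(N(u)\cup N(v))\setminus\{u,v\}$, so the intersection graph becomes $G$ with $uv$ contracted. For the pseudo-disk condition: any $\D_x$ meeting $\D_w$ is adjacent to exactly one of $u,v$ by disjointness of the neighbourhoods, say to $u$ but not $v$; then $\D_x$ avoids $\D_v$, hence avoids the part of $\pD_w$ carried by $\pD_v$, so $\pD_x\cap\pD_w=\pD_x\cap(\pD_u\setminus\D_v)$ has at most two points, and the crossing is proper since $\D_x\cap\D_u$ already contained internal points. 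For the ply: a point $z$ gains membership in $\D_w$ and loses it in each of $\D_u,\D_v$ that contained it, so its ply stays the same if $z$ lay in exactly one of $\D_u,\D_v$, drops by one if $z\in\D_u\cap\D_v$, and is otherwise unchanged; the ply cannot increase. In $\PS$ one deletes the two arcs interior to the other pseudo-disk and the two crossing vertices, which fuses the boundary cycles of $\D_u$ and $\D_v$ into the single cycle $\pD_w$, and updates $\dPS$ by merging the former lens cell into a neighbour.

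The routine part is the ply accounting and the linear-time claim, both of which follow because every change is supported in a bounded neighbourhood of the two vertices $\pD_u\cap\pD_v$. The part that needs care is the pair of topological statements that carving a neighbourhood of the lens out of $\D_u$ yields a set homeomorphic to a disk, and that $\D_u\cup\D_v$ is homeomorphic to a disk; I expect this to be the main obstacle to a fully rigorous write-up, but both are standard consequences of the two-boundary-crossing normalization (two closed disks overlapping in a disk-like region glue to, respectively are cut down to, a closed disk), and I would isolate them as small topological observations before assembling the three cases.
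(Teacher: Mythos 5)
Your proposal is correct and follows essentially the same route as the paper: vertex deletion is trivial, edge deletion removes (a neighbourhood of) the lens $\D_u\cap\D_v$ from a disk — the paper shrinks both disks to $\D_u\setminus\D_v$ and $\D_v\setminus\D_u$, you shrink only $\D_u$, an immaterial difference — and contraction takes $\D_u\cup\D_v$ and uses the no-triangle hypothesis to show any third boundary meets the new boundary only along $\pD_u\setminus\D_v$ (or $\pD_v\setminus\D_u$), hence in at most two points, with the ply and linear-time update argued exactly as in the paper.
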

\begin{proof}
The deletion of a vertex $v$ is obtained by simply removing $\D_v$ from the representation.

The deletion of an edge $uv$ that does not belong to any triangle is obtained by replacing 
$\D_u$ and $\D_v$, by $\D_u \setminus \D_v$ and by $\D_u \setminus \D_v$,\footnote{To be precise, $\D_u\setminus \D_v$ is not closed, and hence not homeomorphic to a closed disk.
Here and later, let us use this notation to denote a closed subset homeomorphic to a disk that is arbitrarily close to $\D_u\setminus \D_v$ while remaining inside this set.}
respectively. As we start from a system of pseudo-disks, these subsets are homeomorphic to disks. Furthermore, as $\D_u\cap \D_v$ does not intersect any other pseudo-disk $\D_w$, for $w\in V(G)\setminus \{ u, v \}$,
we have that $\pD_w$ intersect the new $\pD_u$ (resp. $\pD_v$) as many times as the former one, that is zero or twice. So, this system is a system of pseudo-disks.

The contraction of $uv$ is obtained by replacing 
$\D_u$ and $\D_v$ by $\D_u \cup \D_v$, and let us denote $w$ the new vertex.
As we start from a system of pseudo-disks $\D_w = \D_u\cup \D_v$ is homeomorphic to a disk. Now, consider any vertex $t\in N(u)\setminus \{v\}$ and let us prove that $\pD_t$ and $\pD_w$ intersect at 0 or 2 points. The case of a vertex $t'\in N(v)\setminus \{u\}$ is handled symmetrically. We divide the closed curves $\pD_u$ and $\pD_v$ into two arcs each, as follows: $\mA_u^{int}= \pD_u \cap \D_v$,
$\mA_u^{ext}= \pD_u \setminus \D_v$,
$\mA_v^{int}= \pD_v \cap \D_u$, and
$\mA_v^{ext}= \pD_v \setminus \D_u$. Note that as $t\notin N(v)$, we have that none of $\mA_u^{int}$, $\mA_v^{int}$ or $\mA_v^{ext}$ intersects $\pD_t$. Thus, $\pD_t$ intersects $\mA_u^{ext}$ in either 0 or 2 points (the first case occurs if $\D_w \subsetneq \D_u$). This implies that $\pD_t$ intersects $\pD_w = \mA_u^{ext} \cup \mA_v^{ext}$ in either 0 or 2 points.
As the other intersections of the system, not involving $\D_w$, did not change, the obtained system is a system of pseudo-disks.

Given a pseudo-disk representation $\PS$ of $G$ with ply $p$, in any of the above three operations, the update is very simple. Actually, for the edge deletion and the edge contraction, these could even be performed in constant time. It is also clear that, in any case, the ply can only decrease.
\qed
\end{proof}

\subsection{Properties of pseudo-disk graphs}

Let us present two results, of independent interest, that are needed for our approach. 
In string graphs, the treewidth and the number of edges can be related using the following theorem, obtained by combining the results on balanced separators of string graphs of Lee \cite{lee2016separators} and the links between separators and treewidth of Dvo{\v{r}}{\'a}k and Norin \cite{DVORAK2019137}.

\begin{theorem}[\cite{lee2016separators} and  \cite{DVORAK2019137}] \label{th:sep-m-tw}
    Any $m$-edge string graph has treewidth $\O(\sqrt{m})$.
\end{theorem}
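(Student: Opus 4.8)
The plan is to combine the two cited results almost verbatim; the only genuine work is to check that the separator bound applies \emph{uniformly} across all subgraphs. Concretely, Lee's separator theorem \cite{lee2016separators} states that every string graph with $m$ edges admits a balanced separator (say a $2/3$-balanced one) of order $\O(\sqrt m)$, and Dvo\v{r}\'ak and Norin's theorem \cite{DVORAK2019137} states conversely that if \emph{every} subgraph of a graph $H$ admits a balanced separator of order at most $s$, then $\tw(H)=\O(s)$. So it suffices to fix an $m$-edge string graph $G$, show that every subgraph of $G$ has a balanced separator of order $\O(\sqrt m)$ with the bound depending only on $m$ (and not on the subgraph), and then set $s=\O(\sqrt m)$ in the Dvo\v{r}\'ak--Norin bound.

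\textbf{The uniform separator bound.} First, string graphs form a hereditary class: deleting a vertex from a string representation (i.e.\ removing the corresponding string) yields a representation of the induced subgraph. Hence for any $W\subseteq V(G)$ the graph $G[W]$ is a string graph with at most $m$ edges, so by \cite{lee2016separators} it has a $2/3$-balanced separator $S$ of order $\O(\sqrt m)$. Now let $H$ be an arbitrary subgraph of $G$ and put $W=V(H)$, so that $H$ is a subgraph of $G[W]$ on the same vertex set; take $S$ as above for $G[W]$. Since $H$ has no more edges than $G[W]$, every connected component of $H-S$ is contained in a component of $G[W]-S$ and therefore has at most $\tfrac23|W|$ vertices, so $S$ is also a $2/3$-balanced separator of $H$, of order $\O(\sqrt m)$. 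This is the required uniform bound, and \cite{DVORAK2019137} then yields $\tw(G)=\O(\sqrt m)$.

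\textbf{Remarks and the main point.} The precise balance constant is immaterial: a $2/3$-balanced separator can be refined to a $1/2$-balanced one by a constant number of further applications, and both cited theorems are in any case robust to the choice of constant. It is worth noting why the appeal to \cite{DVORAK2019137} is needed at all: building a tree decomposition by naively recursing with Lee's separator theorem would accumulate one separator of size $\O(\sqrt m)$ per level of recursion, i.e.\ $\Theta(\log m)$ of them along a root-to-leaf path (using that vertices incident to no edge are irrelevant for treewidth, so one may assume $n=\O(m)$), giving only the weaker bound $\tw(G)=\O(\sqrt m\log m)$. Removing this logarithmic factor is precisely the content of the Dvo\v{r}\'ak--Norin theorem. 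Thus there is no real obstacle internal to this argument --- the substance is entirely inside the two citations, and the delicate-but-easy point is the observation above that Lee's bound, being phrased in terms of $m$ rather than the number of vertices, transfers to all subgraphs without change.
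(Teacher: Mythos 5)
Your proposal is correct and follows exactly the route the paper takes: the paper proves nothing itself here, it simply combines Lee's $\O(\sqrt{m})$ balanced-separator theorem for string graphs with the Dvo\v{r}\'ak--Norin equivalence between separation number and treewidth, which is precisely your argument. Your added observation that Lee's bound, being in terms of $m$, transfers uniformly to all subgraphs (via induced subgraphs being string graphs with at most $m$ edges) is the right way to make the citation-level combination rigorous.
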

 
To bound the number of edges, we use the following result:
\begin{restatable}[See the proof of Lemma 16 in \cite{lokshtanov2024linearedges}]{lemma}{apxlempseudonumberedges}\label{lem:pseudo-number-edges}
    There exists a constant $c$ such that any graph $G$ on $n$ vertices admitting a pseudo-disk representation with ply $p$ has at most $cpn$ edges.
\end{restatable}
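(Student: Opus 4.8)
The plan is to bound $|E(G)|$ by splitting the edges according to the local geometry of the intersections. Fix a pseudo-disk representation of $G$ of ply $p$; after a minor perturbation (which does not increase the ply) we may assume it is in the normal form of \autoref{sec:graphCl}, so that no point lies on more than two boundaries and, for every edge $uv$, either $\pD_u$ and $\pD_v$ are disjoint with one of $\D_u,\D_v$ contained in the other (call $uv$ a \emph{nesting} edge), or $\pD_u$ and $\pD_v$ cross in exactly two points and $\D_u\cap\D_v$ has interior points (call $uv$ a \emph{crossing} edge). Write $E(G)=E_n\cup E_c$ accordingly; it suffices to prove $|E_n|=\O(pn)$ and $|E_c|=\O(pn)$.

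Nesting edges are handled directly by the ply bound: for a fixed vertex $v$ pick an interior point $z$ of $\D_v$; every pseudo-disk $\D_u$ with $\D_v\subsetneq\D_u$ contains $z$, so there are at most $p-1$ of them. Each nesting edge is counted this way by its (unique, after perturbation) smaller endpoint, hence $|E_n|\le(p-1)n$.

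For the crossing edges I would use the Clarkson--Shor random-sampling framework over the set $\mS$ of pseudo-disks. Declare a \emph{configuration} to be a pair $(\{u,v\},c)$ where $uv$ is a crossing edge and $c$ is one of the two points of $\pD_u\cap\pD_v$; such a configuration is \emph{defined by} the two objects $\D_u,\D_v$ (so configurations have size $2$), and its \emph{conflict set} is $\{\D_w : w\notin\{u,v\},\ c\in\D_w\}$, which by the normal form has size exactly $\ply(c)-2$. The key observation is that a configuration has empty conflict set if and only if $c$ is a vertex of the boundary of $\bigcup_{\D\in\mS}\D$: if $\ply(c)=2$ the ``outside both'' sector at $c$ is uncovered, so $c\in\partial\big(\bigcup_{\D\in\mS}\D\big)$, and conversely a boundary point of the union lies in no open pseudo-disk, hence (being on exactly two boundaries) has ply $2$. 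Consequently the number of conflict-free configurations over any $m$-element subfamily is exactly the number of vertices of the boundary of the union of that subfamily, which is $\O(m)$ by the Kedem--Livne--Pach--Sharir theorem on the linear union complexity of pseudo-disks. Since our representation has ply at most $p$, every configuration has at most $p-2$ conflicts, so Clarkson--Shor (configuration size $2$, linear conflict-free bound) yields that the total number of configurations is $\O\big(p^{2}\cdot(n/p)\big)=\O(pn)$. As each crossing edge contributes two configurations, $|E_c|=\O(pn)$, and therefore $|E(G)|=|E_n|+|E_c|=\O(pn)$.

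The step I expect to be the main obstacle is the correct setup of the Clarkson--Shor argument: one must choose the conflict relation so that ``zero conflicts'' is equivalent to an event that provably occurs only linearly often \emph{and} the number of conflicts is simultaneously controlled by the ply. It is precisely to make both requirements hold at once that the configurations are taken to be (crossing pair, crossing point) rather than, say, lenses — a boundary crossing the interior of a lens need not raise the ply, so the lens formulation fails the second requirement. The remaining ingredients — the reduction to normal form and the linear union complexity of pseudo-disks — are classical and I would simply invoke them.
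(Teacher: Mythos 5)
Your proof is correct: the split into nesting pairs (charged to the inner disk via the ply bound) and crossing pairs (bounded by Clarkson--Shor over (pair, crossing point) configurations, using that conflict-free configurations are exactly vertices of the union boundary, which is linear for pseudo-disks) gives $\O(pn)$ edges as required. The paper does not prove this lemma itself but defers to Lemma~16 of \cite{lokshtanov2024linearedges}, and your argument is essentially the standard one underlying that reference (shallow union complexity of pseudo-disks via Clarkson--Shor); the only point to state a touch more carefully is that the perturbation to normal form can be done locally so the ply does not increase (or increases by an additive constant, which is harmless for the $\O(pn)$ bound).
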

Together with \ref{th:sep-m-tw} this gives us:
\begin{corollary}\label{cor:pseudo-tw}
Any pseudo-disk graph on $n$ vertices admitting a pseudo-disk representation with ply $p$ has treewidth $\O(\sqrt{pn})$.
\end{corollary}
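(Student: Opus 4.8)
The plan is to chain the two results the excerpt has just set up. By \autoref{lem:pseudo-number-edges}, any pseudo-disk graph $G$ on $n$ vertices with a representation of ply $p$ has $m \le cpn$ edges for some absolute constant $c$. By \autoref{th:sep-m-tw} (the Lee / Dvo\v{r}\'ak--Norin bound), every $m$-edge string graph---and pseudo-disk graphs are string graphs, as noted in the graph-classes subsection---has treewidth $\O(\sqrt{m})$. Substituting $m \le cpn$ into this bound gives $\tw(G) = \O(\sqrt{cpn}) = \O(\sqrt{pn})$, which is exactly the claimed estimate.

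Concretely, I would write one line invoking \autoref{lem:pseudo-number-edges} to get the edge bound $|E(G)| \le c p n$, one line observing that $G$ is a string graph so \autoref{th:sep-m-tw} applies, and a final line composing the two: $\tw(G) = \O(\sqrt{|E(G)|}) = \O(\sqrt{pn})$, absorbing the constant $c$ into the $\O(\cdot)$. No case analysis or computation beyond this substitution is needed.

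I do not anticipate a genuine obstacle here, since this corollary is a routine consequence of the two cited statements; the only things to be slightly careful about are that pseudo-disk graphs are indeed string graphs (so that \autoref{th:sep-m-tw} is applicable) and that the ply parameter $p$ is fixed throughout so that it may legitimately appear inside the $\O$-notation together with $n$. Both are immediate from the setup in \autoref{sec:graphCl}.
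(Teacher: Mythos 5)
Your proposal is correct and is exactly the paper's argument: the corollary is stated as an immediate combination of \autoref{lem:pseudo-number-edges} (giving $|E(G)| \le cpn$) and \autoref{th:sep-m-tw} (treewidth $\O(\sqrt{m})$ for $m$-edge string graphs), using that pseudo-disk graphs are string graphs. No further comment is needed.
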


\autoref{lem:pseudo-ply-clique} ensures us that it is easy to provide a constant approximation of the clique number for a pseudo-disk graph given a representation.
\begin{restatable}{lemma}{apxlempseudoplyclique}\label{lem:pseudo-ply-clique}
    There is a constant $c$ such that for every graph $G$, if $G$ admits a pseudo-disk representation with ply $p$, then $p\le \omega(G)\le cp$. This implies that given a pseudo-disk representation, there is a polynomial time algorithm that returns a clique whose size is within a constant factor from the optimum.
\end{restatable}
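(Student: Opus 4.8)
I would prove the two inequalities separately and then read the algorithm off the second one. For $p\le\omega(G)$, fix a system of pseudo-disks $\mS$ realizing $G$ with ply $p$ and a point $z$ of the plane whose ply with respect to $\mS$ equals $p$ (such a point exists by the definition of the ply of a system). The $p$ pseudo-disks of $\mS$ that contain $z$ pairwise intersect, hence the corresponding $p$ vertices form a clique of $G$, so $\omega(G)\ge p$. (When $V(G)\neq\emptyset$ we have $p\ge 1$, since a pseudo-disk is nonempty; the empty case is trivial.)

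For $\omega(G)\le cp$ the plan is to feed a maximum clique to the edge bound \ref{lem:pseudo-number-edges}. Let $c_0$ be the constant of that lemma, let $K$ be a maximum clique of $G$, and put $t=|K|=\omega(G)$. Restricting $\mS$ to the pseudo-disks of the vertices of $K$ yields a pseudo-disk representation of $G[K]=K_t$, and since deleting pseudo-disks cannot increase the ply, this representation has ply at most $p$. Applying \ref{lem:pseudo-number-edges} to $G[K]$ gives $\binom{t}{2}=|E(K_t)|\le c_0 p t$, hence $t-1\le 2c_0 p$, i.e.\ $t\le 2c_0 p+1\le (2c_0+1)p$ using $p\ge 1$. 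So the statement holds with $c=2c_0+1$.

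For the algorithmic part, I note that the required clique is simply a point-clique of maximum ply, which can be found directly from the representation $\PS$. The planar arrangement of the boundaries $\{\partial\D_v\}$ has size polynomial in $|V(G)|+|E(G)|$. I would run a breadth-first search over its faces starting at the unbounded face, where the set of pseudo-disks containing the current face is empty, and maintain this set along the search by toggling the membership of $v$ whenever the interior of an arc of $\partial\D_v$ is crossed; this toggle is correct because $\partial\D_v$ separates the inside of $\D_v$ from its outside and, after the usual perturbation, no point lies on more than two boundaries (the orientation convention ``arcs go from lower to higher ply'' makes the sign of the update explicit, though it is not strictly needed). This computes, for every face $R$, the set $S_R=\{v:R\subseteq\D_v\}$ and hence the ply $|S_R|$ of $R$. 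Returning $S_{R^\star}$ for a face $R^\star$ of maximum ply yields a clique of size equal to the ply $p$ of $\mS$, which by the previous paragraph is within a factor $c$ of $\omega(G)$; the whole computation is polynomial.

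I do not expect a real obstacle: once \ref{lem:pseudo-number-edges} is available, both halves are short. The only points requiring a little care are checking that passing to an induced subrepresentation does not raise the ply (immediate, since it only removes pseudo-disks) and ensuring that the arrangement traversal actually recovers the clique itself, not merely the numerical value of the ply, in polynomial time.
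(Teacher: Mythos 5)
Your proposal is correct and follows essentially the same route as the paper: the lower bound from the disks covering a maximum-ply point, the upper bound by applying \autoref{lem:pseudo-number-edges} to the subrepresentation induced by a maximum clique (giving $\binom{\omega}{2}\le c_0 p\,\omega$, hence $\omega\le(2c_0+1)p$), and the algorithm by traversing the dual of the boundary arrangement to compute plies and output the disks containing a maximum-ply face. Your BFS with membership toggling is just a minor variant of the paper's spanning-tree-of-the-dual plus parity-of-crossings computation, so there is nothing substantive to add.
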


\begin{proof}
    The lower-bound is trivial so we discuss the upper-bound. It is clear that a maximum clique in $G$ admits a ply $p$ representation.
    Thus, by \autoref{lem:pseudo-number-edges} applied to this subgraph, ${\omega(G)(\omega(G)-1)}/{2} \le cp\omega(G)$ for some constant $c$, which leads to $\omega(G) \le (2c +1)p$.
    Hence, the clique formed by all the disks that contain some fixed region of maximum ply has at least $\omega(G)/c'$ vertices with $c'=2c+1$.
    
    In order to compute such a clique given a representation $\PS$ we proceed as follows. Recall that $\dPS$ is the dual of $\PS$ (see \autoref{fig:primal-dual-rep}). First, we construct a spanning tree $T$ of $\dPS$ (ignoring the orientations of the edges). To compute the ply of each region, it suffices to notice that for any arc $rr'$ of $T$, the ply of $r'$ is one more than the one of $r$, and that the minimum ply among the regions of $\mathcal{S}$ (i.e. the vertices of $T$) equals zero. Then, for a region $r$ with maximum ply, to output the list of pseudo-disks containing it, one can consider a path $P$ in $T$ from $r$ to a region of ply zero, and select the pseudo-disks $\D_v$ such that $\pD_v$ is crossed an odd number of times by $P$. \qed
\end{proof}

\subsection{Preliminary branching step}

As first step of our algorithms we make use of two preprocessing routines: the first one is a folklore branching that allows to deal with cliques larger than a chosen size $p$ (where typically $p=k^{\epsilon}$) and the second, allows considering a (non necessarily optimal) feedback vertex set $M$, that intersects every triangle at least twice. These steps are described in \cite{lokSODA22} for disk graphs. They are summarized in the following routine.

\begin{corollary}[Corollary 10 in~\cite{berthe2023subexponential}]\label{cor:bothbranchings}
Let $\G$ be a hereditary graph class where the maximum clique can be $\alpha$-approximated for some constant factor $\alpha\geq 1$ in polynomial time. 
There exists a $2^{\O\left (\frac{k}{p}\log k \right )}n^{\O(1)}$-time algorithm that, given an instance $(G,k)$ of $\FVS$ and an integer $p\in [6\alpha,k]$, where $G \in \G$, returns a collection
$\C$ of size $2^{\O\left (\frac{k}{p}\log k \right)}$ of tuples $(G',M,k')$ such that:
\begin{enumerate}
    \item For any $(G',M,k') \in \C$, $(G',k')$ is an instance of \FVS where $G'$ is an induced subgraph of $G$, $\omega(G') \le p$, and $k' \le k$;
    \item $|M| = \O(p k)$, and every triangle of $G'$ has at least two vertices in $M$; and
    \item $(G,k)$ is a yes-instance of \FVS if and only if there exists $(G',M,k') \in \C$ such that $(G',k')$ is a yes-instance of \FVS.
\end{enumerate}
\end{corollary}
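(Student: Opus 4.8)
This is obtained by composing, sequentially, the two branching routines of~\cite{lokSODA22} (as adapted in~\cite{berthe2023subexponential}, where this is \mbox{Corollary~10}): a clique‑branching reducing the clique number below $p$, then a branching producing, on each resulting instance, a modulator $M$ that $2$‑covers all triangles. We only sketch the plan.

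\emph{Clique branching.} As long as the $\alpha$‑approximation algorithm returns a clique $C$ with $|C|>p/\alpha$ (which happens whenever $\omega(G)>p$), we branch on it. Since a clique on $t\ge 3$ vertices is not a forest, every feedback vertex set of $G$ contains all but at most two vertices of $C$; we thus branch on the $\binom{|C|}{2}+|C|+1 = \O(|C|^2)$ choices of the $\le 2$ vertices of $C$ left outside the solution, putting the remaining $\ge |C|-2$ vertices of $C$ into the solution and decreasing $k$ accordingly. Each branch decreases $k$ by at least $|C|-2 = \Omega(p)$ (using $p\ge 6\alpha$), so the search tree has depth $d=\O(k/p)$, and along any root‑to‑leaf path the cliques $C_1,C_2,\dots$ branched upon satisfy $\sum_i(|C_i|-2)\le k$, hence $\sum_i|C_i| = \O(k)$. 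By AM--GM the product of the branching factors along such a path is at most $\left(\tfrac{ck}{d}\right)^{2d}$ for some constant $c$, which is $2^{\O((k/p)\log k)}$ since $p\le k$. Each leaf yields an induced subgraph $G'$ of $G$ with $\omega(G')\le p$ (as the final approximate clique certifies this), an integer $k'\le k$, and $(G,k)$ is a yes‑instance iff $(G',k')$ is for some leaf, by applying the observation above to an optimal solution.

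\emph{Modulator branching.} On each instance $(G',k')$ with $\omega(G')\le p$ so produced, compute a constant‑factor approximate feedback vertex set $A$; if $|A|$ exceeds the ratio times $k'$, discard the instance (it is a no‑instance), so we may assume $|A|=\O(k')$ and $G'-A$ is a forest. Then every triangle of $G'$ has at least one vertex in $A$, and the triangles with exactly one, namely $\{a,x,y\}$ with $a\in A$ and $xy\in E(G'-A)$, are those still to handle. For these we run the branching of~\cite{lokSODA22}: one repeatedly isolates a sunflower‑type obstruction (an edge contained in many such triangles, whose core must then be hit) and branches on a bounded set of candidates, or adds controlled sets of vertices to $M$, exploiting $\omega(G')\le p$ to bound the number of vertices added per vertex of $A$. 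In each branch this produces a set $M\supseteq A$ with $|M|=\O(pk')=\O(pk)$ such that every triangle of $G'$ has at least two vertices in $M$; the number of branches created is again $2^{\O((k/p)\log k)}$, and an optimal solution of $(G',k')$ survives into one of them.

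\emph{Combination.} Let $\C$ consist of the triples $(G',M,k')$ at the leaves of the second branching applied to every leaf of the first. Each routine runs in polynomial time per node and produces $2^{\O((k/p)\log k)}$ leaves, so $|\C| = 2^{\O((k/p)\log k)}$ and the procedure runs in time $2^{\O((k/p)\log k)}n^{\O(1)}$; properties~1--3 follow by chaining the correctness statements of the two routines. The main technical point — and the only place the geometry enters — is the modulator branching: one must argue that, after bounded‑factor branching, the obstructions to $2$‑covering triangles are few enough that $M$ stays of size $\O(pk)$, which relies on $\omega(G')\le p$, i.e.\ on the bounded ply of the representation. \qed
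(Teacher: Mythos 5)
Note first that the paper does not prove this statement at all: it is imported verbatim as Corollary~10 of the companion paper~\cite{berthe2023subexponential}, whose proof in turn adapts the two branching routines of~\cite{lokSODA22}. So your task was to reconstruct a cited result, and your overall two-stage plan (clique branching to force $\omega(G')\le p$, then a second routine producing the modulator $M$) is indeed the intended one. Your first stage is essentially complete and correct: branching on the $\O(|C|^2)$ choices of the at most two clique vertices left out of the solution, the drop of $k$ by $|C|-2=\Omega(p)$ per branch (this is exactly where $p\ge 6\alpha$ is used), and the resulting $2^{\O((k/p)\log k)}$ bound on the number of leaves all check out, and the leaves are induced subgraphs with $\omega\le p$ certified by the approximation algorithm.

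The gap is in the second stage, which is the actual content of the corollary beyond folklore clique branching. You reduce it to ``run the branching of~\cite{lokSODA22} on sunflower-type obstructions'' but never argue the three quantitative claims that matter: that every triangle of $G'$ ends up with \emph{two} vertices in $M$, that $|M|=\O(pk)$, and that the number of branches stays $2^{\O((k/p)\log k)}$. Concretely, after taking a $2$-approximate feedback vertex set $A$, the problematic triangles are $\{a,x,y\}$ with $a\in A$ and $xy$ an edge of the forest $G'-A$ inside $N(a)$; for a single $a$ there may be arbitrarily many pairwise disjoint such edges (their endpoints need not form a clique, so $\omega(G')\le p$ gives no direct bound), so one cannot simply ``add controlled sets of vertices per vertex of $A$''. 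The argument needs a threshold-based case distinction (vertices $a$ whose neighborhood in $G'-A$ contains many disjoint such edges are handled by branching/forcing, while for the remaining $a$ a small vertex cover of these edges is added to $M$), together with the accounting that makes the two bounds $\O(pk)$ and $2^{\O((k/p)\log k)}$ come out; none of this appears in your sketch. A further, smaller inaccuracy: the corollary is purely combinatorial — it holds for any hereditary class with a constant-factor clique approximation, and no geometry or ply enters its proof. In this paper the representation is only used elsewhere, namely to supply the clique-approximation oracle for pseudo-disk graphs (\autoref{lem:pseudo-ply-clique}) and later in the kernelization, so your closing claim that the $\O(pk)$ bound ``relies on the bounded ply of the representation'' misplaces where the geometric input is actually needed.
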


\autoref{lem:pseudo-ply-clique} ensures us that \autoref{cor:bothbranchings} applies to pseudo-disk graphs.
As in this procedure, every graph $G'$ in the collection $\C$ is an induced subgraph of $G$, \autoref{lem:tech-pseudo-disk-del-contr} ensures us that we can also get a pseudo-disk representation $\overrightarrow{P'}$ for each of them. Note that since every graph $G'$ in $\C$ has clique number at most $p$, by \autoref{lem:pseudo-ply-clique}, its representation $\overrightarrow{P'}$ has ply at most $p$.

\section{Hitting cycles in pseudo-disk graphs}
\label{sec:fvspseu}
 
The main result of this paper is the following.

\begin{theorem}\label{thm:FVS-pseudo}
    There is an algorithm that, given a $n$-vertex pseudo-disk graph with a representation and a parameter $k$, solves \FVS in time $2^{\O(k^{6/7} \log k)}n^{O(1)}$.
\end{theorem}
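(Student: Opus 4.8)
The strategy is spelled out in the introduction: combine the branching of Corollary~\ref{cor:bothbranchings} with a kernelization step, and then solve each resulting instance by dynamic programming on a tree decomposition. Concretely, I would proceed as follows.

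\paragraph{Step 1: Branching to bounded clique number.} Fix a parameter $p = k^{\epsilon}$ for an exponent $\epsilon$ to be optimized at the end. Apply Corollary~\ref{cor:bothbranchings} (which is applicable to pseudo-disk graphs by \autoref{lem:pseudo-ply-clique}) to the input $(G,k)$ with this value of $p$. This produces, in time $2^{\O((k/p)\log k)}n^{\O(1)}$, a family $\C$ of $2^{\O((k/p)\log k)}$ triples $(G',M,k')$ where $G'$ is an induced subgraph of $G$ with $\omega(G')\le p$, $|M| = \O(pk)$, every triangle of $G'$ meets $M$ in at least two vertices, $k'\le k$, and $(G,k)$ is a yes-instance iff some $(G',k')$ is. By \autoref{lem:tech-pseudo-disk-del-contr}, each $G'$ inherits a pseudo-disk representation, and by \autoref{lem:pseudo-ply-clique} this representation has ply at most $p$. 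It then suffices to solve \FVS on each triple within the claimed budget and return the disjunction.

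\paragraph{Step 2: Kernelization.} This is where the technical kernelization lemma of \autoref{sssec:lemma} is invoked: using the representation and the set $M$ (which controls where triangles can live), reduce $(G',k')$ to an equivalent instance $(G'',k'')$ with $|V(G'')| = \O(p^4 k)$, still a pseudo-disk graph of ply $\O(p)$. Intuitively, vertices outside $M$ that are ``redundant'' — e.g. low-degree vertices, or long paths/degree-2 structures in $G'-M$, or large sets of vertices with the same neighborhood into $M$ — can be safely removed or contracted (using the safe operations of \autoref{lem:tech-pseudo-disk-del-contr}, which preserve being a pseudo-disk graph and do not increase the ply). The bound $\O(p^4k)$ comes from bounding the number of surviving vertices per ``type'' relative to $M$, times the number of types, which is polynomial in $p$ and linear in $|M| = \O(pk)$.

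\paragraph{Step 3: Bounding treewidth and dynamic programming.} On the kernel $G''$ with $n'' = \O(p^4 k)$ vertices and ply $\O(p)$, \autoref{cor:pseudo-tw} gives $\tw(G'') = \O(\sqrt{p \cdot n''}) = \O(\sqrt{p \cdot p^4 k}) = \O(\sqrt{p^5 k}) = \O(p^{5/2}k^{1/2})$. Compute a tree decomposition of width $\O(\tw(G''))$ in time $2^{\O(\tw(G''))}n^{\O(1)}$ (or use an approximation), then run the standard $2^{\O(\tw)}n^{\O(1)}$ dynamic programming algorithm for \FVS~\cite{Cygan2015Book}. The total running time over the whole family is
\[
2^{\O\!\left(\frac{k}{p}\log k\right)} \cdot 2^{\O\!\left(p^{5/2}k^{1/2}\right)} n^{\O(1)}.
\]
Balancing the two exponents: $k/p \approx p^{5/2}k^{1/2}$, i.e. $p^{7/2} \approx k^{1/2}$, i.e. $p \approx k^{1/7}$, which makes both exponents $\O(k^{6/7})$. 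Keeping the $\log k$ factor from the branching term, the running time is $2^{\O(k^{6/7}\log k)}n^{\O(1)}$, as claimed. (One should double-check whether the kernelization step also contributes a $\log$ factor or only polynomial factors; in the worst case it is absorbed into $2^{\O(k^{6/7}\log k)}$.)

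\paragraph{Main obstacle.} The routine parts are Steps~1 and~3 — they assemble known ingredients (the branching corollary, the edge bound \autoref{lem:pseudo-number-edges}, the treewidth-from-edges bound \autoref{th:sep-m-tw}, and textbook DP). The crux is Step~2: proving the kernelization lemma with the specific size bound $\O(p^4k)$ while maintaining a pseudo-disk representation of ply $\O(p)$. The difficulty is that reduction rules must be \emph{safe} for \FVS even though $G'$ may contain triangles (forbidding naive edge contractions), which is exactly why the property ``every triangle has $\ge 2$ vertices in $M$'' is carried along: it confines the triangles to a controlled region so that reductions performed on $G'-M$ (or on the interaction between $G'-M$ and $M$) only touch triangle-free parts, letting \autoref{lem:tech-pseudo-disk-del-contr} apply. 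Getting the polynomial dependence on $p$ right in the number of vertex types, and verifying that each reduction rule preserves both yes/no status and the geometric representation, is the technical heart of the argument and is deferred to \autoref{sssec:lemma}.
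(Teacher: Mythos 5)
Your proposal is correct and follows essentially the same route as the paper: apply \autoref{cor:bothbranchings} with $p$ to be fixed, invoke the kernelization lemma (\autoref{lemma:red-fvs-pseudo}) as a black box to shrink each branch to $\O(p^4k)$ vertices with ply at most $p$, bound the treewidth by $\O(\sqrt{p^5k})$ via \autoref{cor:pseudo-tw}, run the standard treewidth DP, and balance the exponents at $p=k^{1/7}$, exactly as in the paper's proof of \autoref{thm:FVS-pseudo}. Your identification of the kernelization lemma as the technical heart (deferred to \autoref{sssec:lemma}) matches the paper's own structure.
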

To achieve this, we proceed in three steps. We first use the preprocessing of \autoref{cor:bothbranchings}. Then, we kernelize each instance provided by the branching process in \autoref{lemma:red-fvs-pseudo}. We show that these kernelized instances have small treewidth, and thus we can conclude with a dynamic algorithm to solve each of them. 
The main technical ingredient of this proof is the following lemma, whose full proof is deferred to \autoref{sssec:lemma}.

%\begin{restatable}[\apx{sssec:lemma}]{lemma}{mainlemmafvs}\label{lemma:red-fvs-pseudo}
\begin{lemma}\label{lemma:red-fvs-pseudo}
Given a quadruple $(G, \PS, M,k)$ as given by \autoref{cor:bothbranchings}, there is a polynomial time algorithm that returns an equivalent instance 
$(G',k')$ where $k'\leq k$ and $G'$ is a pseudo-disk graph with ply at most $p$ and $\O(p^4 k)$ vertices.
\end{lemma}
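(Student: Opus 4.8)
The plan is to kernelize $(G,k)$ by exhaustively applying a bounded set of reduction rules, and then to bound the number of vertices of the reduced instance $(G',k')$ using the forest structure of $G'-M$ together with the absence of large ``flowers'' that the rules guarantee, quantified via the sparsity of pseudo-disk graphs. Every rule will be a composition of vertex deletions and contractions of edges lying in no triangle, so that \autoref{lem:tech-pseudo-disk-del-contr} lets us maintain a pseudo-disk representation of ply at most $p$ throughout (never creating a clique of size larger than $p$), and no rule will increase $k$.

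First I would apply the standard \FVS reduction rules (see e.g.\ \cite{Cygan2015Book,lokSODA22,Faster2023Shinwoo}), stated \emph{relative to the modulator $M$}: iteratively delete vertices of degree at most $1$ and vertices lying on no cycle; and for $v\notin M$ of degree exactly $2$ with neighbours $x,y$, contract the triangle-free edge $vx$ when $xy\notin E(G)$, while if $xy\in E(G)$ then $x,y\in M$ (as $M$ meets every triangle in at least two vertices) and $v$ is kept as a \emph{special} vertex. On top of these I would add flower/book rules: if some $m\in M$ lies on $k+1$ cycles pairwise meeting only at $m$, then $m$ is in every solution of size $\le k$, so remove it and decrease $k$; and bound by a small function of $p$ the multiplicity of short cycles through a fixed pair of modulator vertices, as well as the length of ``ladder'' subpaths of $F:=G'-M$ along which every vertex has a single neighbour in $M$. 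Each such rule is polynomial-time testable and applicable and preserves both equivalence and the representation.

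After exhaustion, every vertex of $F$ has degree at least $3$ in $G'$ except the special ones (degree $2$, both neighbours in $M$). Since $F$ is a forest, its number of vertices of degree $\ge 3$ within $F$ is at most twice its number of leaves and isolated vertices; by the ladder rule the number of degree-$2$ vertices of $F$ is within a $\poly(p)$ factor of the number of maximal degree-$2$ paths of $F$, which in turn is $\O(\#\text{leaves}+\#\text{branch vertices of }F)$. Every leaf or isolated vertex of $F$ (in particular every special vertex) has at least two neighbours in $M$, so it suffices to bound $W:=\{v\in F:\ |N(v)\cap M|\ge 2\}$. This is the heart of the argument: assuming $|W|$ large, I would extract more than $k$ pairwise vertex-disjoint cycles — for $v\in W$ pick two modulator neighbours $m_1,m_2$ and a short cycle through $v$ inside $G'[\{v\}\cup M]$ or through one extra $F$-vertex; these cycles interfere only through shared modulator vertices, the flower rules cap, for each $m\in M$, the number of such cycles through $m$, so a greedy/LP packing yields $\Omega\!\left(|W|/\poly(p)\right)$ vertex-disjoint cycles. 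As the instance can then be declared a no-instance, the surviving instances have $|W|=\O(\poly(p)\,k)$, hence $|V(G')|=|M|+\O(|W|)=\O(pk)+\O(\poly(p)\,k)$; a careful accounting of the exponents in $p$ (the flower caps carry a factor of $p$ because a region of the representation is covered by at most $p$ disks, by \autoref{lem:pseudo-ply-clique}, and $|M|=\O(pk)$ by \autoref{cor:bothbranchings}) gives the claimed $\O(p^4k)$.

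I expect the main obstacle to be precisely this packing step and the exact power of $p$: the relevant quantities — the number of components of $F$, the number of modulator pairs with a common neighbour, the total length of the degree-$2$ paths, and the number of $M$--$F$ edges — are a priori only bounded in terms of one another through \autoref{lem:pseudo-number-edges} (whose crude bound $e(F,M)\le\O(p\,|V(G')|)$ is circular here), so the only escape is to make the flower/book/ladder rules individually force each of them down to $\O(\poly(p)\,k)$ via disjoint-cycle packings, all while checking that every rule — in particular the deletion of ``redundant'' special vertices and the ladder shortening — genuinely preserves \FVS-equivalence and, via \autoref{lem:tech-pseudo-disk-del-contr}, the pseudo-disk representation and its ply. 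Carrying this out is the content of \autoref{sssec:lemma}.
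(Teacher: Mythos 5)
Your proposal has the right outer shell (reduction rules applied to the instances coming out of \autoref{cor:bothbranchings}, maintenance of the representation via \autoref{lem:tech-pseudo-disk-del-contr}, then a counting argument against the modulator), but the step you yourself flag as the obstacle is exactly the missing content, and the route you suggest for it does not work. Generic flower/book/ladder rules with a modulator cap the number of relevant structures per modulator vertex (or per pair) at roughly $k$, not at $\poly(p)$; packing vertex-disjoint cycles then yields bounds of the shape $\O(|M|^2)$ or $\O(|M|\cdot k)=\O(pk^2)$ — i.e.\ quadratic in $k$, as in the classical quadratic \FVS kernel — whereas the whole point of the lemma is a bound \emph{linear} in $k$ (otherwise the treewidth of the kernel is not sublinear and \autoref{thm:FVS-pseudo} collapses). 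Your attempt to instead cap multiplicities at a function of $p$ by a ``book'' rule is not safe in the generality you state it: a vertex of $W$ need not be isolated or a leaf with neighbourhood exactly a modulator pair, and deleting all but $\poly(p)$ such vertices requires an exchange argument that fails when their neighbourhoods in the modulator are arbitrary sets. You also correctly note the circularity of using \autoref{lem:pseudo-number-edges} to bound the $M$--$F$ incidences, but you do not break it.

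The paper breaks it with geometry, and none of that machinery appears in your proposal. It first enlarges $M$ to $M'$ (adding every pseudo-disk containing a crossing point of two modulator boundaries, $|M'|=\O(p^2k)$), which forces each remaining disk $\D_u$ to meet the arrangement of $M'$ in a tree (\autoref{cl:hosted-tree}), indeed an arborescence rooted in a ply-zero face for border vertices. Inner vertices are then counted by a planarity argument plus the pseudo-disk hypergraph theorem (\autoref{hyperpseudo}), which bounds the number of \emph{distinct} small neighbourhoods by $\O(|M|p^3)$ — this is what replaces your packing step and is where the factor $p^4k$ really comes from. For border/outer vertices, root-to-leaf paths of the arborescences define ``types'', i.e.\ \emph{ordered cliques} of $M'$; auxiliary plane graphs drawn along these paths bound the number of types and of trees per type by $\O(p|M'|)$, and the neighbourhoods of the relevant vertices are nested prefixes of these ordered cliques ($K$- and $(K,K')$-monotone). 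It is precisely this monotone structure that makes the paper's rules \ruleref{e:r0pd}, \ruleref{e:r3pd}--\ruleref{e:r10pd} safe with caps of $5$ or $6$ per type (the exchange arguments swap deleted vertices for the first elements of the ordered cliques), and that lets each surviving tree of $G-M'-I_{M'}$ be bounded by $\O(\numtypes(T))$ via $K_{3,3}$-minor and Euler-formula arguments in a planar auxiliary graph. Without the $M'$ enlargement, the hosted-tree/type analysis, and the monotone-neighbourhood rules, your sketch neither establishes the safety of its multiplicity caps nor reaches a bound linear in $k$, so the claimed $\O(p^4k)$ is not proved.
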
%{restatable}
As we will apply \autoref{cor:bothbranchings} for $p=\O(k^{\epsilon})$, the above will directly imply that 
$|V(G')|$ is almost linear, and, using \autoref{cor:pseudo-tw}, that $G'$ has sublinear treewidth. Thus, the whole technical difficulty is to prove \autoref{lemma:red-fvs-pseudo}. 
Now that the lemma is stated, we can proceed with the proof of the theorem.

\begin{proof}[of \autoref{thm:FVS-pseudo}]
We apply \autoref{cor:bothbranchings} on an instance $(G,k)$ provided with a representation $\PS$ and on a value $p\in [6,k]$ to be set later. As a result we obtain a collection $\mathcal{C}$ of $2^{\O\left (\frac{k}{p}\log k \right )}$ instances that are each provided with a ply $p$ pseudo-disk representation, and with a feedback vertex set $M$ of size $\O(pk)$ that intersects every triangle on at least two vertices. Furthermore, solving these instances of \FVS is enough to get a solution to our initial instance $(G,k)$. This first stage is done in time $2^{\O\left (\frac{k}{p} \log k\right )}n^{O(1)}$.

Then for each of the $2^{\O\left (\frac{k}{p}\log k \right )}$ obtained instances, we apply the kernelization described in \autoref{lemma:red-fvs-pseudo}, and we get a pseudo-disk graph $G'$ with ply at most $p$ and with only $\O(p^4k)$ vertices. By \autoref{cor:pseudo-tw}, this kernel has treewidth at most $\O(\sqrt{p^5k})$, and can thus be solved in time $2^{\O(\tw(G') )}n^{\O(1)} = 2^{\O\left (p^{5/2} k^{1/2} \right )}n^{\O(1)}$ with classical algorithms~\cite{Cygan2015Book}.
The overall time complexity of this algorithm is 
\[
2^{\O{\left (\frac{k}{p} \log k \right )}}n^{\O(1)} \ +\ \ 2^{\O\left (\frac{k}{p} \log k \right )} \cdot n^{\O(1)}\cdot 2^{\O\left (p^{5/2} k^{1/2} \right ) }n^{\O(1)}.
\]
Thus setting $p=k^\frac{1}{7}$ the time complexity is indeed $2^{\O\left (k^{6/7} \log k \right )}n^{\O(1)}$. \qed
\end{proof}

\section{Proof of \autoref{lemma:red-fvs-pseudo}}\label{sssec:lemma}

Recall that in \autoref{lemma:red-fvs-pseudo} we are given a quadruple $(G, \PS, M,k)$ as computed by \autoref{cor:bothbranchings}.
The equivalent instance computed in \autoref{lemma:red-fvs-pseudo} will be obtained using reduction rules.
Before applying these rules, we define the set $M'$ by adding to $M$ every pseudo-disk $\D$ that contains a point of $\pD_u\cap \pD_v$, for $u,v\in M$ (see \autoref{fig:M'+Hu}, left).
Given a vertex set $X$, we denote $\mathcal{R}_X$ the subset of the plane defined by $\left( \cup_{u \in X} \D_u\right)$.
Let us now partition the vertices of $V(G)\setminus M'$ into three types.
\begin{figure}[ht]
    \centering
    \includegraphics[scale=0.5]{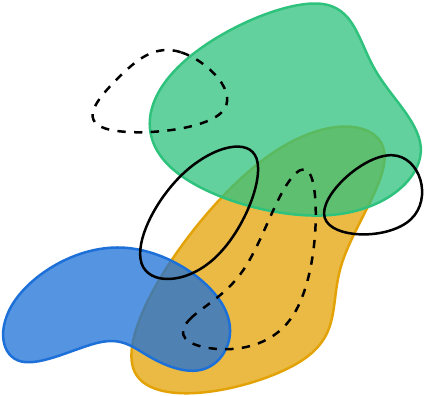}
    \hspace{1cm}
    \includegraphics[scale=0.4]{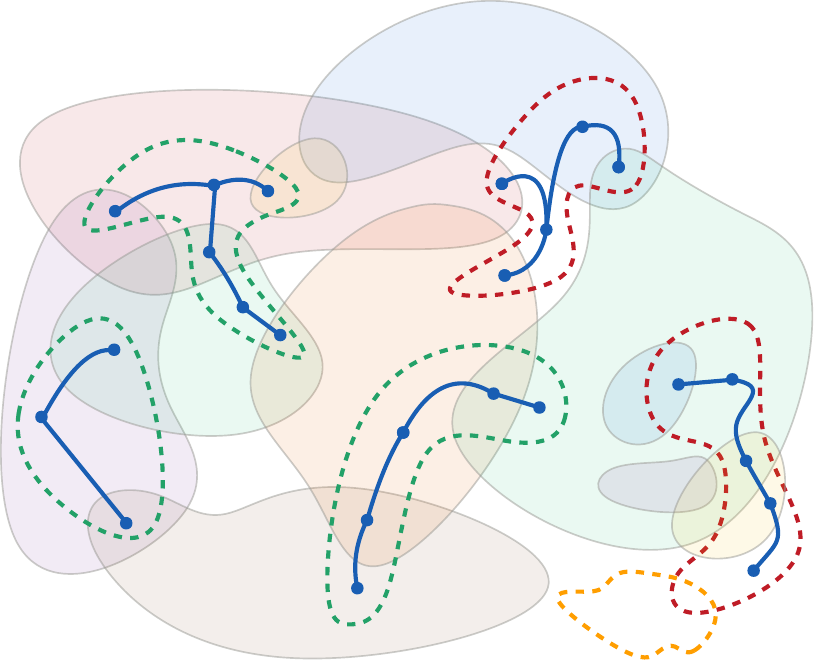} 
    \caption{Left: From $M$ to $M'$. The pseudo-disks of $M$ are filled. Among the others, those added to $M'$ have solid border. Right: Here, the filled pseudo-disks belong to $M'$. In dashed green, the inner pseudo-disks of $I_{M'}$. In dashed red, the border pseudo-disks of $B_{M'}$. In dashed orange, the only pseudo-disk $v \in O_{M'}$. For any pseudo-disk $u$ not in $M'$, we depicted its hosted graph $H_u$, which is defined in \autoref{def:Ht-intro}. 
    }
    \label{fig:M'+Hu}
\end{figure}

\begin{definition}\label{def:in-bo-out-M'}
    A vertex $v \in V(G)\setminus M'$ is:
    \begin{itemize}
        \item an \emph{inner-$M'$-vertex} if $\D_v \subseteq \RMprime$,
        \item a \emph{border-$M'$-vertex} if $\D_v$ intersects $\RMprime$ without being included, and 
        \item an \emph{outer-$M'$-vertex} if $\D_v$ does not intersect $\RMprime$.
    \end{itemize}  
    We denote $I_{M'}$, $B_{M'}$, and $O_{M'}$ the vertex sets with these three types of vertices (see \autoref{fig:M'+Hu}).
\end{definition}

Let us state a few properties about these vertices.
\begin{claim}\label{cl:pseudo-disk-abc}\mbox{}
\begin{enumerate}[(a)]
    \item \label{e:cc} $|M'|= \O(p|M|)=\O(p^2k)$.
    \item \label{e:aa} For any edge $uv$ of $G-M'$, the intersection $\D_u \cap \D_v$ does not intersect $\RMprime$.    
    \item \label{e:bb} For every $u\in V(G) \setminus M'$, the pseudo-disk $\D_u$ does not contain any point of $\pD_v\cap \pD_w$, for any two vertices  $v,w\in M'$.
    \item \label{e:dd} In $G-M'$ the vertices of $I_{M'}$ are isolated (i.e. $\forall v\in I_{M'}$ we have that $N(v) \subseteq M'$).
    \item \label{e:ee} For every non-empty tree $T$ of $G - M' - I_{M'}$, the set $\mathcal{R}_T \setminus \RMprime$ is non-empty and connected, with $\mathcal{R}_T = \left(\cup_{v\in V(T)}\D_v\right)$.

\end{enumerate}
\end{claim}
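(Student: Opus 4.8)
The claim bundles five routine geometric facts about the sets $M'$, $I_{M'}$, $B_{M'}$, $O_{M'}$, so the plan is to verify each item in turn, leaning on \autoref{lem:pseudo-ply-clique} for the counting part and on elementary topology of systems of pseudo-disks for the rest.

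\medskip

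\textbf{Item \eqref{e:cc}.} I would argue that for a fixed pair $u,v\in M$ with $\pD_u\cap\pD_v\neq\emptyset$, this intersection is (after the perturbation described in \autoref{sec:graphCl}) at most two points, and each such point has ply at most $p$, so at most $p$ pseudo-disks of $\mS$ contain it. Summing over the $\O(|M|^2)$ pairs gives $\O(p|M|^2)$ new disks; but in fact one can do better by charging each newly added disk $\D$ to a single point of some $\pD_u\cap\pD_v$ it contains, and the number of such points is $\O(|E(G[M])|)=\O(p|M|)$ by \autoref{lem:pseudo-number-edges} since $G[M]$ has ply at most $p$. Hence $|M'|\le |M|+\O(p|M|)=\O(p|M|)=\O(p^2k)$, using $|M|=\O(pk)$.

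\medskip

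\textbf{Items \eqref{e:aa} and \eqref{e:bb}.} These are immediate from the definition of $M'$. For \eqref{e:bb}: if $\D_u$ contained a point $z\in\pD_v\cap\pD_w$ with $v,w\in M'$, then by construction $u$ would itself have been put into $M'$ (the definition adds every pseudo-disk containing a point of $\pD_v\cap\pD_w$ for $v,w\in M'$), contradicting $u\notin M'$ --- here one must be slightly careful and observe that the defining condition should be read as a closure/saturation, or argue directly that any $z\in\pD_v\cap\pD_w$ lies on two boundaries of $M$-disks or is covered already; I would state this cleanly. For \eqref{e:aa}: suppose $z\in\D_u\cap\D_v$ lies in $\RMprime$, i.e.\ $z\in\D_w$ for some $w\in M'$. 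After the perturbation, $\D_u\cap\D_v$ has a nonempty interior, so we may take $z$ internal to $\D_u$, $\D_v$, and $\D_w$. Then the three boundaries $\pD_u,\pD_v,\pD_w$ pairwise cross, and a short case analysis on a system of pseudo-disks (three pairwise-intersecting pseudo-disks whose common region is nonempty force one of the "forbidden" configurations of \autoref{fig:examples}, or force $\D_u$ or $\D_v$ to contain a crossing point of the other two boundaries, hence of two $M'$-disks, contradicting \eqref{e:bb}) yields a contradiction. I expect this case analysis to be the main obstacle: it is the one genuinely geometric step, and getting the perturbation bookkeeping right (internal points, no triple boundary points) is where the care is needed.

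\medskip

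\textbf{Items \eqref{e:dd} and \eqref{e:ee}.} For \eqref{e:dd}: if $v\in I_{M'}$ had a neighbour $u\notin M'$, then $\D_u\cap\D_v\neq\emptyset$, and since $\D_v\subseteq\RMprime$ this intersection meets $\RMprime$, contradicting \eqref{e:aa}. For \eqref{e:ee}: let $T$ be a nonempty tree of $G-M'-I_{M'}$. Each $v\in V(T)$ is a border- or outer-$M'$-vertex, so $\D_v\setminus\RMprime\neq\emptyset$ (for outer vertices trivially; for border vertices because $\D_v$ is not contained in $\RMprime$, and since $\D_v$ is connected and $\RMprime$ is closed, $\D_v\setminus\RMprime$ has nonempty interior). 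Thus $\mathcal{R}_T\setminus\RMprime$ is nonempty. For connectedness I would induct on $|V(T)|$ along a leaf ordering: adding a leaf $v$ attached to $u\in T$, the edge $uv$ gives a point in $\D_u\cap\D_v$ which by \eqref{e:aa} lies outside $\RMprime$, so $\D_u\setminus\RMprime$ and $\D_v\setminus\RMprime$ intersect; combined with the fact that each individual $\D_v\setminus\RMprime$ is connected (a pseudo-disk minus a region it is not contained in and whose boundary crosses $\pD_v$ at most twice per $M'$-disk stays connected --- here again one invokes the pseudo-disk crossing bound, and \eqref{e:bb} to rule out $\pD_v$ passing through a crossing of two $M'$-boundaries), the union stays connected. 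Assembling the five items completes the proof of the claim.
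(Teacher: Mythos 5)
The central gap is that you never invoke the structural guarantee from \autoref{cor:bothbranchings} that every triangle of $G$ has at least two vertices in $M$, and this combinatorial property is exactly what drives items \eqref{e:cc} and \eqref{e:aa} (and the mixed cases of \eqref{e:bb}). For \eqref{e:aa}, your geometric route does not go through: three pairwise-intersecting pseudo-disks sharing an interior point form a perfectly legal configuration in a system of pseudo-disks (think of three disks through a common point), so no ``forbidden'' configuration of \autoref{fig:examples} is forced; and your fallback contradiction with \eqref{e:bb} also fails, because among $u,v,w$ only $w$ is known to be in $M'$ --- a crossing point of $\pD_u$ or $\pD_v$ with $\pD_w$ lying in the third disk is not a crossing of two $M'$-boundaries, so \eqref{e:bb} says nothing about it. The intended argument is one line and purely combinatorial: a point $z\in\D_u\cap\D_v\cap\D_w$ makes $uvw$ a triangle of $G$, and $u,v\notin M'\supseteq M$, so this triangle has at most one vertex in $M$, contradicting the property delivered by \autoref{cor:bothbranchings}.

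The same omission breaks your bound in \eqref{e:cc}. Charging each disk of $M'\setminus M$ to a crossing point of two $M$-boundaries yields $|M'\setminus M|=\O(p|M|)$ only if each such point lies in at most one pseudo-disk outside $M$; by your own ply observation, up to $p$ disks could be charged to the same point, which only gives $\O(p^2|M|)=\O(p^3k)$ and is too weak for \autoref{lemma:red-fvs-pseudo} (it would degrade the kernel size and hence the exponent in \autoref{thm:FVS-pseudo}). The required injectivity is again the triangle property: two disks outside $M$ containing the same point of $\pD_u\cap\pD_v$ with $u,v\in M$ would form, together with $u$, a triangle having only one vertex in $M$. The same remark cleanly settles the cases of \eqref{e:bb} where $v$ or $w$ lies in $M'\setminus M$, which you flag but leave open. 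Finally, your inductive proof of \eqref{e:ee} is a genuinely different and viable route (the paper instead argues by contradiction on a component of $\mathcal{R}_T\cap\RMprime$ with at least four border arcs, forcing $\pD_u$ and an $M'$-boundary to cross four times); but its key sub-claim, that each $\D_v\setminus\RMprime$ is connected, still needs to be proved via the laminarity of $\{\D_v\cap\D_w \mid w\in M'\cap N(v)\}$ inside $\D_v$ (boundaries of $M'$-disks do not cross inside $\D_v$ by \eqref{e:bb}, and their traces on $\pD_v$ cannot interleave by the two-crossing bound), exactly as in \autoref{cl:hosted-tree}; as written this step is only gestured at.
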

\begin{proof}
Property~\eqref{e:cc} is a consequence of the following facts. 
By assumption of the statement we aim to show (\autoref{lemma:red-fvs-pseudo}), $G$ is a pseudo-disk graph with ply at most $p$, thus \autoref{lem:pseudo-number-edges} implies that $G[M]$ has $\O(p|M|)$ edges. Hence, in the representation of $G$ there are at most $\O(p|M|)$ 
intersection point of the borders of two pseudo-disks $\D_1,\D_2\in M$.
Then, since each pseudo-disk in $M'\setminus M$ contains such an intersection point (by definition of $M'$), and since each of those points belongs to at most one pseudo-disk $\D\notin M$ (otherwise there would be a triangle with at most one vertex in $M$), we have that $|M'\setminus M| = \O(p|M|)$. Hence, $|M'| = \O(p|M|)$.

We prove property~\eqref{e:aa} by contradiction: assume that for some edge $uv$ of $G-M'$, the intersection $\D_u \cap \D_v$ intersects $\RMprime$. In that case there is a triangle $uvw$ with at most one vertex, $w$, in $M\subseteq M'$, a contradiction.

Property~\eqref{e:bb} follows from the definition of $M'$.
Property~\eqref{e:dd} follows directly from the definition of inner-$M'$-vertices and Property~\eqref{e:aa}.

For Property~\eqref{e:ee}, suppose towards a contradiction that some connected component $\mR^*$ of $\mathcal{R}_T \cap \RMprime$ is bordered by a sequence of Jordan arcs $\partial\mR^*_1,\ldots,\partial\mR^*_t$ with $t\ge 4$, and such that $\partial\mR^*_i$ is either a sub-arc of $\partial\mR_T$, if $i$ is odd, or a sub-arc of $\partial\mR_{M'}$, if $i$ is even.
Note that by Property~\eqref{e:aa}, for any even $i$ the arc $\partial\mR^*_i$ is contained inside $\D_u$ for some $u\in V(T)$, but it does not intersect any other pseudo-disk of $T$.
For any two even values $i,j$, let $\mA \subset \mR^*$ be
a Jordan arc linking $\partial\mR^*_i$ and $\partial\mR^*_j$. 
Among the possible  Jordan arcs, let $\mA$ be one minimizing the number of crossings with borders of $M'$. 
Let us denote $p_i$ and $p_j$ the endpoints of $\mA$, and let us prove that they belong to the same pseudo-disk of $T$, say $\D_u$. Indeed, if $p_i$ and $p_j$ would belong to distinct pseudo-disks of $T$, say $\D_{u_1}$ and $\D_{u_2}$, then when leaving $\D_{u_1}$, the arc $\mA$ would still be in $\RMprime$ and it would also be in a distinct pseudo-disk $\D_u$ of $T$, contradicting \eqref{e:aa}. Hence, all the arcs $\partial\mR^*_i$, with $i$ even, are contained in the same pseudo-disk $\D_u$. 

Note that by Property~\eqref{e:bb}, the arc $\partial\mR^*_2$ is contained in the border $\pD_{v_2}$ of some vertex $v_2\in M'$. As $\D_u\cap \D_{v_2} \subseteq \D_u\cap\mR_T$, we have that $\pD_u$ and $\pD_{v_2}$ intersect in at least four points (see \autoref{fig:prop-e}), a contradiction.

\begin{figure}
    \centering
    \includegraphics[width=0.6\linewidth]{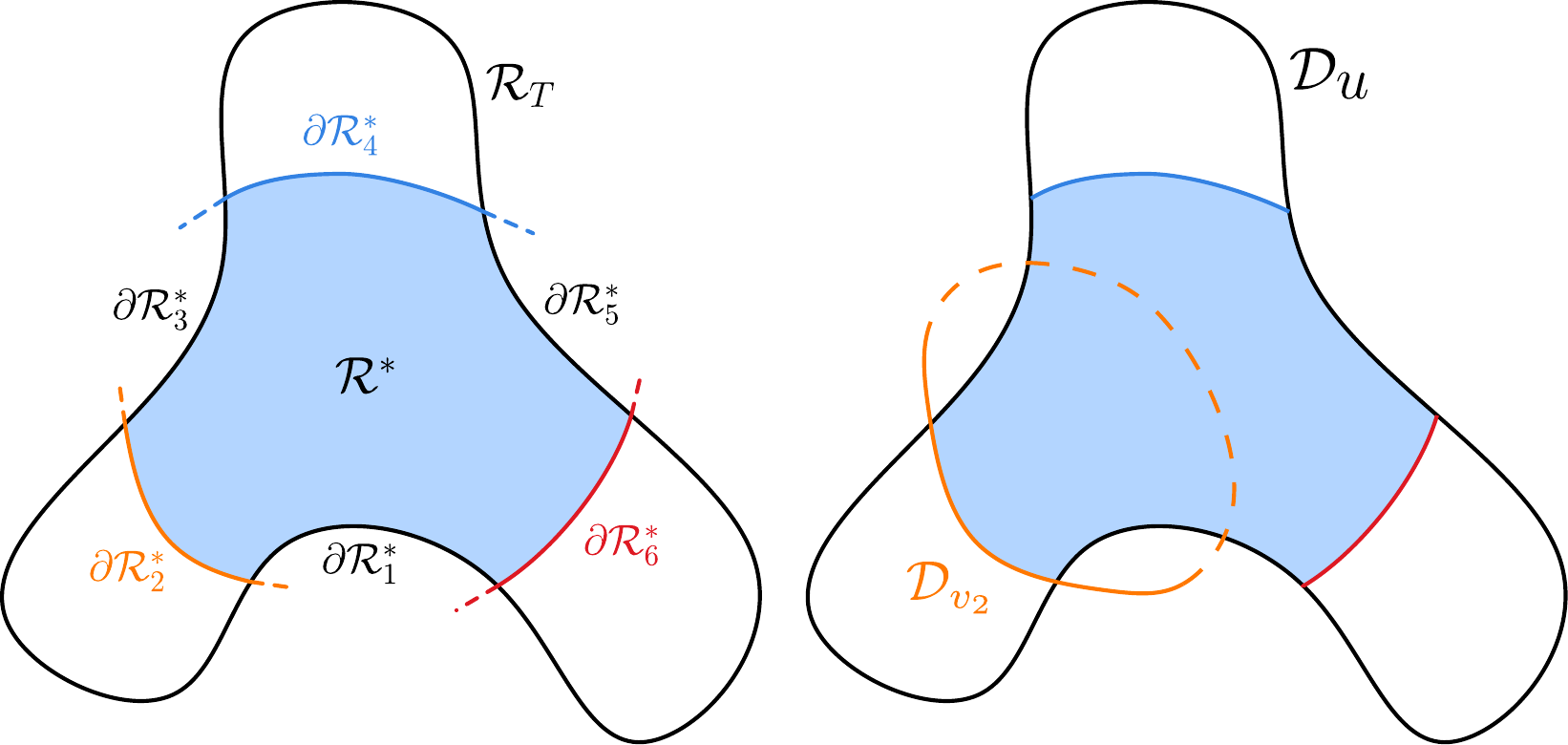}
    \caption{Proof of Property~\eqref{e:ee} (left) The region $\mR^*$. (right) The pseudo-disks $\D_u$ and $\D_{v_2}$.}
    \label{fig:prop-e}
\end{figure}

\cqed \qed \end{proof}

Let $\mS'$ be the restriction of $\mS$ to vertices of $M'$.
Let us denote $\PSprime$ the representation of $G[M']$, and $\dPSprime$ its dual. Since the pseudo-disk system has ply at most $p$, the graph $G[M']$ has $\O(p|M'|)=\O(p^3k)$ edges by \autoref{lem:pseudo-number-edges}.
Each of such edge induces at most two vertices in $\PSprime$. Hence, $\PSprime$ has $\O(p^3k)$ vertices, and thus by planarity, the number of edges, and faces in $\PSprime$ is also  $\O(p^3k)$.

\begin{definition}\label{def:Ht-intro}
Given a vertex $u\in V(G)\setminus M'$, we define the \emph{hosted graph} $H_u$ as a plane graph drawn within $\D_u$, with a (single) vertex in a face $f$ of $\PSprime$ if and only if $\D_u$ and $f$ intersect, and with edges between vertices lying in adjacent faces of $\PSprime$, if their common border intersects $\D_u$ (see \autoref{fig:M'+Hu}, right).
\end{definition}

\begin{claim}\label{cl:hosted-tree}
    For any vertex $u\in V(G)\setminus M'$, $H_u$ is a tree.
\end{claim}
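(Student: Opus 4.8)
The plan is to show that $H_u$ is connected and acyclic, and since both properties concern a plane graph drawn inside $\D_u$, both should follow from the topology of $\D_u$ together with the pseudo-disk axioms applied to $\D_u$ and the pseudo-disks of $M'$. First I would establish \emph{connectivity}. Since $\D_u$ is homeomorphic to a closed disk, it is in particular connected and simply connected. The vertices of $H_u$ correspond to the connected regions $f\cap \D_u$ where $f$ ranges over the faces of $\PSprime$, and two such vertices are joined whenever the corresponding pieces share a boundary arc inside $\D_u$; thus $H_u$ is (the dual graph of) the subdivision of $\D_u$ induced by the arrangement $\PSprime$. Because $\D_u\setminus \bigcup_{v\in M'}\pD_v$ may a priori have several components inside one face $f$, I would first invoke Property~\eqref{e:bb} of Claim~\ref{cl:pseudo-disk-abc}: no point of $\pD_v\cap\pD_w$ for $v,w\in M'$ lies in $\D_u$, so inside $\D_u$ the curves $\pD_v$ ($v\in M'$) are pairwise non-crossing, hence each such $\pD_v$ meets $\D_u$ in a disjoint union of arcs, each cutting $\D_u$ into exactly two parts. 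Connectivity of $H_u$ then follows because removing from the connected set $\D_u$ a single such separating arc keeps a graph whose dual is connected, and one can peel the arcs off one at a time, maintaining connectivity of the dual by a standard induction. (Equivalently: any two points of $\D_u$ are joined by a path, which crosses finitely many arcs, yielding a walk in $H_u$ between their host-vertices.)

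Next I would prove \emph{acyclicity}, which I expect to be the main obstacle. The key point is that a cycle in $H_u$ would enclose, inside $\D_u$, a region, and the boundary of that enclosed region would have to be made of arcs of the curves $\pD_v$, $v\in M'$. Since inside $\D_u$ these curves are pairwise non-crossing (by~\eqref{e:bb}) and each $\pD_v\cap\D_u$ is a union of $\D_u$-separating arcs, no single $\pD_v$ can bound an enclosed region on its own, and combining arcs from two different $\pD_v,\pD_w$ would again require an intersection point of $\pD_v$ and $\pD_w$ inside $\D_u$, contradicting~\eqref{e:bb}. More carefully, I would argue by counting using Euler's formula on the plane graph $H_u\cup\pD_u$, or directly: each chord of $\D_u$ contributed by an arc of some $\pD_v\cap\D_u$ splits one current face of the subdivision into two, so if $\D_u$ meets $a$ such arcs in total, the subdivision of $\D_u$ has exactly $a+1$ faces, i.e.\ $|V(H_u)| = a+1$, while $|E(H_u)| = a$ (one edge per arc). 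Hence $|E(H_u)| = |V(H_u)| - 1$, and together with connectivity this forces $H_u$ to be a tree.

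The delicate part to get right in the writeup is the claim ``each arc of $\pD_v\cap\D_u$ separates $\D_u$ into exactly two pieces and adds exactly one to the face count and one edge to $H_u$'': this needs the perturbation conventions from the definition of pseudo-disk systems (so that tangencies are ruled out and each $\pD_v$ enters and leaves $\D_u$ transversally), and it needs~\eqref{e:bb} to ensure the arcs of different $\pD_v$ do not cross each other inside $\D_u$, so that they can be processed independently. I would present the argument as: order the arcs arbitrarily, add them one at a time, and observe each addition increases both the number of faces of the subdivision of $\D_u$ and the number of edges of $H_u$ by exactly one, starting from the single face $\D_u$ (one vertex, no edge); an easy induction then yields $|E(H_u)| = |V(H_u)|-1$, which with connectivity completes the proof.
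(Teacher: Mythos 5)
Your overall strategy is aligned with the paper's: use the definition of $M'$ (Property~\eqref{e:bb} of \autoref{cl:pseudo-disk-abc}) to rule out crossings of $M'$-borders inside $\D_u$, then argue that the induced subdivision of $\D_u$ has a tree as its dual. But there is a genuine gap at exactly the point you flag yourself: you never invoke the defining axiom of pseudo-disk systems, namely that $\pD_u$ and $\pD_v$ intersect in at most two points. Pairwise non-crossingness of the curves $\pD_v$ ($v\in M'$) inside $\D_u$ does \emph{not} by itself imply that each face $f$ of $\PSprime$ meets $\D_u$ in a single component; indeed your own phrasing allows one $\pD_v$ to meet $\D_u$ in ``a disjoint union of arcs'', and in that situation the two pieces of $\D_u$ lying beyond two parallel chords of the same $\pD_v$ are both outside $\D_v$ and can belong to the same face of $\PSprime$. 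Then $H_u$ has a single vertex representing two regions, and both identifications on which your Euler count rests, $|V(H_u)|=a+1$ and ``one edge per arc'', fail. Worse, without the two-crossing axiom the statement itself is false even when Property~\eqref{e:bb} holds: take two $M'$-disks crossing at a point $x$, and let $\D_u$ be a thin C-shaped disk wrapping around $x$ without containing it; then $\pD_u$ crosses each border four times and $H_u$ contains a $4$-cycle on the faces ``outside both'', ``in $\D_v$ only'', ``in both'', ``in $\D_w$ only''. So the axiom is indispensable and must appear in the proof.

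The repair is short but has to be said. Since $\mS$ is a pseudo-disk system, each $v\in M'$ contributes at most one chord of $\D_u$ (or a closed curve $\pD_v\subseteq\D_u$, a case your picture of chords with both endpoints on $\pD_u$ also omits, though it is harmless for the count), and by Property~\eqref{e:bb} no vertex of $\PSprime$ lies in $\D_u$, so each chord lies inside a single edge of $\PSprime$. Crossing the chord of $v$ flips membership in $\D_v$ and in no other disk of $M'$, so any two distinct regions of the subdivision differ in the set of $M'$-disks containing them and hence lie in distinct faces of $\PSprime$; this yields the region--vertex and chord--edge bijections that make your counting argument (and your connectivity argument) correct. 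The paper packages the same content more compactly: no crossing inside $\D_u$ makes $\{\D_u\cap\D_v : v\in N(u)\}$ a laminar family --- connectedness of each $\D_u\cap\D_v$ again coming from the two-crossing axiom --- and the nesting structure of that laminar family is precisely the tree $H_u$.
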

\begin{proof}
 Let us first prove that $\D_u$ does not contain a crossing, meaning that there are no vertices $v_1,v_2\in V(G)$ such that $\D_u$ contains a point in $\pD_{v_1} \cap \pD_{v_2}$. Suppose by contradiction that this is the case. Clearly $uv_1v_2$ is a triangle so, by definition of $M'$, $\{v_1,v_2\} \subseteq M'$. But this then contradicts \autoref*{cl:pseudo-disk-abc}.\eqref{e:bb}.

 Now, as $\D_u$ does not contain a crossing, the family $\{\D_u \cap \D_v,\ v \in N(u)\}$ is a laminar set family. This implies that $H_u$ is a tree (where in particular the leaves of $H_u$ correspond to regions $\D_u \cap \D_v$, where $\D_u \cap \D_v$ is inclusion-wise minimal among the laminar family). \cqed \qed
\end{proof}

We can now proceed with the kernelization, that divides into two parts. The first part  (see \autoref{sssec:inM'}) deals with the inner-$M'$-vertices, while the second part deals with border- and outer-$M'$-vertices.
The second part is split into three sections.
In \autoref{sssec:notinM'intro}, we will study the properties of border-$M'$-vertices.
In \autoref{sssec:notinM'rules}, we will provide kernelizations rules, and we will prove that the kernel is equivalent to $(G,k)$ with respect to the \FVS problem. Then in \autoref{sssec:notinM'size}, we will prove the properties of the kernel $(G',k')$ stated in \autoref{lemma:red-fvs-pseudo}, in particular its small size.

\subsection{Inner-\texorpdfstring{$M'$}{M'}-vertices}\label{sssec:inM'}

The purpose of this section is to reduce the number of inner-$M'$-vertices. We distinguish two cases according to the maximum degree of $H_u$.
We begin with the vertices $u\in I_{M'}$ such that $\Delta(H_u)\ge 3$. Actually, for these vertices, we do not need to reduce their number. Indeed, we can bound their number by using the following consequence of Euler's formula.

\begin{lemma}[{see \cite[Lemma~9.24]{Cygan2015Book}}]
\label{lem:planar-linear-size}
    For any bipartite planar graph $G=(A,B,E)$, if $d(v)\ge 3$ for every $v\in A$, then we have that $|A| \leq 2|B|$.
\end{lemma}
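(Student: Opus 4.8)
The plan is a short double-counting argument. First I would dispose of the degenerate situations: if $A=\emptyset$ the claim is immediate, and otherwise, picking $v\in A$ and using that all neighbours of $v$ lie in $B$ (as $G$ is bipartite), the hypothesis $d(v)\ge 3$ forces $|B|\ge 3$, hence $n:=|A|+|B|\ge 3$.

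The first ingredient is the degree hypothesis: counting incidences between $A$ and $E$ and using $d(v)\ge 3$ for every $v\in A$,
\[
3|A|\ \le\ \sum_{v\in A} d(v)\ =\ |E|.
\]
The second ingredient is the classical edge bound for bipartite planar graphs: any simple bipartite planar graph on $n\ge 3$ vertices has at most $2n-4$ edges. I would recall its one-line proof via Euler's formula: if $G$ is connected and contains a cycle, then in a plane embedding each face is bounded by a closed walk, which has even length (since $G$ is bipartite) and length at least $4$ (since $G$ is simple and triangle-free, i.e.\ of girth at least $4$); hence $2|E|=\sum_{F}\ell(F)\ge 4f$, and substituting $f=2-n+|E|$ yields $|E|\le 2n-4$. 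If $G$ is a forest then $|E|\le n-1\le 2n-4$, and the disconnected case only decreases the edge count, so in all cases $|E|\le 2(|A|+|B|)-4$.

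Combining the two bounds,
\[
3|A|\ \le\ |E|\ \le\ 2|A|+2|B|-4\ <\ 2|A|+2|B|,
\]
so $|A|<2|B|$, and since both sides are integers we conclude $|A|\le 2|B|$.

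The argument is entirely routine; the only point requiring a little care in a fully detailed write-up is the justification that every face of a plane simple bipartite graph is bounded by a walk of length at least $4$ (the place where both simplicity and bipartiteness are used), together with the bookkeeping for disconnected graphs and forests. There is no genuine obstacle.
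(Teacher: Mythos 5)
Your proof is correct and follows the same standard argument as the source the paper cites for this lemma (Cygan et al., Lemma~9.24): the degree bound $3|A|\le |E|$ combined with the Euler-formula edge bound $|E|\le 2(|A|+|B|)-4$ for simple bipartite planar graphs. The paper gives no separate proof, so there is nothing further to compare; your handling of the degenerate, forest and disconnected cases is adequate.
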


\begin{claim}\label{cl:d3Ht-intro}
    The number of vertices $u\in I_{M'}$ such that $\Delta(H_u)\ge 3$ is $\O(p^3k)$.
\end{claim}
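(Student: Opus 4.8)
The plan is to bound $|A|$, where $A:=\{u\in I_{M'}:\Delta(H_u)\ge 3\}$, by building a bipartite \emph{planar} graph between $A$ and the set of faces of $\PSprime$ in which every vertex of $A$ has degree at least $3$, and then invoking \autoref{lem:planar-linear-size} together with the fact (established above) that $\PSprime$ has $\O(p^3k)$ faces. First, for each $u\in A$ I fix a vertex $x_u$ of $H_u$ with $d_{H_u}(x_u)\ge 3$. By \autoref{def:Ht-intro} and \autoref{cl:hosted-tree}, $x_u$ lies in a face $f_u$ of $\PSprime$ with $f_u\cap\D_u\ne\emptyset$, and its (at least three) neighbours in $H_u$ lie in three pairwise distinct faces $f_u^1,f_u^2,f_u^3$, each of which shares with $f_u$ an edge of $\PSprime$ whose closure meets $\D_u$. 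Second, I record the key disjointness property: the pseudo-disks $\{\D_v:v\in I_{M'}\}$ are pairwise disjoint. Indeed, if $u,v\in I_{M'}$ are distinct with $\D_u\cap\D_v\ne\emptyset$, then $uv\in E(G-M')$ since $u,v\notin M'$; but $\D_u\subseteq\RMprime$ as $u\in I_{M'}$, so $\emptyset\ne\D_u\cap\D_v\subseteq\RMprime$, contradicting \autoref{cl:pseudo-disk-abc}.\eqref{e:aa} (alternatively this is immediate from \autoref{cl:pseudo-disk-abc}.\eqref{e:dd}).

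Now I define the bipartite graph $\Gamma$ with parts $A$ and $F$, the latter being the set of faces of $\PSprime$, putting an edge between $u$ and $f_u^i$ for each $i\in\{1,2,3\}$. Since $f_u^1,f_u^2,f_u^3$ are distinct, $\Gamma$ is simple and every vertex of $A$ has degree exactly $3$. The core of the argument is to show that $\Gamma$ is planar. To this end I draw each $u\in A$ at the point $x_u$ (which lies in $\D_u\cap f_u$), and each face $f\in F$ at a point $q_f$ chosen in the interior of $f$ but outside $\bigcup_{u\in A}\D_u$; such a point exists because the $\D_u$ are finitely many pairwise disjoint compact sets and so cannot cover the connected open set $f$ — were one $\D_u$ to contain $f$, its closure would contain $\overline f$, hence a vertex of $\PSprime$, contradicting \autoref{cl:pseudo-disk-abc}.\eqref{e:bb} (after a small perturbation ruling out degenerate faces bounded by a single uncrossed boundary). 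Each edge $u f_u^i$ of $\Gamma$ is then routed by following, inside $\D_u$, the corresponding edge of $H_u$ from $x_u$ across the relevant edge of $\PSprime$ into $f_u^i$, and then continuing inside $f_u^i$ to $q_{f_u^i}$. Using that $H_u$ is drawn inside $\D_u$, that the $\D_u$ are pairwise disjoint, and a standard planarity-within-a-disk argument, one checks this drawing has no crossings; hence $\Gamma$ is planar and \autoref{lem:planar-linear-size} yields $|A|\le 2|F|=\O(p^3k)$.

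The main obstacle is precisely the planarity of $\Gamma$. The disjointness of the inner-$M'$-pseudo-disks immediately kills every potential crossing between the ``$\D_u$-portions'' of distinct $\Gamma$-edges, so the only delicate point is the portions lying in a common face $f$ that is an endpoint of several $\Gamma$-edges: there one must argue that the curves from the various entry points on $\partial(\D_u\cap f)$ to the common target $q_f$ can be chosen pairwise non-crossing — i.e. that a disk with finitely many disjoint closed ``holes'' removed still admits non-crossing routings from boundary points of the holes to an interior basepoint (a ``comb'' routing). This is elementary but needs to be stated carefully; it is the one step where a fully rigorous write-up requires some work. A secondary, purely technical matter is to invoke the perturbation of the representation (no point on three boundaries, no face entirely swallowed by a pseudo-disk) so that the geometric claims used above hold verbatim.
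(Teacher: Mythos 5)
Your counting framework (bipartite planar graph, one side of minimum degree $3$, \autoref{lem:planar-linear-size}, and the bound $\O(p^3k)$ on the number of faces of $\PSprime$) is fine, but the central step — planarity of $\Gamma$ via your explicit drawing — has a genuine gap, and it is not the one you flagged. Inside a face $f$ the obstacles to be avoided are not ``finitely many disjoint closed holes'': they include the three drawn spokes of any $u\in A$ whose chosen hub $x_u$ lies in $f$. These three spokes emanate from $x_u$ and each reaches $\partial f$ (they must cross into the three faces $f_u^1,f_u^2,f_u^3$), so together they form a triod attached to the boundary of $f$ at three points, which disconnects $f$ into (up to) three components. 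Now $f$ can simultaneously be a \emph{target} face $f_w^i$ for other disks $\D_w$, $w\in A$, whose entry points land in different components of $f$ minus this triod: take $\D_u$ a triod-shaped inner pseudo-disk with hub in $f$ poking across three boundary arcs $e_1,e_3,e_5$ of $f$, and two further inner pseudo-disks whose hubs lie in the faces across $e_2$ and $e_4$ and which poke into $f$ through $e_2$ and $e_4$ respectively; all constraints of the setting (pairwise disjointness of inner disks, \autoref{cl:pseudo-disk-abc}, ply at most $p$) are easily met. Whichever component you place $q_f$ in, at least one curve towards $q_f$ must cross a spoke of $u$, so the drawing as described is not crossing-free, and the ``comb routing'' lemma you invoke (whose hypothesis is that the complement of the obstacles in $f$ is connected) does not apply. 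The abstract graph may well still be planar, but your proof does not establish it; repairing it would require either rerouting edges outside the face (destroying your disjointness bookkeeping) or a different planarity argument altogether. A secondary, more minor point: your choice of $q_f$ outside $\bigcup_{u\in A}\D_u$ can fail when a face of $\PSprime$ (e.g.\ the inside of an uncrossed boundary $\pD_v$, $v\in M'$, nested in some $\D_u$) is entirely contained in an inner pseudo-disk; this cannot be removed by perturbation, since containment without boundary crossings is a legitimate feature of pseudo-disk systems, although this case is patchable because such a face can only be a target of the single disk containing it.

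For comparison, the paper avoids ad hoc routing entirely: it builds a bipartite intersection system whose two parts are the (slightly shortened and thickened) arcs of $\PSprime$ and the pieces $\D'_u=\D_u\cap f_u$ for $u\in A$, observes that this is a pseudo-disk system whose intersection graph is triangle-free (being bipartite, with each part consisting of pairwise disjoint regions), and invokes the theorem that triangle-free pseudo-disk graphs are planar; \autoref{lem:planar-linear-size} then gives $|A|=\O(|E(\PSprime)|)=\O(p^3k)$. Localizing each adjacency at the crossed arc, rather than funnelling all adjacencies of a face to a single point $q_f$, is precisely what sidesteps the triod obstruction described above; if you want to salvage your face-based graph, you would need to import a similar tool rather than rely on the direct drawing.
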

\begin{proof}
    To see this, let us construct a bipartite pseudo-disk graph as follows. In one part, we consider the arcs of $\PSprime$. These arcs are slightly shortened (so that they form an independent set) and slightly thickened (so that they intersect the adjacent regions we define below), see \autoref{fig:pseudo-bip}. In the other part, we define a pseudo-disk $\D'_u$ for each vertex $u\in I_{M'}$ such that $\Delta(H_u)\ge 3$ as follows. Consider a vertex $x$ of $H_u$ such that $d_{H_u}(x)\ge 3$, and let $\D'_u$ be the intersection of $\D_u$ and the face of $\PSprime$ containing $x$.
    Since triangle-free pseudo-disk graphs are planar \cite{kratochvil1996intersection}, by \autoref{lem:planar-linear-size}, the second part has size $\O(|E(\PSprime)|) = \O(p|M'|)= \O(p^3k)$. \cqed \qed
\end{proof}

\begin{figure}
    \centering
    \includegraphics[scale=0.3]{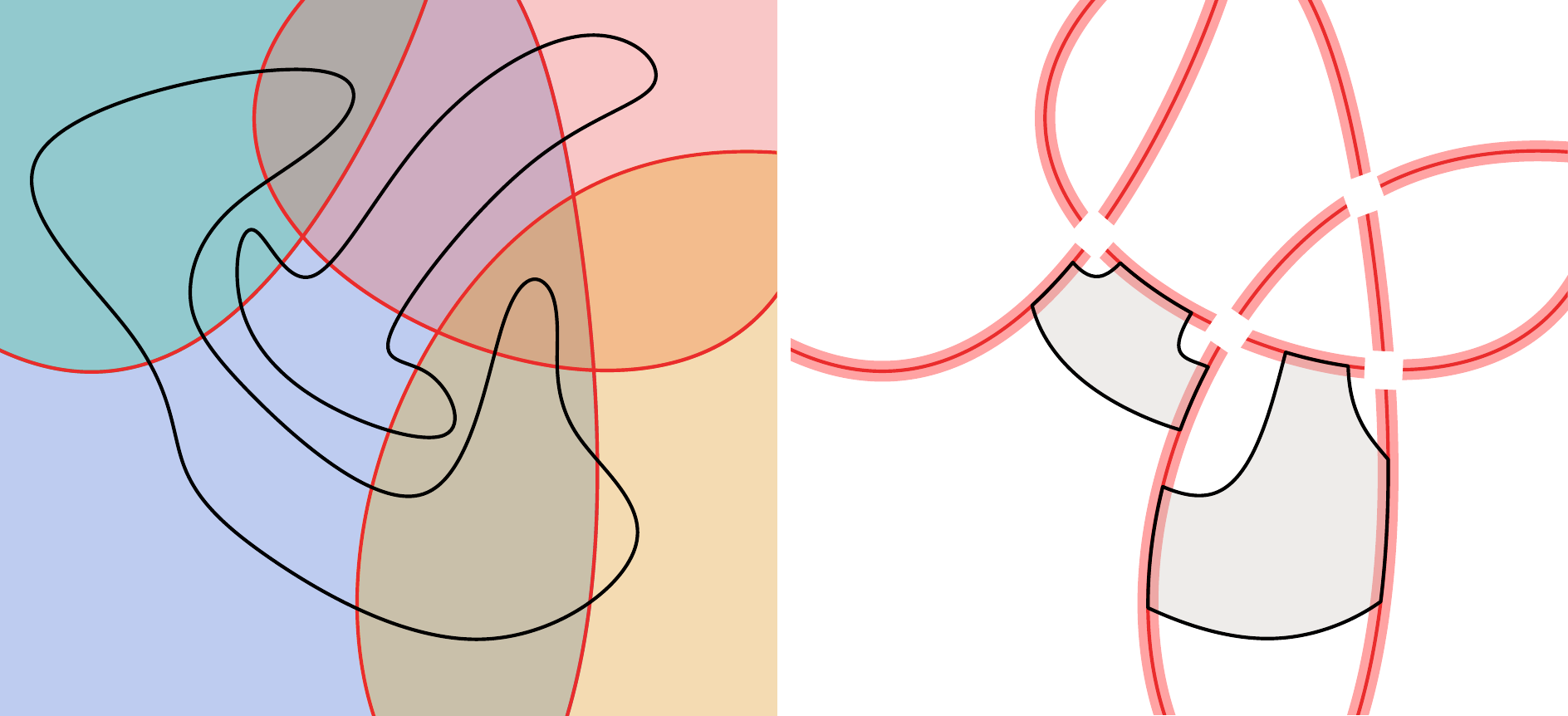}
    \caption{Two inner-$M'$ pseudo-disks with $\Delta(H_u)\ge 3$, and the bipartite pseudo-disk graph constructed to prove \autoref{cl:d3Ht-intro}.}
    \label{fig:pseudo-bip}
\end{figure}

\begin{figure}
    \centering
    \includegraphics[width=0.5\textwidth]{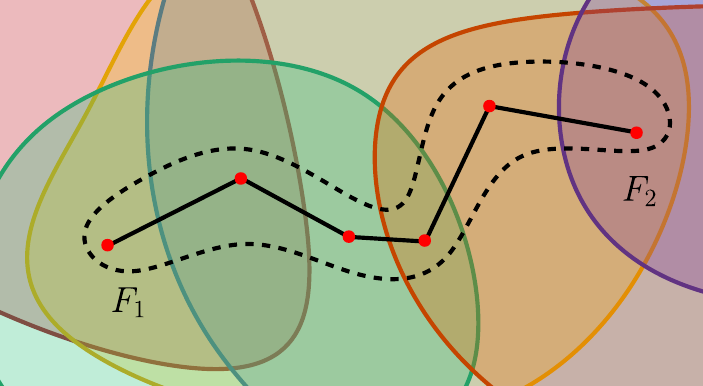}
    \hspace{.5cm}
    \includegraphics[width=0.45\textwidth]{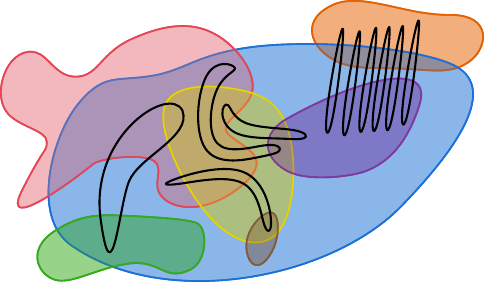}
    
    \caption{
    Left: a hosted graph $H_u$ that is a path. Right: some vertices $u \in I_{M'}$ such that $H_u$ is a path. Some of these vertices can have many twins.}
    \label{fig:pathsinside}
\end{figure}

We can now focus on vertices $u\in I_{M'}$ such that $\Delta(H_u)\le 2$ (see \autoref{fig:pathsinside}). In that case, $H_u$ is a path (with possibly only one vertex) of bounded length, as we show now.
\begin{claim}\label{claim:inner-path}
    For any vertex $u\in I_{M'}$ such that $H_u$ is a path, this path has length at most $d(u)$, and $d(u)\le 2p$. Furthermore, $N(u)$ can be split into two cliques of size at most $p$ each.
\end{claim}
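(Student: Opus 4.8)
The plan is to exploit that $u \in I_{M'}$ means $\D_u \subseteq \RMprime$, together with the structure of $H_u$ being a path. First I would establish the length bound. Each vertex of $H_u$ sits in a distinct face of $\PSprime$ that $\D_u$ intersects, and each edge of $H_u$ corresponds to a crossing of a sub-arc of $\pD_v \cap \pD_w$ (for $v,w \in M'$) by $\D_u$. However, a cleaner accounting comes from the neighbours of $u$: since $u$ is inner, by \autoref{cl:pseudo-disk-abc}\eqref{e:dd} all neighbours of $u$ lie in $M'$, and the faces of $\PSprime$ that $\D_u$ meets are exactly the connected regions of $\D_u \setminus \bigcup_{v \in M'}\pD_v$. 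Because $\D_u$ contains no crossing point (shown in the proof of \autoref{cl:hosted-tree}), crossing from one face of $H_u$ to an adjacent one along the path means crossing exactly one border $\pD_v$ with $v \in N(u) \cap M'$, and since the family $\{\D_u \cap \D_v : v \in N(u)\}$ is laminar and each $\pD_v$ meets $\pD_u$ in at most two points, each such $v$ can be responsible for entering and leaving at most... — here the right statement is that walking along the path $H_u$ from one end to the other, each $v \in N(u)$ contributes at most one "entry" transition into the region $\D_u \cap \D_v$, so the number of edges of the path is at most $|N(u)| = d(u)$. I would phrase this via the laminar family: order the faces along the path $f_1, \dots, f_{t}$; the consecutive border $\pD_{v_i}$ crossed between $f_i$ and $f_{i+1}$, and by laminarity a fixed $v$ cannot be the crossed border for two non-consecutive transitions (it would force $\pD_u \cap \pD_v$ to have $\ge 4$ points, or a face to be revisited), hence $t - 1 \le d(u)$, i.e. the path has length at most $d(u)$.

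Next I would bound $d(u) \le 2p$. This is the ply bound: the ply of the representation is at most $p$ (as guaranteed by \autoref{cor:bothbranchings} together with \autoref{lem:pseudo-ply-clique}), and $\omega(G) \le p$. Take any neighbour $v$ of $u$; then $\D_u \cap \D_v$ contains an internal point $z$, and the set of neighbours $w$ of $u$ with $z \in \D_w$ all pairwise intersect (they all contain $z$) and contain $u$'s disk at $z$ — so $\{u\} \cup \{w : z \in \D_w\}$ is a clique, of size at most $p$. To cover all of $N(u)$ with two such cliques I would use the path structure of $H_u$: the neighbours "attached near one end" of the path versus "near the other end". More precisely, consider the two endpoint faces of the path $H_u$; I claim every $v \in N(u)$ has $\D_u \cap \D_v$ meeting the closure of one of these two endpoint faces (or a face near them), because the laminar family $\{\D_u \cap \D_v\}$ has its minimal members at the leaves of the tree $H_u$, which here is a path with two leaves. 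Pick an internal point $z_1$ in the leaf face at one end and $z_2$ in the leaf face at the other end; then I would argue $N(u) = \{v : z_1 \in \D_v\} \cup \{v : z_2 \in \D_v\}$, each of which is a clique of size $\le p$ (since adding $u$ gives a clique, $\omega \le p$). Establishing that every neighbour's region reaches one of the two ends is the crux: it follows because the regions $\D_u \cap \D_v$ are connected subsets of $\D_u$ whose "trace" on the path $H_u$ is a subpath, and an inclusion-minimal such region is a single leaf; a non-minimal region contains a minimal one, hence reaches a leaf.

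The main obstacle I expect is the second clique-cover claim — specifically, proving rigorously that every region $\D_u \cap \D_v$ (for $v \in N(u)$) contains one of the two fixed points $z_1, z_2$ chosen in the two leaf faces of $H_u$. The length bound is essentially bookkeeping with the laminar family, but the clique decomposition requires carefully relating the topological nesting of the regions $\D_u \cap \D_v$ inside $\D_u$ to the combinatorial path $H_u$, and then invoking $\omega(G) \le p$. I would handle it by formally defining, for each $v \in N(u)$, the set $S_v$ of faces of $\PSprime$ inside $\D_u$ that meet $\D_v$, showing $S_v$ induces a subpath of $H_u$ (by laminarity and connectedness of $\D_u \cap \D_v$), and concluding that every such subpath contains a leaf of the path $H_u$; then the two leaves yield the two points $z_1, z_2$ and hence the two cliques.
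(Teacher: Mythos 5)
Your overall strategy is the same as the paper's: bound the length of $H_u$ by observing that each neighbour's boundary can cross $\D_u$ at most once, then cover $N(u)$ by the pseudo-disks reaching one endpoint face of the path and those reaching the other, and invoke the ply bound to get two cliques of size at most $p$ and $d(u)\le 2p$. The length bound is fine as you argue it.

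The problem is the justification you offer for the crux, namely that every region $\D_u\cap\D_v$ reaches an endpoint face. Your chain ``inclusion-minimal members of the laminar family sit at single leaves; a non-minimal member contains a minimal one, hence reaches a leaf'' does not hold as stated. First, an inclusion-minimal region need not be a single leaf face: it can be, for instance, the union of the first two face-pieces of the path (its bounding arc is the chord between the second and third faces, while the chord between the first and second faces belongs to a disk whose region lies on the \emph{other} side, so nothing is nested inside it). Second, the family $\{\D_u\cap\D_v : v\in N(u)\}$ need not even be laminar in the way your argument uses it: two neighbours $v,w\in M'$ can have $\D_u\cap\D_v$ equal to a prefix of the path and $\D_u\cap\D_w$ equal to a suffix that overlaps it in the middle faces, with neither containing the other; this is compatible with property (b) of \autoref{cl:pseudo-disk-abc}, since $\pD_v$ and $\pD_w$ may cross outside $\D_u$. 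So the minimality argument is circular where it is not false. The repair is short and essentially already contained in your length-bound bookkeeping: since no borders of $M'$-disks cross inside $\D_u$ and $\pD_v$ meets $\pD_u$ in at most two points, $\pD_v\cap\D_u$ is a single arc (or empty), hence $\D_u\cap\D_v$ is exactly one of the two sides of that arc, i.e.\ the union of the face-pieces of a prefix or of a suffix of the path; in particular it contains the entire piece of one endpoint face, and therefore any fixed points $z_1,z_2$ chosen in the two endpoint pieces. This is precisely how the paper argues (via $\Delta(H_u)\le 2$ and the endpoint faces $F_1,F_2$), after which the two cliques and $d(u)\le 2p$ follow from the ply bound exactly as you conclude.
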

\begin{proof}
    For any neighbor $v$ of $u$, $\D_u$ intersects $\pD_v$ at most once as otherwise $\pD_v$ and $\pD_u$ would intersect on more than two points which is not allowed in systems of pseudo-disks.
    Hence the number of edges in $H_u$ (i.e. the length of $H_u$) is at most $d(u)$.
    
    Furthermore, since $\Delta(H_u) \le 2$, we have that any neighbor $v$ of $u$ is such that $\D_v$ contains one of the faces $F_1$ and $F_2$ of $\PSprime$ (see the left of \autoref{fig:pathsinside}) corresponding to the endpoints of the path $H_u$ in $\dPSprime$. Since the representation has ply at most $p$, there are at most $2p$ pseudo-disks containing one (or two) of these faces. Finally, to split the neighborhood of $u$ into two cliques, it suffices to consider the pseudo-disks containing $F_1$, which clearly induce a clique, and the remaining ones, since all of them contain $F_2$.  \cqed \qed
\end{proof}

To bound the number of such vertices $u$ with $\Delta(H_u)\le 2$, we need to use the following kernelization rule for the \FVS problem.

\begin{enumerate}[(R1)]
\setcounter{enumi}{-1}
    \item \label{e:r0pd}  Consider two disjoint and possibly empty subsets $A,B\subseteq M'$ inducing complete graphs, and six isolated vertices of $G - M'$, $v_1,\ldots,v_5$ and $u$. If $N_G(u) \subseteq N_G(v_i) \subseteq A\cup B$, for $1\le i\le 5$, delete vertex $u$ and keep the same parameter $k$.
\end{enumerate}
\begin{proof}
    (Safeness of the rule) Let $G' = G-u$, and note that it cannot have a minimum feedback vertex set larger than $G$. Hence, it is sufficient to show that $G'$ has a minimum feedback vertex set $X$, that is also a feedback vertex set for $G$. 

    Let $A_u = A\cap N_G(u)$ and $B_u = B\cap N_G(u)$. Consider any minimum feedback vertex set $X$ of $G'$, and let us transform it (if needed) into a minimum feedback vertex set including all the vertices of $A_u\cup B_u$, except possibly one. Observe first that $G[A]-X$ and $G[B]-X$ being acyclic, we have that $|A \setminus X| \le 2$ and $|B \setminus X| \le 2$.
    
    If $|A_u\setminus X| = 2$, then $\{v_1,\ldots,v_5\} \subseteq X$, as otherwise the two vertices in $A_u\setminus X$ and any $v_i$ not in $X$ would form a triangle in $G' - X$. In such case, defining $X'$ by replacing in $X$ the vertices $v_1,\ldots,v_5$ with the vertices in $A\setminus X$ and in $B\setminus X$ would result in the desired feedback vertex set. Indeed, the vertices $v_1,\ldots,v_5$ are isolated in $G'-X'$, and $G'-\{v_1,\ldots,v_5\}-X'$ is a forest, as $X \subseteq X' \cup \{v_1,\ldots,v_5\}$.
    
    Hence, we consider that $|A_u\setminus X|\le 1$ and $|B_u\setminus X|\le 1$. If $|A_u\setminus X| = 1$ and $|B_u\setminus X|= 1$, then $|\{v_1,\ldots,v_5\} \cap X|\ge 4$, as otherwise the two vertices in $\{v_1,\ldots,v_5\} \setminus X$, the vertex in $A_u\setminus X$ and the vertex in $B_u\setminus X$ would form a 4-cycle in $G' - X$. In such case, defining $X'$ by replacing in $X$ four vertices in $\{v_1,\ldots,v_5\} \cap X$ with the vertices in $A\setminus X$ and in $B\setminus X$ would result in the desired feedback vertex set.
    Indeed, the vertices $v_1,\ldots,v_5$ not in $X'$ are isolated in $G'-X'$, and $G'-\{v_1,\ldots,v_5\}-X'$ is a forest, as $X \subseteq X'\cup \{v_1,\ldots,v_5\}$.

    We thus have a minimum feedback vertex set $X$ of $G'$ including all the vertices of $A_u\cup B_u$, except possibly one. Since $u$ has degree zero or one in $G - X$, we have that $G - X$ is the forest $G' - X$ with a new vertex $u$, that is either a leaf or an isolated vertex, and it is thus a forest. \cqed \qed
\end{proof}

Identifying the configuration required by the previous rule and updating the representation of $G$ when we delete vertices can be done in polynomial time (see \autoref{lem:tech-pseudo-disk-del-contr}), implying that this kernelization can be performed in time polynomial in $|V(G)|$.

Note that from now on, among the vertices $u\in I_{M'}$ such that $H_u$ is a path (as their neighborhoods induce two complete graphs by \autoref{claim:inner-path}), there are at most five vertices with the same neighborhood. We can thus focus on bounding the number of such vertices $u$ with distinct neighborhoods. Since these vertices have bounded degree (by \autoref{claim:inner-path} again), the following theorem provides us a bound on their number.

\begin{theorem}[\cite{neighpseudo}]\label{hyperpseudo}
    Suppose $\EuScript{F}$ is a family of pseudo-disks in the plane and $\EuScript P$ is a finite subset of $\EuScript F$. Consider the hypergraph $H(\EuScript P,\EuScript F)$ whose vertices 
are the pseudo-disks in $\EuScript P$ and the edges are all subsets of~$\EuScript P$ of the
form $\{D \in \EuScript P \mid D \cap S \neq \emptyset\}$, where $S$ is a pseudo-disk in $\EuScript F$.
  Then the number of edges of cardinality at most $k\geq 1$ in $H(\EuScript P,\EuScript F)$ is $\O(|\EuScript P|k^3)$.
\end{theorem}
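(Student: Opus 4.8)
The plan is to derive this from the Clarkson--Shor random-sampling technique, reducing the count of ``shallow'' hyperedges (those of size $\le k$) to a count of ``empty'' configurations on a small random subfamily; the geometry of pseudo-disks enters only through their linear union complexity. The starting point is a reformulation: for $S \in \EuScript{F}$, let $\Phi(S)$ be the connected component of $\mathbb{R}^2 \setminus \bigcup\{D \in \EuScript{P} : D \cap S = \emptyset\}$ that contains $S$. Since $\Phi(S)$ is disjoint from every non-member of $e(S)$ while meeting every member (it contains $S$), one gets $e(S) = \{D \in \EuScript{P} : D \cap \Phi(S) \ne \emptyset\}$. Hence every size-$\le k$ hyperedge is realized by a connected region avoiding all but $\le k$ of the pseudo-disks, and, picking one point of $S$ in each member and joining these points by a Steiner tree inside $S$, by a topological tree $\tau$ with $O(k)$ edges that meets exactly the members of $e(S)$. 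The question becomes: how many distinct sets $\{D \in \EuScript{P} : D \cap \tau \ne \emptyset\}$ of size $\le k$ arise from trees $\tau$ with $O(k)$ edges?

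Next I would cast this in the Clarkson--Shor framework: a configuration is a hyperedge $e$ together with a constant-size ``defining'' subfamily $\mathcal{D}(e) \subseteq e$ such that a minimal witnessing tree for $e$ is rigid --- up to isotopy within the region it is constrained to --- once $\mathcal{D}(e)$ is fixed, with conflict set $e \setminus \mathcal{D}(e)$. The structural heart of the argument is that $|\mathcal{D}(e)|$ is bounded by an absolute constant: a member meets $\tau$ either by a boundary crossing (two points, since pseudo-disk boundaries cross at most twice) or by a nesting (it contains $\tau$, or $\tau$ lies in a component of its complement), and in both regimes one checks that boundedly many ``extreme'' members --- controlling the two ends of $\tau$ and the outermost and innermost nestings --- already force which pseudo-disks are met; the laminar structure of the nested members (as in \autoref{cl:hosted-tree}) is what makes the nesting case manageable. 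Granting $|\mathcal{D}(e)| \le d$ for a small constant $d$, the Clarkson--Shor inequality bounds the number of size-$\le k$ hyperedges by $O\!\left(k^{d}\cdot T_0(\lceil |\EuScript{P}|/k\rceil)\right)$, where $T_0(m)$ is the maximum, over $m$-element subfamilies, of the number of empty configurations.

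It then remains to show $T_0(m) = O(m)$. An empty configuration is essentially a bounded-size pseudo-disk subfamily together with a free region touching all of its members and no others, i.e.\ a face of bounded depth in the arrangement of an $m$-element pseudo-disk family; the number of such faces is $O(m)$ by the linear union complexity of pseudo-disk families (Kedem--Livne--Pach--Sharir), together with the planarity of triangle-free pseudo-disk graphs used as in \autoref{cl:d3Ht-intro}. Substituting $T_0(m) = O(m)$ gives $O(k^{d}\cdot |\EuScript{P}|/k) = O(|\EuScript{P}|\,k^{d-1})$, and a careful count of how many defining pseudo-disks are actually needed ($d \le 4$) yields exactly $O(|\EuScript{P}|\,k^3)$.

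The step I expect to be the main obstacle is the existence of a constant-size defining set --- that a shallow hyperedge is ``locally rigid''. One must argue that even though the witness $S$ (or $\tau$) can be combinatorially complex, the \emph{set} of pseudo-disks it meets is pinned down by a bounded number of critical incidences, and one must keep the three ways a member can meet $S$ (boundary-crossing, contained in $S$, containing $S$) carefully separated, since only the first produces crossings and the other two require the laminar-family analysis. Verifying that the resulting configuration system genuinely satisfies the Clarkson--Shor axioms --- in particular that the conflict set of a configuration coincides with the corresponding hyperedge minus its defining set, so that survival in a random subfamily is equivalent to being an empty configuration there --- is the most delicate part of the bookkeeping.
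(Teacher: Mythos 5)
First, a remark on context: the paper does not prove \autoref{hyperpseudo} at all --- it is imported as a black box from \cite{neighpseudo} --- so there is no internal proof to compare against; what you have attempted is an independent proof of that external result. Within your sketch, the reduction of a hyperedge to a witness region $\Phi(S)$, or to a tree $\tau$ with $\O(k)$ edges meeting exactly the members, is sound, and the Clarkson--Shor arithmetic is internally consistent (defining sets of size $d=4$ and linearly many zero-conflict configurations would indeed give $\O(k^{4}\cdot |\EuScript P|/k)=\O(|\EuScript P|k^{3})$).

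The genuine gap is the load-bearing claim that every hyperedge of size at most $k$ has a \emph{constant-size defining subfamily} (your ``local rigidity'': at most four members pin down the witness up to isotopy, the rest being conflicts). This is asserted, not proved, and it is not a routine verification --- it is essentially the theorem itself. A pseudo-disk $S\in\EuScript F$ can weave arbitrarily among the members of $\EuScript P$, and the set of members it meets is governed by the topology of the entire arrangement of the non-members (the region $\Phi(S)$ is in general a union of many faces), not by a bounded number of ``extreme'' members; the laminar-family argument of \autoref{cl:hosted-tree} only applies when no two boundaries cross inside the host disk, i.e.\ in a triangle-free situation, and gives no rigidity here. Without a genuine bounded-description-complexity statement, the configuration system does not satisfy the Clarkson--Shor axioms (presence in a random subfamily is no longer equivalent to ``defining set sampled, conflict set avoided''), and the whole sampling bound collapses; this is precisely why the proof in \cite{neighpseudo} does not proceed via constant-size defining sets but via a substantially more involved charging argument. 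A secondary, related gap: your base case $T_0(m)=\O(m)$ does not follow from linear union complexity alone, since ``empty configurations'' are constant-size hyperedges of $H(R,\EuScript F)$, not bounded-depth faces of the arrangement of $R$; already bounding the number of size-two hyperedges by $\O(m)$ requires a planarity argument for a Delaunay-type graph of pseudo-disks, a nontrivial statement of the same nature as the theorem being proved.
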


This allows us to bound the number of inner-$M'$-vertices by $\O(|M'|p^3)$, but the following claim allows a small improvement.
Let us first note that every inner-$M'$-vertex was already an inner-$M$-vertex.
\begin{claim}\label{claim:inner-M-already}
    For every vertex $u\in I_{M'}$, we have that $\D_u$ is contained in $\RM = \left( \cup_{v \in M} \D_v\right)$.
\end{claim}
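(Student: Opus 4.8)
The plan is to locate a single pseudo-disk of $M$ that already contains $\D_u$, relying for this on the two structural facts we have about $u$: that $\D_u\subseteq\RMprime$ (since $u\in I_{M'}$) and that $\D_u$ contains no crossing of two boundaries (established inside the proof of \autoref{cl:hosted-tree}).

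First I would reduce to a single pseudo-disk. By \autoref*{cl:pseudo-disk-abc}.\eqref{e:dd} we have $N(u)\subseteq M'$, so $\D_u\subseteq\RMprime=\bigcup_{v\in M'}\D_v$ forces $\D_u\subseteq\bigcup_{v\in N(u)}\D_v$. Since $\D_u$ contains no crossing point, the family $\{\D_u\cap\D_v:v\in N(u)\}$ is laminar (this is exactly the property used to prove \autoref{cl:hosted-tree}); but a laminar family of closed subsets of the connected set $\D_u$ whose union is $\D_u$ must contain a member equal to $\D_u$ itself (its maximal members are pairwise disjoint, closed, and cover the connected set $\D_u$, hence there is only one, and it is $\D_u$). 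So there is $v_0\in N(u)\subseteq M'$ with $\D_u\subseteq\D_{v_0}$. If $v_0\in M$ we are done, since then $\D_u\subseteq\D_{v_0}\subseteq\RM$.

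It remains to treat the case $v_0\in M'\setminus M$. By definition of $M'$, $\D_{v_0}$ contains a point $q\in\partial\D_a\cap\partial\D_b$ for two distinct $a,b\in M$, and $\{a,b,v_0\}$ is then a triangle. A first observation, using that every triangle has at least two vertices in $M$ (\autoref{cor:bothbranchings}), is that $\D_u$ must be disjoint from both $\D_a$ and $\D_b$: if, say, $\D_u\cap\D_a\neq\emptyset$, then $u\sim a$, and together with $u\sim v_0$ (from $\D_u\subseteq\D_{v_0}$) and $a\sim v_0$ (from $q$) this would give a triangle $\{u,a,v_0\}$ with only $a$ in $M$. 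Hence the only remaining possibility is $\D_u\cap\RM=\emptyset$ while $\D_u\subseteq\D_{v_0}\subseteq\RMprime$, i.e.\ $\D_u$ lies inside a ``pocket'' of $\D_{v_0}$ sticking out of $\RM$. Ruling this out is the heart of the argument: I would combine the laminarity of $\D_u$ relative to the other boundaries with \autoref*{cl:pseudo-disk-abc}.\eqref{e:bb} and a minimality choice of $v_0$ (a pseudo-disk of $M'$ containing $\D_u$ and minimal for inclusion) to force $\partial\RM$ to meet $\D_u$. Once $\D_u$ meets $\partial\RM$, one finishes cleanly: a point $x$ of $\partial\RM$ interior to $\D_u$ lies on some $\partial\D_c$ with $c\in M$, so $c\in N(u)$; arbitrarily close points inside $\D_u$ lie in $\RMprime\setminus\RM$, hence in some $\D_w$ with $w\in M'\setminus M$, giving $w\in N(u)$ and $c\sim w$; then $\{u,c,w\}$ is a triangle with a single vertex in $M$, contradicting \autoref{cor:bothbranchings}.

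The main obstacle I anticipate is precisely this ``pocket'' case. The reduction to a single containing disk and the triangle-counting arguments are routine, but showing that an inner-$M'$-vertex cannot have its whole pseudo-disk buried in the part of some $\D_{v_0}$ ($v_0\in M'\setminus M$) that lies outside $\RM$ appears to need a genuinely geometric argument describing how $\partial\RM$ runs through $\D_{v_0}$, rather than a purely combinatorial one.
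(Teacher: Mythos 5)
Your proposal is not a complete proof: the case you yourself call ``the heart of the argument'' --- an inner-$M'$-vertex $u$ whose pseudo-disk is buried in a part of some $\D_{v_0}$, $v_0\in M'\setminus M$, disjoint from $\RM$ --- is exactly the case you leave open, and the strategy you sketch for it (laminarity, \autoref*{cl:pseudo-disk-abc}.\eqref{e:bb}, and an inclusion-minimal choice of $v_0$ ``to force $\partial\RM$ to meet $\D_u$'') does not obviously succeed. None of the facts available at that point excludes the pocket configuration: take $\D_{v_0}$ containing one crossing point of $\pD_a\cap\pD_b$ with $a,b\in M$ (so $v_0\in M'$) and let $\D_u$ sit in a region of $\D_{v_0}$ of $M'$-ply one, far from $\D_a\cup\D_b$ and meeting no other pseudo-disk; this is compatible with the triangle condition of \autoref{cor:bothbranchings}, with \autoref*{cl:pseudo-disk-abc}.\eqref{e:bb} and \eqref{e:dd}, and with the no-crossing/laminarity property inside $\D_u$ that you use. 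So no argument built only from those ingredients and a minimality choice can force $\partial\RM$ into $\D_u$; as written, the central step of your proof is a hope rather than an argument, which is a genuine gap.

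For comparison, the paper's proof does not go through a single containing pseudo-disk at all: it picks a point $x\in\D_u\setminus\RM$ \emph{arbitrarily close to} $\RM$, observes that $x\in\D_w$ for some $w\in M'\setminus M$, and then extracts a vertex $v\in M$ whose disk meets both $\D_u$ and $\D_w$, giving a triangle $uvw$ with a single vertex in $M$. In other words, the only part of your outline that the paper carries out explicitly is your finishing triangle count; the availability of points of $\D_u\setminus\RM$ arbitrarily close to $\RM$ --- precisely the statement your ``pocket'' discussion would have to supply --- is built into the paper's choice of $x$ and not derived from your route. Your reduction to a single $v_0$ with $\D_u\subseteq\D_{v_0}$ (itself resting on the laminarity assertion inside \autoref{cl:hosted-tree}) and the triangle arguments around it are fine as far as they go, but they stop exactly where the real content of the claim begins, so the proposal cannot be accepted as a proof.
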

\begin{proof}
Suppose, towards a contradiction, that for some vertex $u\in I_{M'}$, we have $\D_u \not\subseteq \RM$. Let $x$ be a point of $\D_u \setminus \RM$ arbitrarily close to $\RM$. As $\D_u \subseteq \RMprime$, there is a vertex $w\in M'\setminus M$ such that $x\in \D_w$. the point $x$ being arbitrarily close to $\RM$, there is a vertex $v \in M$ such that $\D_u$, $\D_v$, and $\D_w$ intersect. This contradicts the fact that every triangle of $G$ has at least two vertices in $M$.     \cqed \qed
\end{proof}

A vertex $u\in I_{M'}$ such that $H_u$ is a path has degree at most $2p$ (by \autoref{claim:inner-path}), and all of them were in $M$ (by \autoref{claim:inner-M-already}). Hence, \autoref{hyperpseudo} implies that there are at most $\O(|M|p^3)$ such vertices with distinct neighborhoods.
Thus, together with \autoref{cl:d3Ht-intro} we have the following bound.
\begin{claim}\label{cl:Dv-inside-bounded}
    After the kernelization of \autoref{sssec:inM'}, the number of inner-$M'$-vertices is $\O(p^4k)$.
\end{claim}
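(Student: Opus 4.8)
The proof of \autoref{cl:Dv-inside-bounded} is essentially an assembly of the bounds established earlier in this section, so the plan is simply to partition the inner-$M'$-vertices according to the case analysis already carried out and sum the contributions.

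First I would recall the trichotomy on the hosted graph $H_u$, which is a tree by \autoref{cl:hosted-tree}. For a vertex $u\in I_{M'}$ either $\Delta(H_u)\ge 3$, or $\Delta(H_u)\le 2$ in which case $H_u$ is a path (possibly a single vertex). These two classes clearly cover all of $I_{M'}$, so it suffices to bound each separately. \autoref{cl:d3Ht-intro} already gives that the number of inner-$M'$-vertices with $\Delta(H_u)\ge 3$ is $\O(p^3k)$.

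Next I would handle the case $\Delta(H_u)\le 2$. By \autoref{claim:inner-path} every such vertex satisfies $d(u)\le 2p$ and has its neighborhood covered by two cliques of $M'$. After exhaustively applying rule \ruleref{e:r0pd}, at most five inner-$M'$-vertices with $H_u$ a path can share the same neighborhood, so the count of such vertices is at most $5$ times the number of distinct neighborhoods realized. By \autoref{claim:inner-M-already} each such $\D_u$ is in fact contained in $\RM$, hence its neighborhood inside $M$ is exactly the set of pseudo-disks of $M$ meeting $\D_u$; viewing $M$ as the finite subset $\EuScript P$ and the $\D_u$'s as pseudo-disks $S$ in the family $\EuScript F$, \autoref{hyperpseudo} with $k$ replaced by $2p$ bounds the number of such neighborhoods (edges of cardinality at most $2p$ in the hypergraph) by $\O(|M|\,(2p)^3)=\O(|M|p^3)=\O(p^4k)$, using $|M|=\O(pk)$ from \autoref{cor:bothbranchings}. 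Multiplying by $5$ leaves this $\O(p^4k)$.

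Finally, summing the two bounds, $\O(p^3k)+\O(p^4k)=\O(p^4k)$, completes the proof. There is no real obstacle here: the only point needing a little care is that rule \ruleref{e:r0pd} must indeed have been applied exhaustively so that the ``at most five twins'' conclusion is legitimate, and that the neighborhood of a path-type inner vertex genuinely lives inside $M$ (not merely $M'$) so that \autoref{hyperpseudo} applies with the family $\EuScript P = M$ rather than $M'$ — this is exactly what \autoref{claim:inner-M-already} is for, and it is what yields the slightly better $\O(|M|p^3)$ rather than $\O(|M'|p^3)$.
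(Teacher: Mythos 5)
Your proof is correct and follows essentially the same route as the paper: split $I_{M'}$ according to whether $\Delta(H_u)\ge 3$ (bounded via \autoref{cl:d3Ht-intro}) or $H_u$ is a path, then combine rule \ruleref{e:r0pd}, \autoref{claim:inner-path}, \autoref{claim:inner-M-already} and \autoref{hyperpseudo} with $\EuScript P = M$ and $|M|=\O(pk)$ to get $5\cdot\O(|M|p^3)=\O(p^4k)$, hence $\O(p^3k)+\O(p^4k)=\O(p^4k)$ in total. The only point worth stating explicitly is that $N(u)\subseteq M$ (so that distinct neighborhoods really correspond to distinct traces on $M$) needs not only $\D_u\subseteq \RM$ from \autoref{claim:inner-M-already} but also the property that every triangle has two vertices in $M$ — a short deduction that the paper likewise leaves implicit.
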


\subsection{Properties of Border-\texorpdfstring{$M'$}{M'}-vertices}
\label{sssec:notinM'intro}

With $\mS$ fixed, we define the \emph{$M'$-ply} of a point of the plane as the number of pseudo-disks from $M'$ that contain it.
We consider the border-$M'$-vertices $v$, that is, those such that $\D_v$ intersects $\RMprime$, without being included in it.
For such vertices, we can strengthen \autoref{cl:hosted-tree}, but we have to orient the edges of the hosted graphs $H_u$.
The \emph{directed hosted graph} $\oH{u}$ is an orientation of $H_u$
such that the edge $ab$ is oriented from the vertex with lower $M'$-ply towards the other one. This is always possible as in a pseudo-disk representation, any two incident faces have different ply.

\begin{figure}[!ht]
    \centering
    \includegraphics[width=\textwidth]{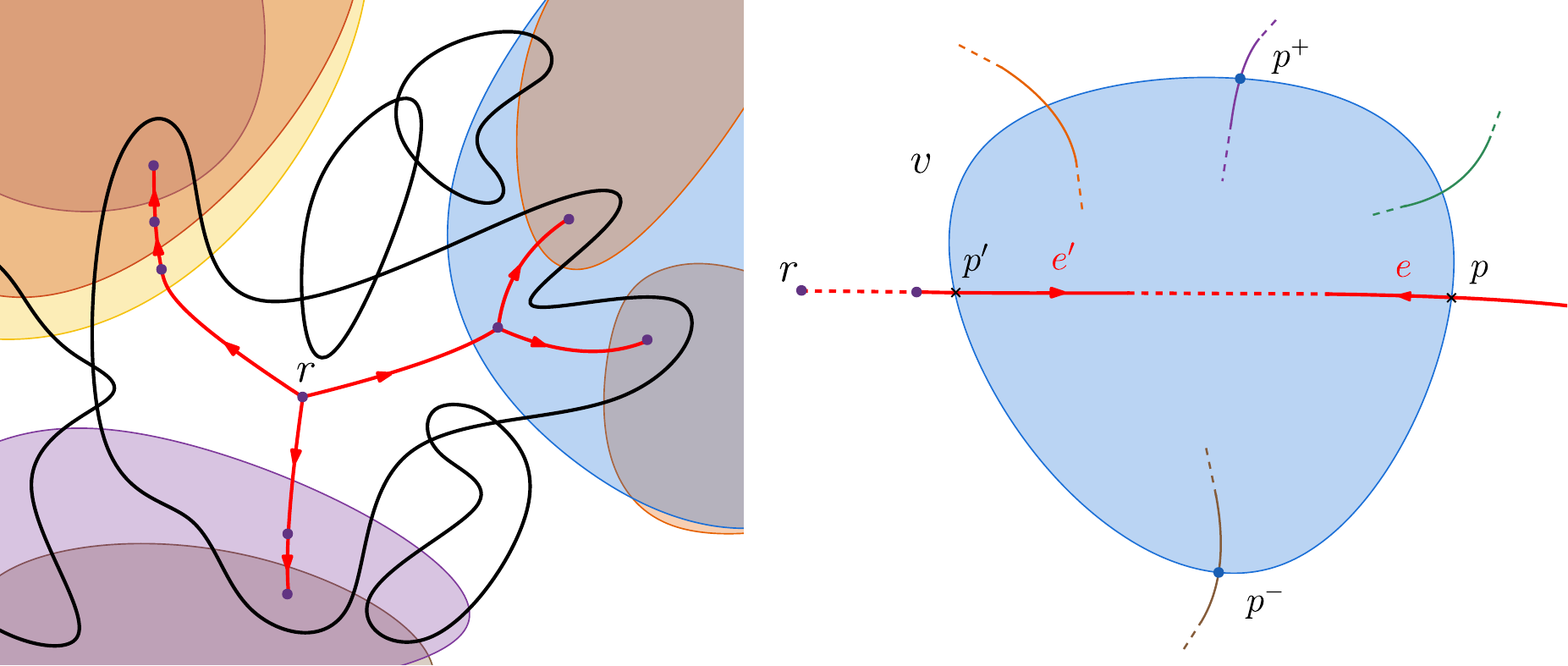}
    \caption{(left) The filled pseudo-disks belong to $M'$, and the empty one correspond to a tree of $G-M'$ with vertices in $B_{M'}\cup O_{M'}$. For one of these vertices $u\in B_{M'}$, the graph $\oH{u}$ is drawn inside $\D_u$. (right) Proof of \autoref{cl:hosted-BM-arbo}.}
    \label{fig:Hu-arbo}
\end{figure}

An \emph{arborescence} is an oriented tree with exactly one source, called the root.
\begin{claim}\label{cl:hosted-BM-arbo}
    For any vertex $u\in B_{M'}$, the directed hosted graph $\oH{u}$ is an arborescence whose root is drawn in a face of $\PSprime$ of $M'$-ply zero (see an example in the left of \autoref{fig:Hu-arbo}).
\end{claim}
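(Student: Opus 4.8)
The plan is to prove both parts of the claim: first that $\oH{u}$ is an arborescence (equivalently, that $H_u$, which is a tree by Claim~\ref{cl:hosted-tree}, has exactly one source in the orientation by $M'$-ply), and second that this unique source lies in a face of $\PSprime$ of $M'$-ply zero. The starting point is the observation that $u\in B_{M'}$ means $\D_u$ is not contained in $\RMprime$, so $\D_u$ meets some region of the plane of $M'$-ply zero; hence at least one vertex of $H_u$ lies in a face of $\PSprime$ of $M'$-ply zero, and such a vertex is certainly a source (a point of $M'$-ply zero cannot have a neighbouring face of smaller ply). So the existence of a source in a ply-zero face is the easy direction; the work is in uniqueness.

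For uniqueness I would argue that $H_u$ has no vertex $x$ that is a local minimum of the $M'$-ply function other than a ply-zero vertex, and in fact exactly one source. Suppose $\oH{u}$ had two sources $x_1,x_2$. Let me recall the structure behind $H_u$: since $\D_u$ contains no crossing point of two borders (shown inside the proof of Claim~\ref{cl:hosted-tree} via Claim~\ref{cl:pseudo-disk-abc}\eqref{e:bb}), the traces $\{\D_u\cap \D_v : v\in N(u)\cap M'\}$ form a laminar family inside $\D_u$, and the faces of $\PSprime$ meeting $\D_u$ are precisely the regions cut out inside $\D_u$ by the boundary arcs $\pD_v\cap\D_u$, $v\in M'$. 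Each such arc $\pD_v\cap \D_u$ is a single arc crossing $\D_u$ from boundary to boundary (again because $\pD_v$ and $\pD_u$ meet in at most two points), so it splits $\D_u$ into two parts, one of which (the part inside $\D_v$) has $M'$-ply one higher with respect to $v$. Consequently, moving along $H_u$ from a face to an adjacent face changes the $M'$-ply by exactly $\pm 1$, and crossing the arc $\pD_v\cap\D_u$ corresponds to entering or leaving $\D_v$. A source of $\oH{u}$ is thus a face $f$ such that every arc bounding $f$ inside $\D_u$ separates $f$ from the interior of the corresponding $\D_v$, i.e.\ $f$ lies outside every $\D_v$ with $v\in M'$ whose border cuts $\D_u$; since the laminar family partitions $\D_u$ into a ``tree of nested regions'', there is exactly one such outermost face, and it has $M'$-ply equal to the minimum $M'$-ply over $\D_u$. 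Because $u\in B_{M'}$ this minimum is $0$, and the unique outermost face is the unique source, drawn in a ply-zero face.

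Concretely, the key steps in order are: (1) record that $H_u$ is a tree and that inside $\D_u$ the faces of $\PSprime$ are the cells of the laminar arrangement $\{\pD_v\cap\D_u : v\in M'\}$, each arc being a single crossing arc; (2) observe that traversing an edge of $H_u$ changes the $M'$-ply by $\pm1$, so the orientation by $M'$-ply is well defined and acyclic, and a vertex is a source iff it is a strict local minimum of $M'$-ply on the tree; (3) show that a strict local minimum must be the global minimum: if $x$ is a vertex with all incident $H_u$-edges outgoing, then $x$'s face is not inside any $\D_v$ ($v\in M'$) whose border meets $\D_u$, so its $M'$-ply is $0$; (4) show there is only one face of $\D_u$ of $M'$-ply $0$, namely the ``outer'' cell of the laminar arrangement inside $\D_u$ — any two points of $M'$-ply $0$ in $\D_u$ are joined by an arc in $\D_u$ that must stay at $M'$-ply $0$ (crossing an arc $\pD_v\cap\D_u$ would enter $\D_v$), hence lie in the same face; (5) conclude that $\oH{u}$ has exactly one source and it is in a ply-zero face, i.e.\ $\oH{u}$ is an arborescence rooted there, using $u\in B_{M'}$ to guarantee the ply-zero face exists.

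The main obstacle I anticipate is step (4), the uniqueness of the $M'$-ply-zero face inside $\D_u$: it needs the laminarity of $\{\pD_v\cap \D_u\}$ together with the fact that each $\pD_v$ meets $\pD_u$ in at most two points (so each arc is a single chord of $\D_u$, not a more complicated curve), and a clean way to phrase ``the complement of a laminar family of chords in a disk has a unique unbounded-type cell.'' A safe way to make this rigorous is to induct on $|M'\cap N(u)|$, or to note that the dual tree of the chord arrangement in $\D_u$ is exactly $H_u$ and track the ply along it; I would be slightly careful that two vertices $v,w\in M'$ might have $\pD_v\cap\D_u$ and $\pD_w\cap\D_u$ crossing — but this cannot happen, since a crossing point would be a point of $\pD_v\cap\pD_w$ inside $\D_u$, forbidden by Claim~\ref{cl:pseudo-disk-abc}\eqref{e:bb}. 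Everything else is bookkeeping with Euler-type/laminar arguments already implicit in the proof of Claim~\ref{cl:hosted-tree}.
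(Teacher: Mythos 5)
Your argument is correct in substance but takes a genuinely different route from the paper. The paper obtains uniqueness of the $M'$-ply-zero vertex of $\oH{u}$ by invoking \autoref{cl:pseudo-disk-abc}\eqref{e:ee} with $T=\{u\}$ (connectivity of $\D_u\setminus\RMprime$), and then excludes an arc oriented towards that root by a short geometric contradiction: a backward arc forces $\pD_v$ to be crossed by two distinct arcs of $\oH{u}$, the two sub-arcs of $\pD_v$ between the crossing points each contain an intersection with another border of $M'$, which $\D_u$ cannot contain by \eqref{e:bb}, and this yields four intersection points of $\pD_u$ and $\pD_v$, contradicting the pseudo-disk property. You instead analyse the laminar arrangement of the curves $\pD_v\cap\D_u$ inside $\D_u$ and observe that only the outermost cell can be a source; this is more self-contained (your step (4) essentially re-proves the $T=\{u\}$ case of \eqref{e:ee}, which you could also just cite), at the cost of bookkeeping the paper avoids. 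Two details should be patched in your write-up. First, $\pD_v\cap\D_u$ need not be a boundary-to-boundary chord: when $\pD_v\cap\pD_u=\emptyset$ and $\D_v\subsetneq\D_u$ it is a closed curve; laminarity and the cell-tree argument go through unchanged, but the statement ``each arc is a single chord'' is not literally true. Second, you assert without proof that faces of $\PSprime$ meeting $\D_u$ correspond to the cells of this arrangement; this is true (two cells of the same face would lie in exactly the same pseudo-disks of $M'$, yet the path between them in the cell tree crosses each curve at most once, forcing the cells to coincide), and in fact your argument can be rearranged to avoid it: it suffices to show that every face of positive $M'$-ply meeting $\D_u$ has an incoming edge in $\oH{u}$, which follows by taking any cell of that face, its minimal enclosing trace $\D_{v_0}\cap\D_u$, and the edge of $\oH{u}$ crossing $\pD_{v_0}$ outwards. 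With these fixes, exactly one vertex of the tree $H_u$ has in-degree zero (the ply-zero one, non-empty because $u\in B_{M'}$), and since $H_u$ has $|V(H_u)|-1$ edges every other vertex has in-degree exactly one, so $\oH{u}$ is indeed an arborescence rooted in a ply-zero face.
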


\begin{proof}
    One consequence of \autoref*{cl:pseudo-disk-abc}\eqref{e:ee} is that $\oH{u}$ has exactly one vertex corresponding to a $M'$-ply zero face in  $\PSprime$, call it the root $r$.
    Towards a contradiction, assume there is an arc $e$ pointing in the wrong direction, that is, towards $r$. Since $r$ has $M'$-ply zero, the pseudo-disk $\D_v$ whose border is crossed by $e$, is also crossed in the path linking $r$ and $e$ in $\oH{u}$ (see the right of \autoref{fig:Hu-arbo}). Denote $p$ and $p'$ the crossing point between $\pD_v$ and the arcs $e$ and $e'$, respectively. Since $e$ and $e'$ are distinct arcs, the two parts of $\pD_v\setminus \{p,p'\}$ are crossed by other borders of $M'$, at $p^+$ and $p^-$ say. Since $\D_u$ cannot contain these points $p^+$ and $p^-$ (by definition of $M'$), but contains $p$ and $p'$, we have that $\pD_u$ and $\pD_v$ intersect at least four times (in each of the arcs of $\pD_v\setminus \{p,p^-,p',p^+\}$), a contradiction. \cqed \qed
\end{proof}

We have to define new plane graphs. For any vertex $u\in B_{M'}$, the digraph $\oHp{u}$ is a plane arborescence drawn in the vicinity of $\oH{u}$, and it is such that 
\begin{itemize}
    \item $\oHp{u}$ has the same root and same leaves as $\oH{u}$, but
    \item it is a subdivision of an out-star whose paths from $r$ to any leaf $\ell$ have the same length as in $\oH{u}$. 
\end{itemize}
This graph can have several vertices in the same face of $\PS$, but each arc intersects a single border of $M'$, at a single point in the vicinity of $\oH{u}$ (see the left and the middle of \autoref{fig:oH-oHp-oHpp}).
\begin{figure}
    \centering
    \includegraphics[width=\textwidth]{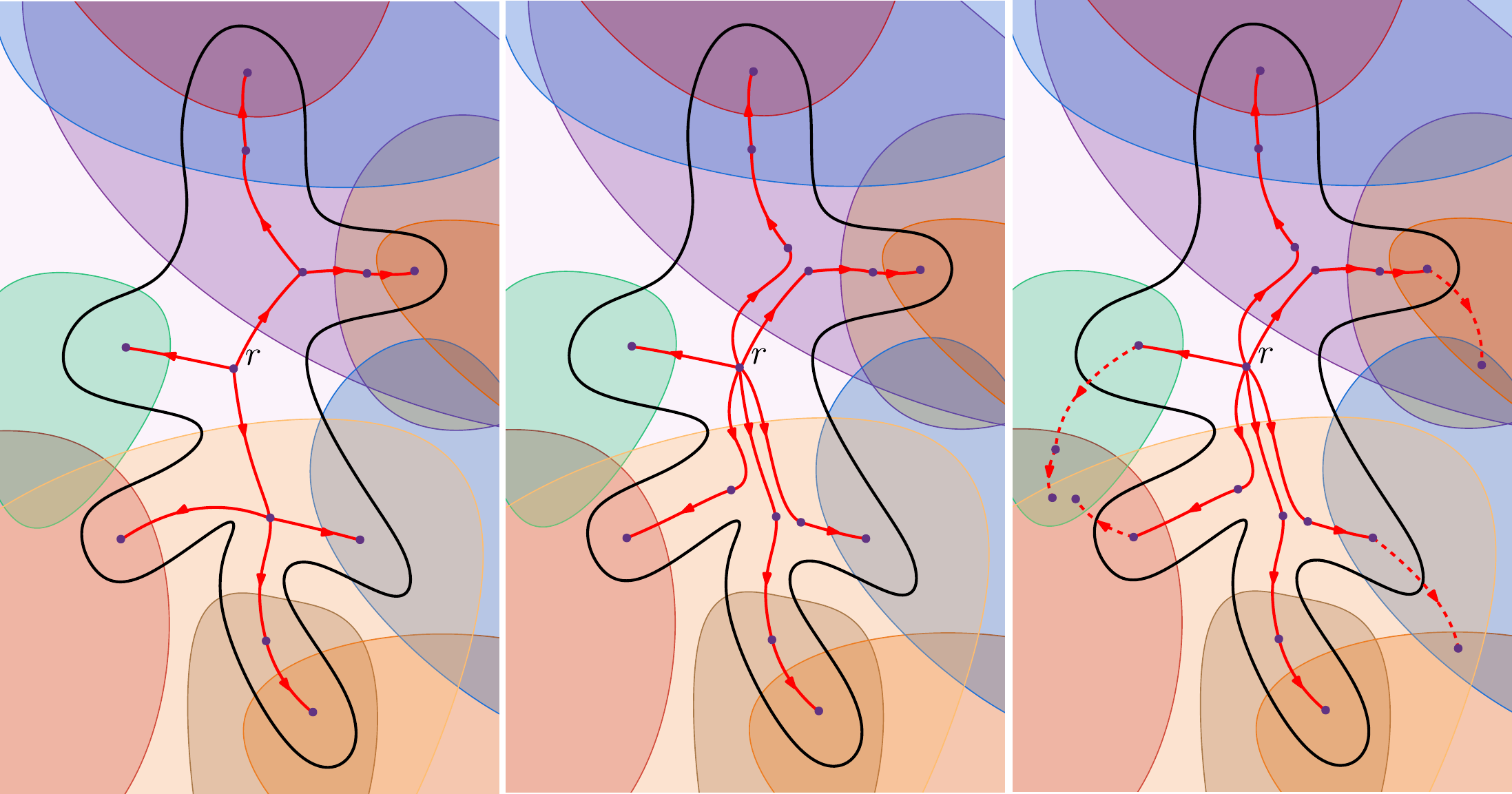}
    \caption{Example of graphs $\oH{u}$, $\oHp{u}$, and $\oHpp{u}$.}
    \label{fig:oH-oHp-oHpp}
\end{figure}

We now have to prolong these graphs $\oHp{u}$.
\begin{claim}\label{cl:oHpp}
    In polynomial time, the drawn arborescences $\oHp{u}$, for all $u\in B_{M'}$, can be prolonged into arborescences $\oHpp{u}$ fulfilling the following properties (see an example in the right of \autoref{fig:oH-oHp-oHpp}).
\begin{itemize}
    \item The root of $\oHpp{u}$ is the same as $\oHp{u}$.
    \item $\oHpp{u}$ is the subdivision of an out star.
    \item Each leaf of $\oHpp{u}$ is drawn in a face of $\PSprime$ whose $M'$-ply is locally maximal (i.e. such that all the incident faces have lower $M'$-ply).
    \item Each edge of $\oHpp{u}$ crosses exactly one edge of $\PSprime$, at a single point, and it is oriented from the smaller $M'$-ply towards the other end.
    \item When considering all the drawings $\oHpp{u}$ for all the $u\in B_{M'}$ there are no intersections. 
\end{itemize}
\end{claim}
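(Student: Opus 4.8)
The plan is to prolong each drawn arborescence $\oHp{u}$ leaf by leaf, pushing every leaf that is not yet in a locally $M'$-ply-maximal face one step further into the region $\RMprime$, while being careful that the simultaneous routing for all $u\in B_{M'}$ stays non-crossing. Fix $u\in B_{M'}$ and a leaf $\ell$ of $\oHp{u}$ drawn in a face $f$ of $\PSprime$; recall from \autoref{cl:hosted-BM-arbo} that the root of $\oHp{u}$ sits in an $M'$-ply-zero face and that all edges so far point from smaller to larger $M'$-ply. If $f$ is already locally maximal, we leave $\ell$ untouched. Otherwise there is an edge $e$ of $\PSprime$ on the boundary of $f$ separating $f$ from an incident face $f'$ with strictly larger $M'$-ply; this edge $e$ is a sub-arc of $\pD_v$ for some $v\in M'$. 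I would attach to $\ell$ a new arc crossing $e$ exactly once, very close to $\oH{u}$ so that it stays inside $\D_u$ (this is possible since $\D_u$ intersects $\RMprime$ and, being a border-$M'$-vertex, $\D_u$ reaches into faces of larger $M'$-ply; one has to check that $\D_u$ actually covers a bit of $f'$ near $e$, which follows because $\D_u$ intersects $\RMprime$ without being contained in it and $f$ contains a leaf of $\oH{u}$, so $\pD_u$ comes from outside and the crossing sub-arc $e$ of $\pD_v$ intersects $\D_u$). We orient the new arc from $f$ towards $f'$, which is consistent with increasing $M'$-ply. Iterating this from each leaf, the $M'$-ply of the face hosting the current frontier vertex strictly increases at every step, and since $M'$-ply is bounded by $p$ the process terminates after at most $p$ steps per leaf, ending in a locally maximal face. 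Each resulting $\oHpp{u}$ is then an out-star subdivision (we only ever append paths at the leaves of an already subdivided out-star), with the same root, with each edge crossing exactly one edge of $\PSprime$ at one point, and with the correct orientations.

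The remaining point, and the one I expect to be the main obstacle, is guaranteeing that the drawings $\oHpp{u}$ for distinct $u\in B_{M'}$ can be realized simultaneously without crossings. The key structural fact I would exploit is that the whole prolongation of $\oHp{u}$ lives inside $\D_u$ (each appended arc is drawn "in the vicinity of $\oH{u}$", hence inside $\D_u$), so two prolongations $\oHpp{u}$ and $\oHpp{u'}$ can only interact inside $\D_u\cap\D_{u'}$; by \autoref*{cl:pseudo-disk-abc}\eqref{e:aa} this intersection lies entirely outside $\RMprime$, i.e. in the $M'$-ply-zero region, and no appended arc ever enters the $M'$-ply-zero region except possibly the initial segment near the root. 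More carefully, since the root of each $\oHp{u}$ already sits in an $M'$-ply-zero face and every appended arc strictly increases $M'$-ply, the prolonged part of $\oHpp{u}$ stays within the faces of positive $M'$-ply, which for distinct $u,u'$ are disjoint from $\D_u\cap\D_{u'}$; thus the only potential conflicts are between the pre-existing parts $\oHp{u}$, $\oHp{u'}$, which were already assumed to be drawn without mutual crossings (they lie inside disjoint-in-the-relevant-region pseudo-disks, by the same argument applied to the original $\oH{u}$). To be safe about conflicts between appended arcs crossing the same edge $e$ of $\PSprime$ from different faces, I would route all arcs crossing a given edge $e$ through pairwise distinct points of $e$, ordered consistently with the cyclic order in which their source faces meet $e$; since each appended arc of $\oHpp{u}$ crosses $e$ at most once and stays in the narrow band around $\oH{u}$, finitely many such arcs can be routed disjointly. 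Finally, the construction is clearly polynomial: there are $\O(|B_{M'}|\cdot p)$ appended arcs in total, each found by a local search in $\PSprime$ for an incident face of larger $M'$-ply, and the $M'$-ply of every face of $\PSprime$ is computed once at the start as in the proof of \autoref{lem:pseudo-ply-clique}.
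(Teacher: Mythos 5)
There is a genuine gap, and it sits exactly where you predicted the main obstacle would be. Your prolongation step assumes that each appended arc can be drawn ``very close to $\oH{u}$ so that it stays inside $\D_u$'', arguing that $\D_u$ covers a bit of the deeper face $f'$ near the crossed edge $e$. This cannot hold in general: a leaf of $\oH{u}$ is by construction a face where $\D_u$ stops reaching deeper (otherwise $H_u$ would have a further vertex/edge there), and the third bullet of the claim forces the prolongation to continue all the way to a face of \emph{locally maximal} $M'$-ply, which may be separated from $\D_u$ by many borders of $M'$ and is typically not intersected by $\D_u$ at all. So the appended part of $\oHpp{u}$ must leave $\D_u$, and your termination/feasibility argument for a single $u$ already needs repair (though the ``ply strictly increases, so at most $p$ steps'' part is fine once one accepts arcs outside $\D_u$).

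This breaks the non-crossing argument as well, since it rests entirely on the false premise that the prolongations are confined to their own pseudo-disks and that distinct disks only meet outside $\RMprime$ (\autoref*{cl:pseudo-disk-abc}\eqref{e:aa}). Once prolongations of different vertices $u,u'$ (or even different branches of the same $u$) must traverse faces far from their disks, they genuinely compete for the same faces and the same edges of $\PSprime$, and a straight arc towards a deeper face may be blocked by an already-drawn arborescence. This is the case the paper's proof is really about: if the desired arc crossing $e$ cannot be drawn without hitting some $\oHp{v}$ (possibly $v=u$), one instead routes the prolongation \emph{alongside} the arcs of the blocking arborescence $\oHp{v}$ into the next face $f''$, which preserves planarity and still crosses a single edge of $\PSprime$ per arc while increasing the $M'$-ply. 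Your local device of spreading arcs that cross the same edge $e$ over distinct points of $e$ does not substitute for this, because it only resolves crossings on $e$ itself and says nothing about how the continuations are routed consistently through the deeper faces. To fix the proposal you would need to drop the containment-in-$\D_u$ claim and add a blocking-case rule of this kind (or an equivalent global planar routing argument).
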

\begin{proof}
All these properties, except the third one, are fulfilled by the original $\oHp{u}$. Thus, we only have to show how to prolong an arborescence if one of its leaves $\ell$ is not in a face with locally maximal $M'$-ply. Around the face $f$ of $\PSprime$ containing $\ell$, there is an edge $e$  of $\PSprime$, incident to a face $f'$ with larger $M'$-ply. If it is possible without intersecting an arc of a $\oHp{v}$, for some $v\in B_{M'}$, draw an arc towards $f'$ crossing only $e$ (see the left of \autoref{fig:wideH}). If it is not possible, then our drawing was blocked by some $\oHp{v}$ (possibly $v=u$), and in that case we just have to follow the arcs of $\oHp{v}$ towards the next face $f''$ of $\PSprime$ (see the right of \autoref{fig:wideH}).    \cqed\qed
\end{proof}

\begin{figure}
    \centering
    \includegraphics[scale=0.8]{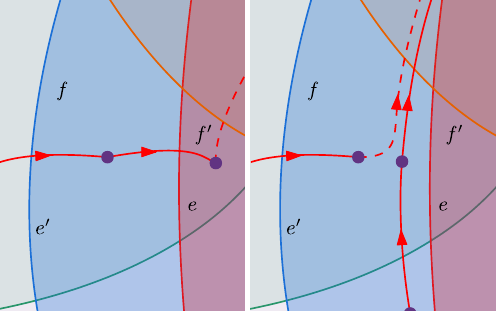}
    \caption{In red, the graph $\oHp{u}$. Left: the case where $e$ is reachable. Right: the case where $e$ is not reachable.}
    \label{fig:wideH}
\end{figure}

\begin{definition}
    Given a vertex $u\in B_{M'}$, the corresponding $\oHpp{u}$, and a root-to-leaf path $P''$ of $\oHpp{u}$, we denote $\type(P'')$ the directed path in $\dPSprime$ corresponding to $P''$. We say that $\type(P'')$ is the \emph{type} of $P''$.
    Then, $\types(u) = \{\type(P'')\ |\ P'' \text{ is a root-to-leaf path of } \oHpp{u}\}$. Given a tree $T$ of $G-M'$, with $V(T) \subseteq B_{M'}\cup O_{M'}$, the set $\types(T)$ is defined by $\types(T) = \bigcup_{u\in V(T)\cap B_{M'}} \types(u)$. The cardinalities of these sets are denoted $\numtypes(u)$ and $\numtypes(T)$. Finally, let $\alltypes$ be the set $\alltypes = \bigcup_{u\in B_{M'}} \types(u)$. 
\end{definition}

Note that by construction of $\oHpp{u}$, $\numtypes(u)$ corresponds to the number of leaves in $\oH{u}$.
Any face $f$ of $\PSprime$ with locally maximal $M'$-ply, only contains leaves of the graphs $\oHpp{u}$, for $u\in B_{M'}$.
When going around the border of such face $f$, one crosses, with a given order, the ending arcs of several root-to-leaf paths of some graphs $\oHpp{u}$. This order is strongly related with the type of these root-to-leaf paths.
\begin{claim}
    \label{cl:same-type-consecutive}
    Given a face $f$ of $\PSprime$ with locally maximal $M'$-ply, and given any path $P\in \alltypes$ ending in $f$, all the root-to-leaf paths of type $P$ end at $f$, and these paths are crossed consecutively when going around $f$. In other words, if going around $f$ one crosses, in this order, four root-to-leaf paths $P''_1,P''_2,P''_3,P''_4$ of some graphs $\oHpp{u}$, for $u\in B_{M'}$ (these paths can belong to one, two, three or four different graphs $\oHpp{u}$),
    we cannot have that $\type(P''_1)=P_a$, $\type(P''_2)=P_b$, $\type(P''_3)=P_a$, and $\type(P''_4)=P_c$, for any triple $P_a,P_b,P_c \in \alltypes$ such that $P_a\neq P_b$ and $P_a\neq P_c$.
\end{claim}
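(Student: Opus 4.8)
The plan is to argue by contradiction, directly from the geometry of the pseudo-disk system together with the tree structure of the graphs $\oHpp{u}$. Suppose that going around a locally-$M'$-ply-maximal face $f$ one crosses, consecutively, four root-to-leaf path ends $P''_1,P''_2,P''_3,P''_4$ (belonging to possibly distinct graphs $\oHpp{u}$) with $\type(P''_1)=\type(P''_3)=P_a$ but $\type(P''_2)=P_b\neq P_a$ (and similarly $P_c\neq P_a$ for $P''_4$). First I would observe that a root-to-leaf path $P''$ of type $P_a$ is, by construction (Claim \ref{cl:oHpp}), a curve that starts in the unique $M'$-ply-zero face of $\PSprime$ reached by $\oHp{u}$, crosses exactly the sequence of edges of $\PSprime$ prescribed by $P_a$ (one crossing per edge, at a single point), and ends in $f$; in particular all paths of type $P_a$ begin in the \emph{same} face of $\PSprime$ (the source of $P_a$ in $\dPSprime$) and end in the \emph{same} face $f$ (the sink of $P_a$), which already gives the first assertion of the claim.

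The heart of the argument is the "consecutive" part. Consider the curves $\gamma_1$ (the end of $P''_1$, of type $P_a$) and $\gamma_3$ (the end of $P''_3$, also of type $P_a$). I would close each of them up: let $\gamma_1$ and $\gamma_3$ be completed into closed curves by joining their far endpoints (deep inside the $M'$-ply-zero source face of $P_a$) by an arc, and joining their endpoints on the boundary of $f$ by an arc inside $f$. Because $\gamma_1$ and $\gamma_3$ follow the same type $P_a$, and since each edge of $\PSprime$ is crossed exactly once and consecutive edges of $P_a$ bound a common face, these two curves can be taken to be disjoint (the tubular-neighborhood structure of Claim \ref{cl:oHpp}: arcs of distinct $\oHpp{u}$ do not cross, and two root-to-leaf paths of the same type run parallel through the same corridor of faces of $\PSprime$). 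Hence $\gamma_1$ and $\gamma_3$ together with the two closing arcs bound a closed region $R$ of the plane which is "swept" by the corridor of faces dual to $P_a$. The endpoint of $P''_2$ on $\partial f$ lies, by the cyclic order hypothesis, on the arc of $\partial f$ strictly between the ends of $\gamma_1$ and $\gamma_3$, i.e. inside $R$; so the curve $\gamma_2$ (end of $P''_2$) must, traced backwards from $f$, either stay inside $R$ all the way to the $M'$-ply-zero source of $P_b$, or cross out of $R$ through $\gamma_1\cup\gamma_3$. The latter is impossible since arcs of $\oHpp{\cdot}$ are pairwise non-crossing. But staying inside $R$ forces $\gamma_2$ to traverse exactly the corridor of faces of $\PSprime$ dual to $P_a$, hence to cross exactly the edges of $P_a$ in order, which means $\type(P''_2)=P_a$, contradicting $P_b\neq P_a$. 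The same argument applied with $\gamma_1$, $\gamma_3$ and $\gamma_4$ (the end of $P''_4$, on the other side) handles $P_c\neq P_a$, using that $P''_4$ lies outside $R$ on the $f$-boundary, hence its backward trace cannot enter $R$ either without a crossing — wait, more carefully: $P''_4$ sits on the far side so one instead reruns the boundedness argument with the roles of the "inside" region adjusted, again using pairwise non-crossing of the arcs.

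The step I expect to be the main obstacle is making rigorous the claim that the two same-type curves $\gamma_1,\gamma_3$ can be completed to \emph{disjoint} closed curves bounding a region whose interior is exactly the union of the $\PSprime$-faces of the corridor — this requires carefully invoking the non-crossing property of Claim \ref{cl:oHpp} and the fact that a type is a \emph{path} in $\dPSprime$ (so the corridor of faces is a simple path, without self-touching), and ruling out degeneracies where $\gamma_1$ and $\gamma_3$ share a face but are forced to interleave with a third curve. Everything else — the identification of source and sink faces, the "one crossing per edge" bookkeeping, the final Jordan-curve contradiction — should be routine once the region $R$ is set up correctly.
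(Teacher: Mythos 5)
Your argument is essentially the paper's: it likewise observes that two root-to-leaf paths of the same type traverse the same faces of $\PSprime$ in the same order and meet only one $M'$-ply-zero face, so an interleaved path, being unable to cross them, is trapped in that corridor and must have the same type, a contradiction. The only difference is that the paper does not try to decide which of $P''_2$ or $P''_4$ is the trapped one (either gives the contradiction, since $P_a\neq P_b$ and $P_a\neq P_c$), which disposes of the hesitation at the end of your write-up; the Jordan-region/corridor details you flag as the main obstacle are left at the same informal level in the paper's proof.
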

\begin{proof}
    Note that if $\type(P''_1) = \type(P''_3)$, then $P''_1$ and $P''_3$ go through the same faces of $\PSprime$, in the same order, and only visit one $M'$-ply zero face, the starting one.
    Hence, one of $P''_2$ or $P''_4$ has to visit exactly the same faces, and is thus of the same type (see \autoref{fig:pairs-type}).\cqed \qed
\end{proof}

\begin{figure}[!ht]
    \centering
    \includegraphics[width=\textwidth]{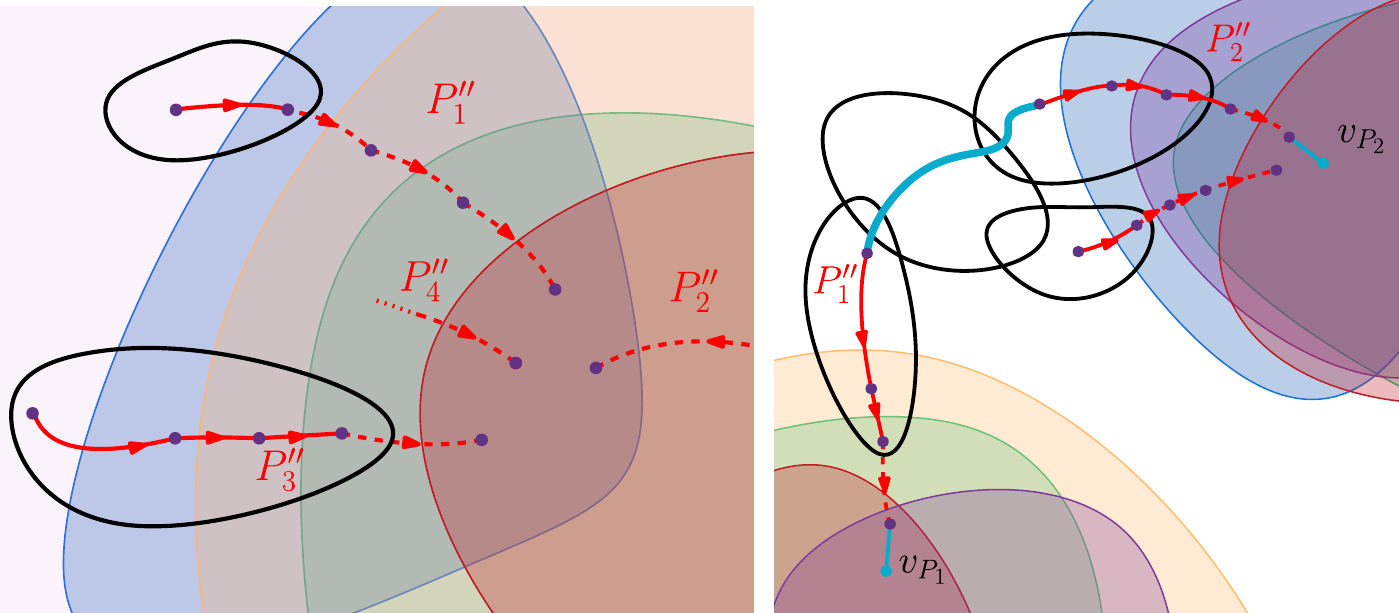}
    \caption{Illustration for the proofs of \autoref{cl:same-type-consecutive} and \autoref{cl:few-degree-two-T}.}
    \label{fig:pairs-type}
\end{figure}

We now show that the number of types is small.
\begin{claim}\label{cl:wV-small}
    The set $\alltypes$ has size $\O(p|M'|)$.
\end{claim}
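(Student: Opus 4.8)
The plan is to charge each type to a feature of the representation $\PSprime$ and use that $\PSprime$ has only $\O(p|M'|)$ edges and faces (as established above from \autoref{lem:pseudo-number-edges}). First observe that every path in $\alltypes$ is monotone for the $M'$-ply and ends at the face of $\PSprime$ carrying the corresponding leaf of $\oHpp{u}$, which has locally maximal $M'$-ply, and this terminal face is uniquely determined by the path. Hence $\alltypes=\bigsqcup_{f}T_f$, where $f$ ranges over the faces of $\PSprime$ of locally maximal $M'$-ply and $T_f$ is the set of types ending at $f$. By planarity $\sum_f \deg_{\PSprime}(f)\le 2|E(\PSprime)|=\O(p|M'|)$, so it suffices to show $|T_f|=\O(\deg_{\PSprime}(f))$ for every such face $f$.

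Fix $f$. Walking once clockwise around $\partial f$ produces a cyclic sequence of those root-to-leaf paths of the graphs $\oHpp{u}$ ($u\in B_{M'}$) whose last arc enters $f$; replacing each such path by its type yields a cyclic word $w_f$ over $\alltypes$, and by \autoref{cl:same-type-consecutive} every type occurring in $w_f$ occurs as a single contiguous block, so $|T_f|$ is the number of blocks of $w_f$. Since all root-to-leaf paths of a given type cross the same sequence of edges of $\PSprime$, all paths inside one block cross the same last edge, which is necessarily a boundary edge of $f$; this assigns a boundary edge $e$ to each block. As the crossing points on a fixed boundary edge $e$ occupy one sub-arc of $\partial f$, the blocks assigned to $e$ are consecutive in $w_f$. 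It remains to bound, for a fixed boundary edge $e$ of $f$ (write $g'$ for the face on the other side of $e$, of $M'$-ply one less than that of $f$), the number $t$ of blocks assigned to $e$. Pick one root-to-leaf path $\rho_i$ in each such block $B_i$; by \autoref{cl:oHpp} the $\rho_i$ are pairwise non-crossing, all cross $e$, and they do so in the order $\rho_1,\dots,\rho_t$. Reading them backward from $e$, non-crossingness forces their successive exit edges from $g'$, then from the next common faces, to appear in weakly monotone order, so the $\rho_i$ split into their distinct types through a laminar family of ``divergence points'', one per index $i\ge 2$, each being a local configuration at some edge $\epsilon$ of $\PSprime$ where two non-crossing arcs reach $\epsilon$ from the same side and separate. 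Using \autoref{cl:same-type-consecutive} together with the fact that each $\oH{u}$ is an arborescence (\autoref{cl:hosted-BM-arbo}), so that a single $\oHpp{u}$ cannot branch and re-merge around a pseudo-disk of $M'$, and the defining property of $M'$ (no pseudo-disk outside $M'$ contains an intersection point of two $M'$-borders), one shows that over all choices of $f$, $e$ these divergence points can be charged to edges of $\PSprime$ with bounded multiplicity. Charging additionally each first block $B_1$ to its edge $e$ then gives $\sum_f|T_f|=\O(|E(\PSprime)|)=\O(p|M'|)$.

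I expect the main obstacle to be exactly this last step, i.e.\ turning the qualitative ``divergence points'' picture into an injective (up to a constant) charging to edges of $\PSprime$: one must argue, using the arborescence structure of the hosted graphs and the consecutiveness property, that a fixed edge of $\PSprime$ cannot simultaneously witness unboundedly many such separations coming from different blocks and different terminal faces. Everything else — the reduction to locally maximal faces, the disjointness $\alltypes=\bigsqcup_f T_f$, the assignment of blocks to boundary edges, and the final summation via the edge/face count of $\PSprime$ — is routine once this is in place.
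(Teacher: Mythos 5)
There is a genuine gap, and you point to it yourself: the entire difficulty of the claim is concentrated in the step where the ``divergence points'' are to be charged to edges of $\PSprime$ with bounded multiplicity, and this step is asserted (``one shows that\ldots'') rather than proved. Nothing in the proposal rules out that a single edge $\epsilon$ of $\PSprime$ carries many distinct types that all agree up to $\epsilon$ and fan out immediately afterwards, producing many divergences at that one edge; bounding such multiplicities globally is exactly the counting problem the claim asks for, so as written the argument is circular at its core. Moreover, the intermediate reduction ``it suffices to show $|T_f|=\O(\deg_{\PSprime}(f))$ for every locally maximal face $f$'' is too strong and appears to be false: many types can enter $f$ through the \emph{same} boundary edge $e$, having diverged in earlier faces (whose degrees are unrelated to $\deg_{\PSprime}(f)$), so the number of types ending at $f$ need not be controlled by the degree of $f$. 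Your later switch to a global charging over all pairs $(f,e)$ implicitly concedes this, but again without a proof.

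The paper's proof avoids all of this with a much shorter argument that you may want to compare against: build the bipartite plane multigraph $B=(F_0,F_{\rm max},\alltypes)$ whose vertex classes are the $M'$-ply-zero faces and the locally maximal faces of $\PSprime$, with one edge per type $P\in\alltypes$, drawn along a representative root-to-leaf path $P''$ with $\type(P'')=P$; planarity is immediate from the non-crossing property of the drawings $\oHpp{u}$ (\autoref{cl:oHpp}). The number of vertices is $\O(p|M'|)$, and the multigraph issue is handled by observing that every face of $B$ of degree two must contain a vertex of $\PSprime$, so there are only $\O(p|M'|)$ such faces; Euler's formula then bounds $|E(B)|=|\alltypes|$ by $\O(p|M'|)$. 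In short: counting types as edges of one planar multigraph replaces the delicate per-edge charging your proposal would still need to carry out.
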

\begin{proof}
    Consider the bipartite plane multigraph $B=(F_0,F_{\rm max}, \alltypes)$ defined as follows. The part $F_0$ corresponds to the faces of $\PSprime$ with $M'$-ply zero, the part $F_{\rm max}$ to those with locally maximal $M'$-ply, and there is an edge from $f\in F_0$ to $f'\in F_{\rm max}$ for each path $P$ of $\alltypes$ linking these two faces. These edges can be drawn by following one of the directed paths $P''$ such that $\type(P'')=P$. The planarity of $B$ follows from the fact that such paths $P''$ are not crossing.

    This graph has $\O(p|M'|)$ vertices (since those are faces of $\PSprime$), and $\O(p|M'|)$ faces of degree two (as any such face contains a vertex of $\PSprime$). By Euler's formula the number of edges is also $\O(p|M'|)$.\cqed \qed
\end{proof}  

We are now going to consider pairs of paths of $\alltypes$, which appear in the same set $\types(T)$, for some tree $T$ of $G-M' -I_{M'}$.
\begin{claim}\label{cl:few-degree-two-T}
    There are $\O(p|M'|)$ pairs $\{P_1,P_2\} \subseteq \alltypes$ such that $\types(T)=\{P_1,P_2\}$, for some tree $T$ of $G-M' -I_{M'}$.
\end{claim}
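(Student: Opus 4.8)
The claim to prove is that there are only $\O(p|M'|)$ pairs $\{P_1,P_2\}\subseteq\alltypes$ realized as $\types(T)=\{P_1,P_2\}$ for some tree $T$ of $G-M'-I_{M'}$. The strategy mirrors the proof of \autoref{cl:wV-small}: I would encode such pairs as edges of an auxiliary plane (multi)graph on $\O(p|M'|)$ vertices with $\O(p|M'|)$ faces of small degree, so that Euler's formula forces $\O(p|M'|)$ edges. The natural vertex set is again the faces of $\PSprime$ of locally maximal $M'$-ply (the endpoints of the root-to-leaf paths), since a tree $T$ with $\types(T)=\{P_1,P_2\}$ only ``touches'' two such faces, namely the terminal faces $f_1,f_2$ of $P_1$ and $P_2$.

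\textbf{Key steps.} First I would fix, for each such pair $\{P_1,P_2\}$, the two locally-maximal faces $f_1=$ (endpoint of $P_1$) and $f_2=$ (endpoint of $P_2$) given by \autoref{cl:oHpp}, and draw an edge of the auxiliary graph $B'$ between $f_1$ and $f_2$. The edge itself is drawn by concatenating: a root-to-leaf path $P''_1$ of some $\oHpp{u_1}$ of type $P_1$ (with $u_1\in V(T)$), then following through the region $\mathcal{R}_T\setminus\RMprime$ — which by \autoref*{cl:pseudo-disk-abc}\eqref{e:ee} is connected — and reaching a root-to-leaf path $P''_2$ of type $P_2$ inside $\mathcal{R}_T$, traversed backwards. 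Since $\mathcal{R}_T\setminus\RMprime$ is connected and the $\oHpp{u}$ drawings are pairwise non-crossing (\autoref{cl:oHpp}), and since distinct trees occupy disjoint regions of the plane, these curves for distinct pairs can be routed without crossings; hence $B'$ is plane. Second, I would bound its combinatorial size: the vertex set has size $\O(p|M'|)$ (faces of $\PSprime$), and I need that $B'$ has $\O(p|M'|)$ faces of degree $\le 2$ — each such face should contain a vertex of $\PSprime$, so there are $\O(p|M'|)$ of them. Third, apply Euler's formula exactly as in \autoref{cl:wV-small}: a plane multigraph with $V$ vertices in which every face has degree $\ge 3$ except for $O(V)$ faces has $O(V)$ edges, giving the bound.

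\textbf{Main obstacle.} The delicate point is the planarity/non-crossing argument, and in particular ensuring each length-$\le 2$ face of $B'$ contains a vertex of $\PSprime$ (equivalently that two edges of $B'$ between the same pair $f_1,f_2$ bound a face containing a $\PSprime$-vertex). Two different trees $T,T'$ with the same type-pair give rise to two parallel edges $f_1f_2$; one must argue they are routed through genuinely different parts of the plane — this follows because $\mathcal{R}_T$ and $\mathcal{R}_{T'}$ are regions of two distinct components of $G-M'-I_{M'}$, hence vertex-disjoint pseudo-disk unions, so the two curves are separated and the region between them necessarily crosses a border of $M'$, i.e.\ contains a vertex of $\PSprime$. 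The same reasoning (together with \autoref{cl:same-type-consecutive}, which already controls how paths of a given type leave a locally-maximal face) handles the case where a single tree contributes: a bigon would have to enclose a $\PSprime$-vertex or else the two paths would have the same type and not form a pair. Once this is pinned down, the counting is routine Euler's-formula bookkeeping, identical in form to \autoref{cl:wV-small}.
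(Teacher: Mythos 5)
Your overall strategy (an auxiliary plane graph plus an Euler-formula count) is in the right spirit, but you chose a harder encoding than the paper, and the step carrying all the weight has a gap. The paper puts one vertex $v_P$ per type $P\in\alltypes$, drawn in the terminal face of $P$ in the order given by \autoref{cl:same-type-consecutive}, and one edge per realized pair, drawn along representatives $P''_1,P''_2$ and joined inside the pseudo-disks of the corresponding tree; disjointness of the tree regions and the non-crossing of the $\oHpp{u}$'s make this graph plane, and it is \emph{simple} by construction, so planarity alone bounds its number of edges by $\O(|\alltypes|)=\O(p|M'|)$ via \autoref{cl:wV-small} --- no analysis of small faces is needed. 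In your version the vertices are the locally-maximal faces, so distinct pairs sharing their two terminal faces produce parallel edges (note: a single type-pair realized by several trees produces only \emph{one} edge in your own construction, not the parallel edges you describe), and your count hinges on showing that every face of the auxiliary multigraph of degree at most two contains a vertex of $\PSprime$.

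That is where the gap is. Your justification --- the regions $\mathcal{R}_T$ and $\mathcal{R}_{T'}$ are disjoint, hence ``the region between them necessarily crosses a border of $M'$, i.e.\ contains a vertex of $\PSprime$'' --- does not work: containing an arc of some $\pD_v$ with $v\in M'$ is not the same as containing a vertex of $\PSprime$ (those are crossing points of two borders), and disjointness of the tree regions is not what forces such a point. Indeed, at the end where the two parallel edges follow root-to-leaf paths of the \emph{same} type (exactly the situation your argument addresses), the two curves traverse the same corridor of faces and cross the same arcs of $\PSprime$, and nothing forces a $\PSprime$-vertex between them. What does force a vertex inside a bigon (or inside a loop face, arising when the two types of one pair end in the same face --- a case you only wave at) is that the two pairs, being distinct, involve two distinct types ending at a common face; their dual paths must diverge, and at the face where the corresponding curves re-converge they enter through distinct arcs of $\PSprime$, so the stretch of that face's boundary caught between the two curves contains an arc endpoint, i.e.\ a vertex of $\PSprime$. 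This divergence argument (combined with \autoref{cl:same-type-consecutive}) could repair your proof, but it is absent from your write-up; alternatively, since you already have $|\alltypes|=\O(p|M'|)$ from \autoref{cl:wV-small}, switching to the paper's vertex set makes the auxiliary graph simple and renders the whole small-face analysis unnecessary.
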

\begin{proof}
    We are again going to construct a planar graph (actually a plane graph).
    Its vertices $v_P$ are in bijection with the paths $P\in \alltypes$, and they are drawn in the face of $\PSprime$ corresponding to the high $M'$-ply endpoint of $P$. The vertices $v_P$ in the same face $f$ are drawn close to its border, in the same order around, as the one given by \autoref{cl:same-type-consecutive}. 

    Then we draw an edge from $v_{P_1}$ to $v_{P_2}$ for each pair $\{P_1,P_2\} \subseteq \alltypes$ such that $\types(T)=\{P_1,P_2\}$, for some tree $T$ of $G-M' -I_{M'}$. To draw such an edge, consider a given tree $T$ of $G-M' -I_{M'}$ such that $\types(T)=\{P_1,P_2\}$. Then, consider paths $P''_1$, $P''_2$ of $\oHpp{u_1}$, $\oHpp{u_2}$ for some vertices $u_1$, $u_2$ of $T$ (possibly $u_1=u_2$) such that $\type(P''_1)=P_1$ and $\type(P''_2)=P_2$.
    The edge from $v_{P_1}$ to $v_{P_2}$ can then be drawn by following $P_1''$ and $P''_2$ and connecting them inside of the union of the pseudo-disks of $T$ (see \autoref{fig:pairs-type}). Since the considered trees are not adjacent (they are connected components of $F-M'-I_{M'}$), the unions of their respective pseudo-disks are pairwise disjoint hence the edges drawn as described above do not cross. This shows that the graph we construct is planar.
    Finally this graph being planar and simple its number of edges is linear in terms of its number of vertices. Hence, there are $\O(|\alltypes|)=\O(p|M'|)$ pairs $\{P_1,P_2\} \subseteq \alltypes$ such that $\types(T)=\{P_1,P_2\}$, for some tree $T$ of $G-M' -I_{M'}$.\cqed \qed
\end{proof}

In the next section, we will bound the number of trees $T$ of $G-M'-I_{M'}$ such that $\types(T)= \{P_1,P_2\}$ for some pair $\{P_1,P_2\}$, and we will bound their size. We will also have to deal with trees $T$ such that $\numtypes(T)\ge 3$. This is why we need the following property.
\begin{claim}\label{cl:few-large-degree-T}
\[
\sum_{\substack{T \in G-M' -I_{M'}\\ \text{ with}\numtypes(T)\ge 3}} \numtypes(T) = \O(p|M'|).
\]
\end{claim}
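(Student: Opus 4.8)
The plan is to mimic the discharging/planarity argument of Claim~\ref{cl:few-degree-two-T}, but now route an entire out-star rather than a single edge for each tree $T$ with $\numtypes(T)\ge 3$. Concretely, I would build a plane graph $B^\star$ as follows. Keep the same vertex set as in Claim~\ref{cl:few-degree-two-T}: a vertex $v_P$ for each $P\in\alltypes$, drawn in the face of $\PSprime$ that is the high-$M'$-ply endpoint of $P$, placed near the boundary in the cyclic order provided by Claim~\ref{cl:same-type-consecutive}. For each tree $T$ of $G-M'-I_{M'}$ with $\numtypes(T)=t\ge 3$, pick one representative root-to-leaf path $P''_P$ of some $\oHpp{u}$ with $u\in V(T)$ for each type $P\in\types(T)$, add a new "center" vertex $c_T$ drawn somewhere inside $\mathcal R_T\setminus\RMprime$ (non-empty and connected by \autoref*{cl:pseudo-disk-abc}\eqref{e:ee}), and connect $c_T$ to each $v_P$, $P\in\types(T)$, by a curve that first goes along $P''_P$ and then closes up inside $\mathcal R_T$. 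As in Claim~\ref{cl:few-degree-two-T}, the pieces inside distinct $\mathcal R_T$'s are pairwise disjoint because the trees are non-adjacent components, and the along-$P''$ portions do not cross because the $\oHpp{}$'s were constructed crossing-free; so $B^\star$ is plane. The degree of $c_T$ in $B^\star$ is exactly $\numtypes(T)$, so $\sum_T \numtypes(T)$ (summed over trees with $\numtypes(T)\ge3$) equals the number of edges of $B^\star$ incident to center vertices, which is at most $|E(B^\star)|$.

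Now I bound $|E(B^\star)|$ by Euler's formula. The number of non-center vertices is $|\alltypes|=\O(p|M'|)$ by Claim~\ref{cl:wV-small}. The center vertices are a problem only if there are too many of them, but each center has degree $\ge 3$, so here I would invoke \autoref{lem:planar-linear-size} with $A$ the set of center vertices and $B$ the rest (after possibly first deleting multi-edges/parallel classes — but note two centers $c_T,c_{T'}$ can share at most\dots actually $B^\star$ may have parallel edges between a center and a $v_P$ only if two trees picked the same type, which cannot happen since $v_P$ is used by $c_T$ only when $P\in\types(T)$ and distinct trees are distinct components, yet the same $P$ could lie in $\types(T)$ and $\types(T')$; to be safe, bound the number of such shared incidences separately using Claim~\ref{cl:few-degree-two-T}-style planarity or just keep multiplicities bounded). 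Taking $B^\star$ as a plane bipartite graph with all $A$-vertices of degree $\ge 3$, \autoref{lem:planar-linear-size} gives $|A|\le 2|B| = \O(p|M'|)$, and then a plane (multi-)graph with $\O(p|M'|)$ vertices and bounded edge multiplicity has $\O(p|M'|)$ edges. Hence $\sum_T\numtypes(T)=\O(p|M'|)$, as claimed.

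The main obstacle I anticipate is the possibility of parallel edges in $B^\star$: a single type $P$ can belong to $\types(T)$ for several different trees $T$, so several centers attach to the same $v_P$, and more subtly two centers could be joined by several routes. Parallel edges break the naive "planar $\Rightarrow$ linearly many edges" step. I would handle this by bounding, via Claim~\ref{cl:same-type-consecutive}, how the along-$f$ crossing orders constrain which trees can reach a given $v_P$: each occurrence of type $P$ at face $f$ corresponds to a distinct root-to-leaf path entering $f$, and such paths are crossed consecutively by type, so the "slots" of type $P$ around $f$ are a contiguous block; cutting $B^\star$ along these blocks and re-expressing each center's incidences as entering that block shows the multigraph is, up to a factor, simple. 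Alternatively — and this is cleaner — I would avoid centers entirely: for a tree with $\numtypes(T)=t\ge 3$ choose a spanning path/star on its $t$ types and add $t-1$ edges (forming a tree on $\{v_P : P\in\types(T)\}$) drawn inside $\mathcal R_T$; then $\sum_T(\numtypes(T)-1)=\sum_T(t-1)$ equals the number of added edges, these edges are drawn in pairwise-disjoint regions so the whole graph is plane and simple, hence has $\O(|\alltypes|)=\O(p|M'|)$ edges, and since $\numtypes(T)\ge 3$ implies $\numtypes(T)\le \tfrac32(\numtypes(T)-1)$ we get $\sum_T\numtypes(T)=\O(p|M'|)$. This variant sidesteps the multiplicity issue and is the route I would actually write up.
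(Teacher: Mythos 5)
Your first construction (the one with a center vertex $c_T$ of degree $\numtypes(T)$ for each tree, joined to the vertices $v_P$, $P\in\types(T)$, by curves that follow a representative root-to-leaf path of some $\oHpp{u}$ and close up inside $\mathcal{R}_T$) is exactly the paper's proof: the paper introduces a vertex $v_T$ per tree with $\numtypes(T)\ge 3$, drawn in the $M'$-ply-zero face met by the pseudo-disks of $T$, routes the edges the same way, and concludes via \autoref{lem:planar-linear-size}. Moreover, your worry about parallel edges in that construction is unfounded: the graph is bipartite between centers and type-vertices, two centers are never adjacent, several centers attaching to the same $v_P$ only raises the degree of $v_P$ (which is harmless), and a given center attaches to a given $v_P$ at most once since you add one edge per pair $(T,P)$ with $P\in\types(T)$. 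So the graph is simple by construction, all center vertices have degree at least $3$, and $\sum_T\numtypes(T)=|E|=\O(|\alltypes|)=\O(p|M'|)$ follows from $|A|\le 2|B|$ together with the edge bound for simple bipartite planar graphs.

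The variant you say you would actually write up is the one with a real gap. If for each tree you add a spanning tree on $\{v_P: P\in\types(T)\}$, two distinct trees $T,T'$ may well have the same pair $\{P_1,P_2\}$ among their types and both select the edge $v_{P_1}v_{P_2}$; nothing in the construction prevents unboundedly many trees from doing so, and nothing forces a vertex of your graph to lie between two such parallel curves. \autoref{cl:few-degree-two-T} does not help here: it only counts the distinct pairs $\{P_1,P_2\}$ that arise as $\types(T)$ for trees with \emph{exactly} two types, not the number of trees whose type sets \emph{contain} a given pair. Since the quantity you must bound, $\sum_T(\numtypes(T)-1)$, counts edges with multiplicity, passing to the underlying simple planar graph loses exactly the count you need, and the bound $|E|=\O(|V|)$ fails for plane multigraphs with unbounded parallel classes. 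This is precisely why the paper keeps one vertex per tree: it makes the graph simple for free. Keep your bipartite center construction and drop the ``cleaner'' variant.
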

\begin{proof}
    Again, this is shown by constructing a planar graph. Here, it will be simple and bipartite, with one part containing one vertex $v_T$ per tree $T \in G-M' -I_{M'}$ such that $\numtypes(T)\ge 3$, and one part containing one vertex $v_P$ per path $P\in \alltypes$.
    The vertices $v_P$ are drawn as in the proof above, while the vertices $v_T$ are drawn in the $M'$-ply zero face of $\PSprime$ intersecting the pseudo-disks of $T$ (by \autoref*{cl:pseudo-disk-abc}\eqref{e:ee}).

    Then we draw an edge from $v_{T}$ to $v_{P}$ if and only if $P\in \types(T)$. To draw such an edge, consider a path $P''$ of $\oHpp{u}$ for some vertices $u$ of $T$ such that $\type(P'')=P$. Then starting from $v_P$ we can follow $P''$ and then continue inside of the union of the pseudo-disks of $T$ up to $v_T$, similarly to what we did in the proof of \autoref{cl:few-degree-two-T} (see \autoref{fig:deg3-type}).
    Finally, this graph being bipartite planar and simple, \autoref{lem:planar-linear-size} allows us to conclude.\cqed \qed
\end{proof}
\begin{figure}
    \centering
    \includegraphics[width=\textwidth]{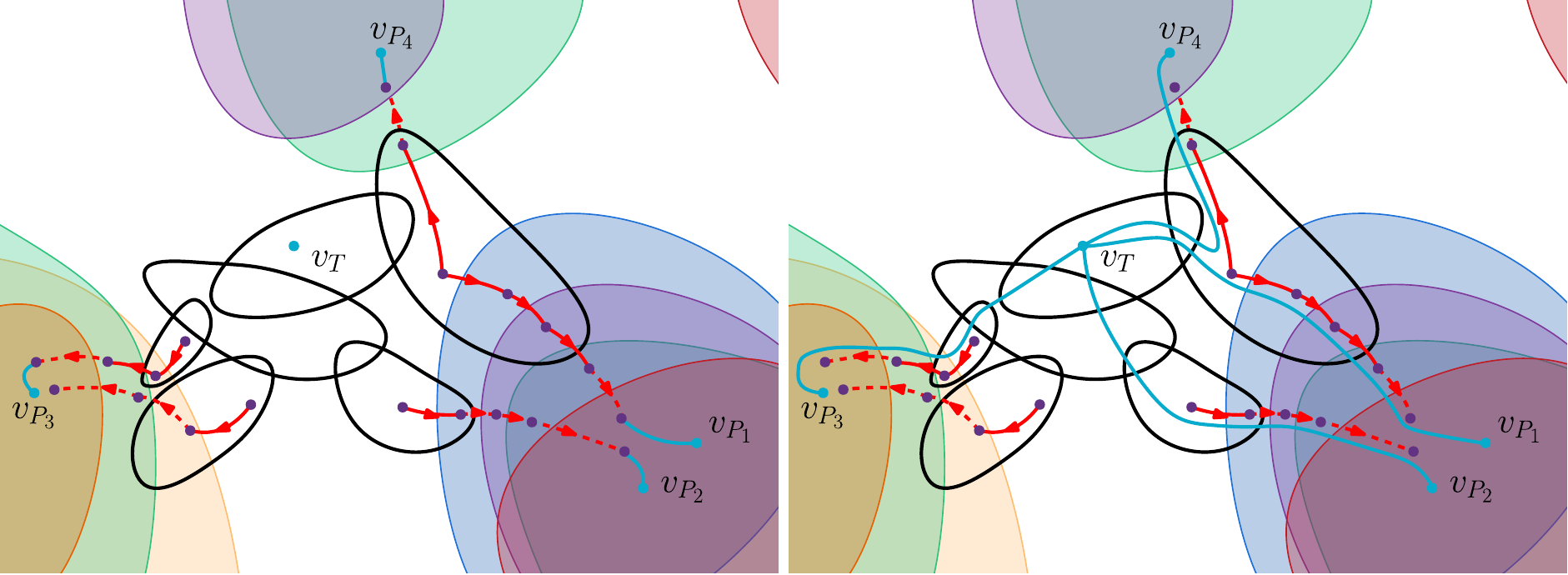}
    \caption{Illustration for the proof of \autoref{cl:few-large-degree-T}.}
    \label{fig:deg3-type}
\end{figure}

In the next section, we will bound the size of any tree $T$  of $G-M' -I_{M'}$ by $\O(\numtypes(T))$, and together with the previous claims, this will allow us to bound $|B_{M'}\cup O_{M'}|$.
Towards this goal, we have to introduce the following notions.

Given a path (type) $P\in \alltypes$, we denote by $K_P$ the $q$-tuple $(v_1,\ldots,v_q)$, where $v_i\in M'$ is the $i^{th}$ pseudo-disk entered by $P$, and thus $q$ is the $M'$-ply of the final vertex of $P$.

Note that since $P$ is a directed path and each arc goes from a region of lower $M'$-ply to a region of higher $M'$-ply, $\{v_1, \dots, v_q\}$ actually forms a clique, hence the name $K_P$. 
Let  $\ktypes{} = \{K_P\ |\ P\in \alltypes\}$.
\begin{definition}
\label{def:KmonoNeighborhood}
    Given a $q$-clique of $G[M']$, for some $q\le p$, and an ordering of its vertices $K = (v_1,\ldots,v_{q})$ (here $K$ is a $q$-tuple), a vertex $u$ has \emph{$K$-monotone neighborhood} if there is an integer $q_u$ such that:
    \[
    N(u)\cap M'= \{v_1,\ldots,v_{q_u}\}.
    \]
    In this context, $q_u$ is the \emph{$K$-index} of vertex $u$.  
    Given a $q$-clique and a $q'$-clique of $G[M']$ (possibly sharing some vertices), for some $q,q'\le p$, and two orderings of their vertices $K = (v_1,\ldots,v_{q})$ and $K' = (v'_1,\ldots,v'_{q'})$, a vertex $u$ has \emph{$(K,K')$-monotone neighborhood} if there exist two integers $q_u,q'_u$ such that:
    \[
    N(u)\cap M'= \{v_1,\ldots,v_{q_u}\}\cup \{v'_1,\ldots,v'_{q'_{u}}\}.
    \]
    In this context, $q_u$ and $q'_u$ are the \emph{$K$-index} and the \emph{$K'$-index} of vertex $u$, respectively.
\end{definition}

Let us see how $\numtypes(u)$ relates to monotone neighborhoods.
\begin{claim}\label{cl:wd1}
    Given a vertex $u\in B_{M'}$ such that $\numtypes(u)=1$, and given $K=(v_1,\ldots, v_q)$ the ordered list of the borders of $M'$ crossed by the unique path $P\in \types(u)$. Then the vertex $u$ has a $K$-monotone neighborhood.
\end{claim}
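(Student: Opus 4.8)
The plan is to unfold the definitions. Since, by construction of $\oHpp{u}$, the quantity $\numtypes(u)$ equals the number of leaves of $\oH{u}$, and $\oH{u}$ is an arborescence by \autoref{cl:hosted-BM-arbo}, having $\numtypes(u)=1$ means $\oH{u}$ has a single leaf and is therefore a directed path from its root $r$ — which lies in a face of $\PSprime$ of $M'$-ply zero — to that leaf. First I would observe that the $M'$-ply strictly increases along $\oH{u}$: it changes by exactly $1$ whenever the path crosses a border of $\PSprime$ and each arc is oriented towards the higher value, so along the path one enters a sequence of \emph{pairwise distinct} pseudo-disks of $M'$ and never leaves any of them afterwards. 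Denote by $v_1,\dots,v_{q_u}$ these pseudo-disks in the order in which their borders are crossed; the $i$-th face visited is then contained in exactly $\D_{v_1},\dots,\D_{v_i}$. Because $\oHpp{u}$ is obtained from $\oH{u}$ by prolonging it (\autoref{cl:oHpp}), it is again a single path, and the borders it crosses are $\pD_{v_1},\dots,\pD_{v_{q_u}}$ followed by possibly further borders of $M'$; hence the tuple $K=K_P=(v_1,\dots,v_q)$ associated with the unique $P\in\types(u)$ agrees with the list above on its first $q_u$ coordinates. It remains to prove that $N(u)\cap M'=\{v_1,\dots,v_{q_u}\}$, which yields the $K$-monotone neighborhood with $K$-index $q_u$.

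The inclusion $\{v_1,\dots,v_{q_u}\}\subseteq N(u)\cap M'$ is immediate: for $i\le q_u$ the arc of $\oH{u}$ crossing $\pD_{v_i}$ is drawn inside $\D_u$, so $\D_{v_i}\cap\D_u\ne\emptyset$. For the converse, fix $v\in N(u)\cap M'$. Since $u\in B_{M'}$ we have $\D_u\not\subseteq\D_v$, and therefore $\D_v\cap\D_u\ne\emptyset$ forces $\pD_v$ to reach the interior of $\D_u$ — either because $\pD_v$ crosses $\pD_u$, or because $\D_v\subsetneq\D_u$, in which case $\pD_v$ itself is interior to $\D_u$. Pick a point $x\in\pD_v$ in the interior of $\D_u$. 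Recalling from the proof of \autoref{cl:hosted-tree} that the interior of $\D_u$ contains no vertex of $\PSprime$, the point $x$ lies in the relative interior of an arc of $\PSprime$, which is a piece of $\pD_v$ and separates two faces of $\PSprime$; both faces meet $\D_u$ near $x$, so $H_u$ has a vertex in each of them joined by an edge, and that edge crosses $\pD_v$. Since the edges of the path $H_u$ cross precisely $\pD_{v_1},\dots,\pD_{v_{q_u}}$, we conclude $v\in\{v_1,\dots,v_{q_u}\}$.

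The delicate point is this last inclusion: one must check that every $M'$-neighbour of $u$ is genuinely witnessed as a crossing of the path $H_u$. This amounts to excluding the degenerate way $\D_v$ and $\D_u$ could intersect with $\D_u\subseteq\D_v$ (ruled out by $u\in B_{M'}$) and then using the crossing-freeness of the interior of $\D_u$ to transfer the topological fact ``$\pD_v$ enters $\D_u$'' into the combinatorial fact ``an edge of $H_u$ crosses $\pD_v$''. The rest is bookkeeping with the definitions of $\oH{u}$, $\oHpp{u}$, and $K_P$.
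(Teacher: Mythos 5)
Your proof is correct and follows essentially the same route as the paper's: it uses that $\oHpp{u}$ prolongs the unique path of $\oH{u}$ (so the crossing order is a prefix of $K_P$) and that $N(u)\cap M'$ is exactly the set of borders crossed by that path. You simply spell out in detail (strict ply increase from the ply-zero root, crossing-freeness of the interior of $\D_u$ to witness every $M'$-neighbour as a crossed border) what the paper dismisses as "clearly follows".
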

\begin{proof}
    This clearly follows from the fact that $\oHpp{u}$ is constructed by prolonging the unique path of $\oHp{u}$, and the fact that $N(u)\cap M'$ is exactly the set of borders crossed by this path of $\oHp{u}$.
    Note that if the last border crossed by $\oHp{u}$ is the border of $v_i$, then the $K$-index of $u$, $q_u$, is $i$.
    \cqed \qed
\end{proof}
\begin{claim}\label{cl:wd2}
    Given a vertex $u\in B_{M'}$ such that $\numtypes(u)=2$, and given the ordered lists of the borders of $M'$ crossed by the two paths in $\types(u)$, $K=(v_1,\ldots, v_q)$ and $K'=(v'_1,\ldots, v'_{q'})$. Then the vertex $u$ has a $(K,K')$-monotone neighborhood.
\end{claim}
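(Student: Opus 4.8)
The plan is to argue as for \autoref{cl:wd1}, the new feature being that when $\numtypes(u)=2$ the directed hosted graph $\oH{u}$ is a path whose root is \emph{internal}, so it splits into two directed sub-paths that will supply the two prefixes of $K$ and $K'$.

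First I would pin down the shape of $\oH{u}$. Recall that $\numtypes(u)$ equals the number of leaves of $\oH{u}$; a tree with exactly two leaves is a path, and since $\oH{u}$ is an arborescence (\autoref{cl:hosted-BM-arbo}) its root $r$ cannot be an endpoint of that path — otherwise the path would be directed away from $r$ and have a single leaf. Hence $r$ is the unique internal vertex and $\oH{u}=Q_1\cup Q_2$ with $Q_1\colon r\leadsto\ell_1$ and $Q_2\colon r\leadsto\ell_2$ directed paths meeting only in $r$. Each edge of $\oH{u}$ crosses exactly one arc of $\PSprime$, which lies on the border of a unique pseudo-disk of $M'$, and consecutive faces along a directed edge differ in $M'$-ply by exactly $1$; so the $M'$-ply strictly increases along $Q_1$ and along $Q_2$, whence $Q_1$ enters $q_u:=|E(Q_1)|$ pairwise distinct pseudo-disks of $M'$ and never leaves one, and $Q_2$ enters $q'_u:=|E(Q_2)|$ distinct ones.

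Next I would transport this to $K$ and $K'$. Since $\oHp{u}$ is a subdivision of an out-star with the same root and leaves as $\oH{u}$, matching path-lengths, crossing the same borders in the same order, and since $\oHpp{u}$ only prolongs these paths while keeping them directed (so the $M'$-ply keeps strictly increasing), the tuple $K=(v_1,\dots,v_q)$ — the list of borders entered by the element of $\types(u)$ coming from $Q_1$ — begins with the $q_u$ borders crossed by $Q_1$, in the order $Q_1$ crosses them, and symmetrically $K'=(v'_1,\dots,v'_{q'})$ begins with the $q'_u$ borders of $Q_2$. Consequently $\{v_1,\dots,v_{q_u}\}\cup\{v'_1,\dots,v'_{q'_u}\}$ is exactly the set of borders of $M'$ crossed by $\oH{u}$.

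The remaining step, which I expect to be the real obstacle, is to identify this set with $N(u)\cap M'$ (this is the identity implicitly used in \autoref{cl:wd1}). The inclusion $\supseteq$ is immediate, since a crossed border $\pD_v$ forces $\D_u\cap\D_v\neq\emptyset$. For $\subseteq$, I would take $v\in M'$ with $\D_u\cap\D_v\neq\emptyset$; as $u\in B_{M'}$ we have $\D_u\not\subseteq\D_v$, and by the argument of \autoref{cl:hosted-tree} the pseudo-disk $\D_u$ contains no crossing point of two borders, hence no vertex of $\PSprime$. Under the generic-position assumptions on the system this forces $\D_u$ to contain, in its interior, a point in the relative interior of some arc $e\subseteq\pD_v$ of $\PSprime$; the two faces of $\PSprime$ incident to $e$ are then both met by $\D_u$, so $H_u$ carries an edge crossing $e$, i.e. $\oH{u}$ crosses $\pD_v$. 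The only slightly delicate subcase is $\D_v\subsetneq\D_u$, where $\pD_v$ is a loop of $\PSprime$ lying inside $\D_u$; there one uses that $\D_u$ is connected and not contained in $\D_v$, so it reaches both sides of $\pD_v$ and $\oH{u}$ still crosses it. Combining the inclusions with the previous paragraph yields $N(u)\cap M'=\{v_1,\dots,v_{q_u}\}\cup\{v'_1,\dots,v'_{q'_u}\}$, i.e. $u$ has a $(K,K')$-monotone neighborhood with $K$-index $q_u$ and $K'$-index $q'_u$.
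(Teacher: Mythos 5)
Your proposal is in substance the paper's argument, just written out in full: the paper's proof consists precisely of the two assertions you verify, namely that $\oHpp{u}$ is obtained by prolonging the root-to-leaf paths of $\oHp{u}$ (so each branch of $\oH{u}$ crosses, in order, a prefix of the corresponding type), and that $N(u)\cap M'$ is exactly the set of borders crossed by these branches. Your more careful justification of the second fact (no vertex of $\PSprime$ inside $\D_u$, hence any intersecting $\pD_v$ is met along the relative interior of an arc, which yields an edge of $H_u$) is a sound elaboration of what the paper leaves implicit.

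One caveat: your opening structural claim, that $H_u$ is a path with the root $r$ in its interior and $Q_1,Q_2$ meeting only in $r$, does not follow from $\numtypes(u)=2$. That quantity counts the leaves (sinks) of the arborescence $\oH{u}$, not the leaves of the underlying tree: the root may have out-degree one and the branching may occur further down (a ``Y'' whose root sits at the end of one leg), in which case the two root-to-leaf paths share a common prefix. This slip is harmless for your proof, because you never actually use the disjointness: with exactly two sinks the two root-to-leaf paths still cover every edge of $\oH{u}$, the $M'$-ply still strictly increases along each of them, and \autoref{def:KmonoNeighborhood} explicitly allows $K$ and $K'$ to share vertices (in the ``Y'' case the shared prefix gives $v_1=v'_1$, etc.), so the conclusion $N(u)\cap M'=\{v_1,\dots,v_{q_u}\}\cup\{v'_1,\dots,v'_{q'_u}\}$ stands; just drop or weaken the path claim.
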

\begin{proof}
    Again, this clearly follows from the fact that $\oHpp{u}$ is constructed by prolonging the two paths of $\oHp{u}$, and the fact that $N(u)\cap M'$ is exactly the set of borders crossed by these paths of $\oHp{u}$.
    Again, the last borders crossed by the branches of $\oHp{u}$ are those of $v_{q_u}$ and $v'_{q'_u}$, where $q_u,q'_u$ are the $K$- and $K'$-indices of $u$.\cqed \qed
\end{proof}

\subsection{Kernelization rules}\label{sssec:notinM'rules}

Here, we provide kernelization rules for \FVS that will allow us to bound the size of the trees in $G - M' - I_{M'}$. Most of these rules are related to the paths in $\alltypes$ and to the $K$- and $(K,K')$-monotone-neighborhood property satisfied by the vertices $u\in B_{M'}$, such that $\numtypes(u)=1$ or 2, respectively. As the kernelization has to be performed in polynomial time (in terms of $n=|V(G)|$), we observe that within such a time, one can compute the graphs $\oH{u}$, $\oHp{u}$, and  $\oHpp{u}$, and the sets $\type(P'')$, $\types(u)$, $\types(T)$, $\alltypes$, $\ktypes{}$. Furthermore, note that updating these sets when a vertex $v\in M'$ or $u\in V(G)\setminus M'$ is deleted, can also be performed in polynomial time (by \autoref{lem:tech-pseudo-disk-del-contr}). Finally, note that such update is also possible when contracting an edge $uu'$ of $G-M'$ (by \autoref{lem:tech-pseudo-disk-del-contr} also). 

Let us consider the following kernelization rules for \FVS. Those are applied prioritizing the first rules. As those are folklore, we omit the proofs of the two first rules.
\begin{enumerate}[(R1)]
    \item\label{e:r1pd} Delete every vertex $v$ with degree at most one, and maintain the parameter $k$.
    \item\label{e:r2pd} Consider two adjacent vertices $u,u'\notin M'$, such that $d(u)=2$, and such that $u$ and $u'$ do not have a common neighbor. Then, contract the edge $uu'$, and maintain the parameter $k$.
\end{enumerate}      

It should be noted that after applying \ruleref{e:r2pd}, the property that every triangle has at least two vertices in $M'$ might be lost.
Note that the rule \ruleref{e:r2pd} implies that from now on, any degree two vertex $u\notin M'$ is in a triangle, or has its two neighbors in $M'$. Hence in particular, any degree two vertex $u\notin M'$ has at least one neighbor in $M'$.
The following five rules deal with vertices having $K$-monotone neighborhoods with $K\in \ktypes{}$.

\begin{enumerate}[(R1)]
\setcounter{enumi}{2}
    \item \label{e:r3pd} Consider a $q$-tuple $K=(v_1,\ldots,v_{q})\in \ktypes{}$, and two isolated vertices $u,w$ of $G - M'$ having $K$-monotone neighborhoods. If their $K$-indices are such that $1\le q_u\le q_w$, delete $u$ and maintain the parameter $k$.
\end{enumerate}

\begin{proof} (Safeness of \ruleref{e:r3pd})
Let us prove that if $G - u$ has a feedback vertex set of size $k$, so does $G$. Consider a feedback vertex set $X$. Recall that any clique has all its vertices in $X$, except at most two of them.
If $w\in X$, replace it with some vertex of $K\setminus X$, if any. Now we have that there is at most one vertex in $N(w)\setminus X$ (and thus in $N(u)\setminus X$). In other words,
$u$ and $w$ have degree at most one in $G - X$. Thus, since $(G-\{u,w\}) - X$ is a forest (as $X\cup\{w\}$ is a feedback vertex set), so is $G - X$.\cqed \qed
\end{proof}

\begin{enumerate}[(R1)]
\setcounter{enumi}{3}
    \item \label{e:r4pd} Consider a $q$-tuple $K=(v_1,\ldots,v_{q})\in \ktypes{}$, and a  tree $T$ of $G - M'$ whose vertices have $K$-monotone neighborhoods. If $T$ has at least two vertices, delete $v_1$ and decrease the parameter $k$ by one.
\end{enumerate}
\begin{proof} (Safeness of \ruleref{e:r4pd}) 
As $T$ has at least two vertices, it has at least two leaves, say $a$ and $b$.
Since the rule \ruleref{e:r1pd} does not apply for $a$ or $b$, these vertices are both adjacent to $v_1$. Consider a feedback vertex set $X$ not containing $v_1$.
Since there is at least one cycle in $\{v_1\}\cup V(T)$, this feedback vertex set intersects $V(T)$ on at least one vertex, denoted $x$. Let us prove that replacing $x$ with $v_1$ in $X$, we still have a feedback vertex set. Towards a contradiction, we consider a cycle $C$ not hit by this set. This cycle $C$ goes through $x$ and through some vertices of $\{v_1,\ldots,v_{q}\}$, since $T$ is acyclic.
However, since our new hitting set contains at least (and thus exactly) $q-1$ vertices among $\{v_1,\ldots,v_{q}\}$, the cycle has two vertices $a',b'\in V(T)$ adjacent to some vertex $v_i\in K\setminus X$, with $i>1$. Note that $X$ does not hit one of the triangles $a'v_1v_i$ or $b'v_1v_i$, as $a'\neq x$ or $b'\neq x$, a contradiction. We can thus  force $v_1$ in the feedback vertex set.\cqed \qed
\end{proof}

\begin{claim}\label{cl:few-wedg}
    After the kernelization described above, there are $\O(p|M'|)$ vertices in trees $T$ of $G-M'-I_{M'}$ such that $\numtypes(T)\le 1$.
\end{claim}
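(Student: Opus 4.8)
The plan is to split the trees $T$ of $G-M'-I_{M'}$ with $\numtypes(T)\le 1$ according to whether $\numtypes(T)=0$ or $\numtypes(T)=1$, to argue that no tree of the first kind survives the kernelization, and to show that every tree of the second kind has a single vertex, so that it suffices to count single-vertex trees; this count will be bounded by $|\ktypes|\le|\alltypes|=\O(p|M'|)$ through \autoref{cl:wV-small}.

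For the case $\numtypes(T)=0$ I would first note that every border-$M'$-vertex $u$ has $\types(u)\neq\emptyset$: since $\D_u$ meets both $\RMprime$ and its complement, $H_u$ has at least two vertices, so the arborescence $\oH{u}$ (\autoref{cl:hosted-BM-arbo}) has a leaf distinct from its root, hence at least one root-to-leaf path. Therefore $\numtypes(T)=0$ forces $V(T)\cap B_{M'}=\emptyset$, i.e.\ $V(T)\subseteq O_{M'}$. An outer-$M'$-vertex has no neighbour in $M'$ (its pseudo-disk avoids $\RMprime$) and no neighbour in $I_{M'}$ (inner vertices are isolated in $G-M'$ by \autoref*{cl:pseudo-disk-abc}\eqref{e:dd}), so no edge leaves $V(T)$ and $T$ is itself a connected component of $G$, which is a tree. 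Then the repeated leaf-removal performed by rule \ruleref{e:r1pd} deletes all of $T$, so no such tree remains after the kernelization.

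For the case $\numtypes(T)=1$, write $\types(T)=\{P\}$ and let $K_P=(v_1,\dots,v_q)\in\ktypes$ be the associated clique tuple. Every border-$M'$-vertex of $T$ then has $\types(u)=\{P\}$, hence $\numtypes(u)=1$, hence a $K_P$-monotone neighbourhood by \autoref{cl:wd1}; every outer-$M'$-vertex of $T$ has $N(u)\cap M'=\emptyset$ and thus a $K_P$-monotone neighbourhood with $K_P$-index $0$; and $T$ contains no inner-$M'$-vertex. So all of $V(T)$ has a $K_P$-monotone neighbourhood, and, inner vertices being isolated in $G-M'$, the component $T$ is also a tree of $G-M'$. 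If $|V(T)|\ge 2$, rule \ruleref{e:r4pd} would be applicable, contradicting that the kernelization is exhaustive; hence $T=\{u\}$ is a single vertex with $u\in B_{M'}$ and $\numtypes(u)=1$. It then remains to bound the number of such single-vertex trees: each such $u$ is isolated in $G-M'$, has a $K_P$-monotone neighbourhood for $P=\type(u)$, with $K_P$-index at least $1$ (since $u\in B_{M'}$ is adjacent to some vertex of $M'$); if two of them were associated with the same $K\in\ktypes$, rule \ruleref{e:r3pd} would apply to that pair of isolated vertices, again a contradiction, so $u\mapsto K_{\type(u)}$ is injective into $\ktypes$ and there are at most $|\ktypes|\le|\alltypes|=\O(p|M'|)$ of them by \autoref{cl:wV-small}. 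Adding the two cases gives the claim.

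The step I expect to be the most delicate is the reduction to single vertices in the case $\numtypes(T)=1$: one must be sure that \emph{all} vertices of $T$, and not merely its border-$M'$-vertices, share one common tuple $K\in\ktypes$ with respect to which their $M'$-neighbourhoods are monotone, so that rule \ruleref{e:r4pd} genuinely applies; this is exactly where one has to use that outer-$M'$-vertices are vacuously monotone and that $T$ meets neither $M'$ nor $I_{M'}$. Everything else is bookkeeping: invoking \ruleref{e:r1pd}, \ruleref{e:r3pd}, \ruleref{e:r4pd} and \autoref{cl:wV-small} in the correct order, and the minor checks that $G-M'$ stays a forest and that the relevant $K$-indices are nonzero.
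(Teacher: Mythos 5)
Your proof is correct and follows essentially the same route as the paper: rule \ruleref{e:r4pd} forces every such tree to a single vertex, rule \ruleref{e:r3pd} leaves at most one surviving vertex per type (you deduplicate via the tuple $K_P\in\ktypes$ rather than the path $P\in\alltypes$, an equivalent variant), and \autoref{cl:wV-small} gives the $\O(p|M'|)$ bound. Your explicit treatment of the $\numtypes(T)=0$ case via \ruleref{e:r1pd} and of the vacuous $K$-monotonicity of outer-$M'$-vertices fills in details the paper leaves implicit, but does not change the argument.
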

\begin{proof}
Rule \ruleref{e:r4pd} ensures us that every such tree has at most one vertex.
Rule \ruleref{e:r3pd} ensures that there is at most one such tree such that 
$\types(T)=\{P\}$, for any given $P\in \alltypes$.
So in total, there are at most $|\alltypes|=\O(p|M'|)$ such vertices (By \autoref{cl:wV-small}). \cqed \qed
\end{proof}

The following rules deal with the connected components of $G - M'$. This subgraph being acyclic, these connected components are trees. Note that the leaves in these trees have degree at least two in $G$ (by rule \ruleref{e:r1pd}), so they are necessarily adjacent to vertices in $M'$.

\begin{enumerate}[(R1)]
\setcounter{enumi}{4}
    \item \label{e:r5pd} Consider a $q$-tuple $K=(v_1,\ldots,v_{q})\in \ktypes{}$, and a tree $T$ of $G-M'$ (which is non-necessarily a connected component of $G-M'$),  whose vertices have $K$-monotone neighborhoods, and a particular vertex $r\in V(T)$, such that for every other vertex $u\in V(T)\setminus\{r\}$, $N(u) \subseteq K\cup V(T)$.
    If there are two vertex disjoint paths in $T -\{r\}$ with endpoints $a,b,c,d$ adjacent to $v_1$ (i.e. such that $1\le q_a,q_b,q_c,q_d$), delete $v_1$ and decrease the parameter $k$ by one.
\end{enumerate}
\begin{proof} (Safeness of \ruleref{e:r5pd})
Let us show that if $G$ has a feedback vertex set of size $k$ it has one containing~$v_1$. Consider a minimum feedback vertex set $X$ not containing $v_1$. The two cycles formed by the two paths and $v_1$ intersect only on $v_1$, thus $X$ must intersect each of these paths. Let $x,x'\in X$, be two vertices of these paths. We define $X'$ as obtained by replaying in $X$ the vertices $x$ and $x'$ with the at most two vertices of $K\setminus X$. Note that now $K\subseteq X'$, and in particular $v_1\in X'$. Since $(G- (V(T)\setminus \{r\})) - X'$ is a forest (as $X'\cup\{x,x'\}$ is a feedback vertex set), and since $T$ is a tree where only one vertex, $r$, has neighbors outside $V(T)\cup K$, the graph $G -X'$ is also a forest. Rule \ruleref{e:r5pd} is thus safe.\cqed \qed
\end{proof}

\begin{enumerate}[(R1)]
\setcounter{enumi}{5}
    \item \label{e:r6pd} Consider a $q$-tuple $K=(v_1,\ldots,v_{q})\in \ktypes{}$, a vertex $r$ of $G-M'$ adjacent to three leaves $a,b,c$ of $G-M'$, that have $K$-monotone neighborhoods. If their $K$-indices are such that $1\le q_a \le q_b \le q_c$, delete $a$ and maintain the parameter $k$. 
\end{enumerate}
\begin{proof}
Let us show that if $G - a$ has a feedback vertex set of size $k$, $X$, so does $G$. Inducing a clique, the set $K_a = (v_1,\ldots, v_{q_a})$ has at most two vertices not in  $X$. While possible, replace any vertex of $V(T)\setminus \{r\}$ in $X$, by a vertex of $K_a$. After this, we have every vertex of $K_a$ in $X$, or we have one vertex in $K_a\setminus X$, $v_i$, and none of the vertices of $V(T)\setminus \{r\}$ in $X$. In the latter case, we necessarily have $r\in X$. In both cases, the set $X$ is a feedback vertex set for $G$ also. Indeed, if $K_a\subseteq X$, vertex $a$ has at most one neighbor in $G - M'$, $r$, so it cannot create a cycle. If $K_a\subseteq X=\{v_i\}$ and $r\in X$, we also have that vertex $a$ has at most one neighbor in $G - M'$, $v_i$ here, so it cannot create a cycle. Rule \ruleref{e:r6pd} is thus safe.\cqed \qed
\end{proof}

Note that from now on, for every $K\in \ktypes{}$ and every vertex $v\in G-M'$, $v$ is adjacent to at most two leaves of $G-M'$ that have $K$-monotone neighborhoods.

\begin{enumerate}[(R1)]
    \setcounter{enumi}{6}   
    \item \label{e:r7pd} Consider a $q$-tuple $K=(v_1,\ldots,v_{q})\in \ktypes{}$, and a path $P$ in $G-M'$ of length six. If every vertex $u\in V(P)$ has degree two in $G-M'$, and has a $K$-monotone neighborhood in $M'$, delete $v_1$ and decrease the parameter $k$ by one.
\end{enumerate}
\begin{proof} (Safeness of \ruleref{e:r7pd}) 
Let us prove that if $G$ has a feedback vertex set of size~$k$, it has one containing~$v_1$. Let $X$ be a minimum feedback vertex set of $G$.
By \ruleref{e:r2pd}, $v_1$ is adjacent to all the six vertices in $P$. 
The graph induced by $\{v_1\}\cup V(P)$ having three cycles intersecting only at $v_1$, we have that $v_1\in X$ or that $P$ has at least three vertices in $X$. In the first case, we are done. In the second case, we replace any two vertices of $V(P)\cap X$ by the (at most two) vertices in $K\setminus X$ (thus including $v_1$) and let us note that $V(P)\cap X$ still contains at least one vertex.
Since $X$ is a feedback vertex set of $G - V(P)$ (as $G - V(P) - X$ is a subgraph of the forest obtained from the initial feedback vertex set), and since in $G - X$ the vertices of $V(P)\setminus X$ induce paths that each connect with at most one edge to a vertex outside $V(P)$, we have that $G - X$ is indeed a forest.\cqed \qed
\end{proof}

The last three rules deal with vertices having $(K,K')$-monotone neighborhoods with $K, K'\in \ktypes{}$.

\begin{enumerate}[(R1)]
\setcounter{enumi}{7}
    \item\label{e:r8pd} Consider a $q$- and a $q'$-tuple,  $K=(v_1,\ldots,v_{q})$ and $K'=(v'_1,\ldots,v'_{q'})$ of $\ktypes{}$, and a length 60 path $P$ in $G-M'$. If every vertex of $P$ has degree two in $G-M'$, and has a $(K,K')$-monotone neighborhood, delete $v_1$ and $v'_1$, and then decrease the parameter $k$ by two.
\end{enumerate}
\begin{proof} (Safeness of \ruleref{e:r8pd}) 
Let us prove that if $G$ has a feedback vertex set of size $k$, it has one containing $v_1$ and $v'_1$. Let $X$ be a minimum feedback vertex set of $G$. First, note that if $X$ has at least 5 vertices in $P$, then we can replace four of them in $X$ with the vertices of $K\setminus X$, and of $K'\setminus X$.  This would result in another minimum feedback vertex set.

By \ruleref{e:r7pd} $v_1$ has at least 10 neighbors in $P$ (otherwise \ruleref{e:r7pd} would apply with $v'_1$ and a subpath of $P$) so the graph induced by $V(P)\cup \{v_1\}$ contains five cycles intersecting only on $v_1$. Thus, a feedback vertex set avoiding $v_1$ would intersect $P$ on at least five vertices, a contradiction. Thus, we obtain that $v_1 \in X$. We similarly obtain that $v'_1 \in X$.\cqed \qed
\end{proof}

We already bounded the number of vertices in trees $T$ such that $\numtypes(T)\le 1$.
We now focus on trees $T$ such that $\numtypes(T) = 2$.
Given two tuples $K,K'\in \ktypes{}$, let $F_{K,K'}$ be the subforest of $G - M' - I_{M'}$ obtained by keeping the connected components $T$ such that $\types(T)=\{K,K'\}$.

A tree $T$ in $F_{K,K'}$ is said \emph{trivial} if it has only one vertex, say $u$. Clearly, this vertex has a $(K,K')$-monotone neighborhood.
By rule \ruleref{e:r0pd} presented in \autoref{sssec:inM'},  the number of trivial trees in $F_{K,K'}$ by $5p$ (as we have at most $p$ incomparable couples $(q_u,q'_u)$ of $K$- and $K'$-indices).
A tree $T$ in $F_{K,K'}$ is said \emph{semi-trivial} if it has only two vertices adjacent to $M'$, one having a $K$-monotone neighborhood, and the other having a $K'$-monotone neighborhood.
Note that, by rule \ruleref{e:r1pd} and \ruleref{e:r2pd}, $T$ has only two vertices.

\begin{enumerate}[(R1)]
\setcounter{enumi}{8}
    \item \label{e:r9pd} Consider a $q$- and a $q'$-tuple, $K=(v_1,\ldots,v_{q})$ and $K'=(v'_1,\ldots,v'_{q'})$ of $\ktypes{}$, and a semi-trivial connected component $ab$ of $F_{K,K'}$, such that vertex $a$ has $K$-monotone neighborhood, with $K$-index $q_a$, and such that vertex $b$ has $K'$-monotone neighborhood, with $K'$-index $q'_b$. If $F_{K,K'}$ has five other semi-trivial connected components, each of them with $K$-index at least $q_a$, and with $K'$-index at least $q'_b$, delete $a$ and $b$ and maintain the parameter $k$.
\end{enumerate}
\begin{proof}
    (Safeness of \ruleref{e:r9pd}) Let $G' = G-\{a,b\}$, and note that it cannot have a minimum feedback vertex set larger than $G$. Hence, it is sufficient to show that $G'$ has a minimum feedback vertex set $X$, that is also a feedback vertex set for $G$. 

    Let $A = N_G(a)\setminus \{b\} =\{v_1,\ldots ,v_{q_a}\}$ and $B= N_G(b)\setminus \{a\} =\{v'_1,\ldots ,v'_{q'_b}\}$, and let $a_i,b_i$ be the vertices of $m_i$ that have $K$- and $K'$-monotone neighborhood, respectively. Consider any minimum feedback vertex set $X$ of $G'$, and let us transform it (if needed) into a minimum feedback vertex set including all the vertices of $A\cup B$, except possibly one. Observe first that $G[K]-X$ and $G[K']-X$ being acyclic, we have that $|K \setminus X| \le 2$ and $|K' \setminus X| \le 2$.
    
    If $|A\setminus X| = 2$, then each of $\{a_1,\ldots,a_5\} \subseteq X$, as otherwise the two vertices in $A\setminus X$ and $a_i$ would form a triangle in $G' - X$. In such case, defining $X'$ by replacing in $X$ the vertices $a_1,\ldots,a_5$ with the vertices in $K\setminus X$ and in $K'\setminus X$ would result in the desired feedback vertex set. Indeed, $m_1,\ldots,m_5$ form isolated components in $G-X'$, and $G'-\{a_1,\ldots,a_5\}-X'$ is a forest, as $X \subseteq X' \cup \{a_1,\ldots,a_5\}$.
    
    Hence, we consider that $|A\setminus X|\le 1$ and $|B\setminus X|\le 1$. If $|A\setminus X| = 1$ and $|B\setminus X|= 1$, then $X$ contains at least four vertices among $\{a_1,\ldots,a_5\}\cup \{b_1,\ldots,b_5\}$, as otherwise there would be a 6-cycle going through $a_ib_i$ and $a_jb_j$ in $G' \setminus X$, for some $i,j$. In such case, defining $X'$ by replacing in $X$ the four vertices in $X\cap(\{a_1,\ldots,a_5\}\cup \{b_1,\ldots,b_5\})$ with the vertices in $K\setminus X$ and in $K'\setminus X$ would result in the desired feedback vertex set.
    Indeed, $m_1,\ldots,m_5$ form isolated components in $G'-X'$, and $G'-(\{a_1,\ldots,a_5\}\cup \{b_1,\ldots,b_5\}\cup X')$ is a forest, as $X \subseteq \{a_1,\ldots,a_5\}\cup \{b_1,\ldots,b_5\} \cup X'$.

    We thus have a minimum feedback vertex set of $G'$ including all the vertices of $A\cup B$, except possibly one. Since $ab$ has zero or one incident edge towards $G' - X$, we have that $G - X$ is a forest. \cqed \qed
\end{proof}
As for trivial trees, this rule allows us to bound the number of semi-trivial trees in $F_{K,K'}$ by $5p$ (as we have at most $p$ incomparable couples $(q_a,q'_b)$ of $K$- and $K'$-indices).

\begin{enumerate}[(R1)]
\setcounter{enumi}{9}
    \item \label{e:r10pd} If there are two tuples $K,K'\in \ktypes{}$, such that $F_{K,K'}$ 
     has four disjoint paths with endpoints adjacent to $v_1\in K$, delete $v_1$ and decrease the parameter $k$ by one.
\end{enumerate}
\begin{proof} (Safeness of \ruleref{e:r10pd}) 
Indeed, every feedback vertex set $X$ avoiding $v_1$ should intersect $F_{K,K'}$ on at least four vertices, but replacing these four vertices by the vertices of $K\setminus X$ and $K'\setminus X$, would result in another feedback vertex set 
with at most the same size, and containing $v_1$. We can thus force $v_1$ in the feedback vertex set.\cqed \qed
\end{proof}
Note that from now on, for every  $K,K'\in \ktypes{}$, the forest $F_{K,K'}$ has at most 6 connected components that are neither trivial, nor semi-trivial.

\begin{claim}
    \label{cl:wdeg2-forest}
    For every pair $K,K'\in \ktypes{}$, the forest $F_{K,K'}$ can be partitioned in two parts, one with at most $\O(p)$ vertices, and the other with at most six trees.
\end{claim}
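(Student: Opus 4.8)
The plan is to use the grouping of the connected components of $F_{K,K'}$ into trivial, semi-trivial, and remaining components that was set up in the discussion preceding the claim: the first part of the partition will be the union of the vertex sets of the trivial and the semi-trivial components, and the second part will be the subforest formed by all the other components. Since $F_{K,K'}$ is a forest and each of its components lies entirely in one part, the second part is a disjoint union of trees.

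First I would bound the size of the first part. Each trivial component is a single vertex $u$ with a $(K,K')$-monotone neighborhood, so $N_G(u)\cap M'$ is determined by the pair $(q_u,q'_u)$ of its $K$- and $K'$-indices, a point of the grid $\{0,\dots,q\}\times\{0,\dots,q'\}$ with $q,q'\le p$. The exhaustive application of \ruleref{e:r0pd} leaves at most $5p$ trivial components, as explained there (the point being that an antichain of such index pairs has size $\O(p)$), each contributing exactly one vertex. Each semi-trivial component, after \ruleref{e:r1pd} and \ruleref{e:r2pd}, has exactly two vertices $a$ and $b$: its two leaves, both necessarily adjacent to $M'$, the internal vertices of the $a$--$b$ path having been contracted away by \ruleref{e:r2pd} (this is licit because $G-M'$ is acyclic, so that path is chordless and meets no triangle). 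Here $a$ has a $K$-monotone neighborhood and $b$ a $K'$-monotone one, hence the component is recorded by the pair $(q_a,q'_b)$, and \ruleref{e:r9pd} leaves at most $5p$ semi-trivial components by the same poset argument. Thus the first part has at most $5p+2\cdot 5p=\O(p)$ vertices.

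Then I would invoke the observation stated just after \ruleref{e:r10pd}: after its exhaustive application, $F_{K,K'}$ has at most six connected components that are neither trivial nor semi-trivial. These are exactly the components of the second part, so the second part consists of at most six trees, which completes the proof.

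I do not expect a real obstacle here: the statement is an assembly of consequences of the reduction rules introduced above. The two points deserving a little attention are verifying that a semi-trivial component is truly reduced to exactly two vertices — which is where the acyclicity of $G-M'$ is used, to ensure that the path between its two $M'$-neighbours is chordless and triangle-free and is hence fully contracted by \ruleref{e:r2pd} — and the elementary combinatorial fact that an antichain of index pairs in a $p \times p$ grid has size $\O(p)$, which is what yields the $5p$ bounds coming from \ruleref{e:r0pd} and \ruleref{e:r9pd}.
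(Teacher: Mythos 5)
Your proposal is correct and follows essentially the same route as the paper, which states this claim without a separate proof precisely because it is the assembly of the bounds established just before it: at most $5p$ trivial trees by \ruleref{e:r0pd}, at most $5p$ semi-trivial trees (each of exactly two vertices, by \ruleref{e:r1pd} and \ruleref{e:r2pd}) by \ruleref{e:r9pd}, and at most six remaining components by \ruleref{e:r10pd}. Your added verifications (the antichain-in-a-$p\times p$-grid argument and the chordless-path contraction for semi-trivial components) are accurate elaborations of what the paper leaves implicit.
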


In the following subsubsection, this claim will help us in bounding the size of the graph after the kernelization.

\subsection{Properties of the kernel}\label{sssec:notinM'size}

Let us now refer to the obtained graph and parameter as $G'$ and $k'$, and note by construction that $k'\le k$. 
As all rules are safe, we know that $(G',k')$ is equivalent to $(G,k)$.
Showing that $G'$ is a pseudo-disk graph with ply at most $p$ is simple. The class of pseudo-disk graphs with ply at most $p$ being closed under induced subgraphs, we only have to check that when applying Rule \ruleref{e:r2pd}, the graph remains a pseudo-disk graph with ply at most $p$, and this follows from \autoref{lem:tech-pseudo-disk-del-contr}. 
Now, it only remains to bound the size of $G'$.

Let us define the graph $G^*$ from a weak pseudo-disk system\footnote{That is, a system of regions homeomorphic to disks, and such that for any couple $\D_u,\D_v$, it is (only) required that the region $\D_u\setminus \D_v$ is connected.} obtained from the representation of $G'-M'$ by adding a small disk $\D_{v_P}$ for each element $P\in \alltypes$, and  for every pseudo-disk $\D_u$ of $B_{M'}$, by prolonging $\D_u$ along $\oHpp{u}$ in such a way to fulfill the following conditions (see \autoref{fig:prolonging-into-weak}).  \begin{enumerate}
    \item The prolongations do not intersect new pseudo-disks among $V(G')\setminus (M'\cup I_{M'})$. 
    \item The disks corresponding to $\alltypes$ are disjoint. 
    \item Any (prolonged) disk $\D_u$ of a vertex $u\in B_{M'}$ intersects any disk $\D_{v_P}$, such that $P\in \types(u)$ (possible by \autoref{cl:same-type-consecutive}). Furthermore, each component of $\D_{v_P}\cap \D_u$ is a lens, that is, it is bordered by the union of a sub-arc of $\partial\D_{v_P}$ and a sub-arc of $\partial\D_u$.
\end{enumerate}
Hence, we deduce that this intersection system is indeed a weak pseudo-disk one.
Furthermore, since this representation has ply two, the graph $G^*$ is planar \cite{kratochvil1996intersection}.

\begin{figure}
    \centering
    \includegraphics[width=\textwidth]{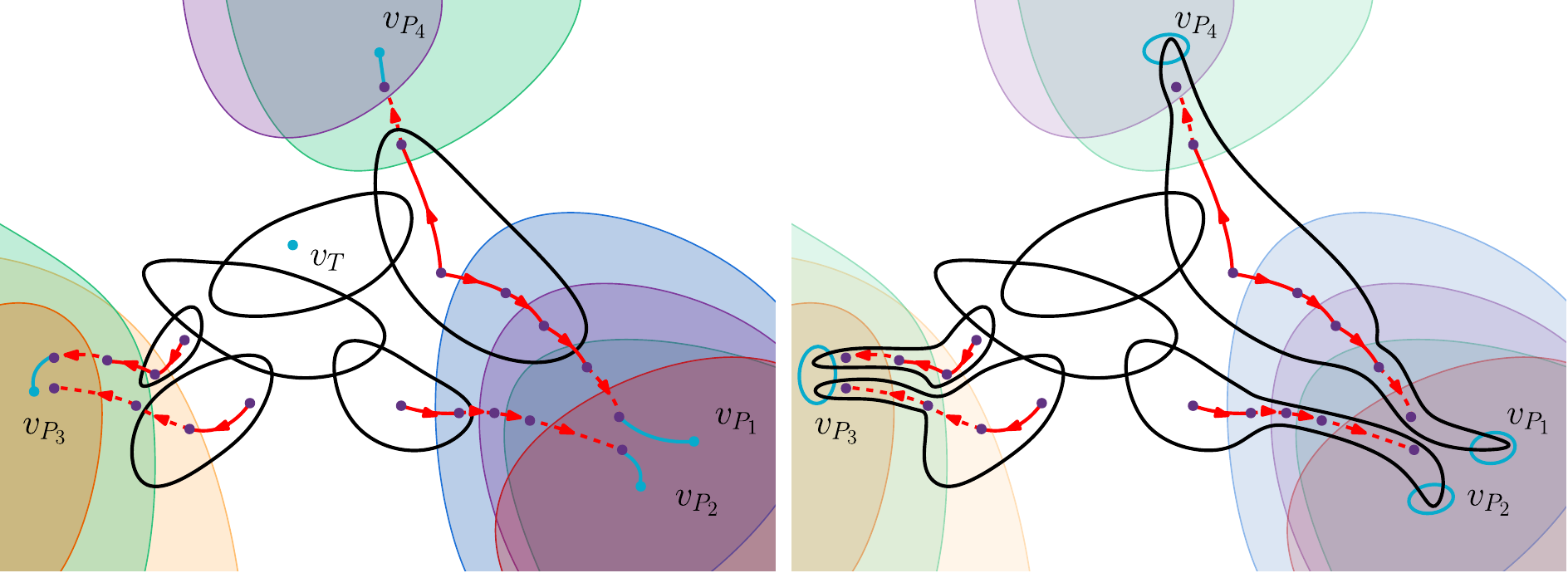}
    \caption{Construction of $G^*$.}
    \label{fig:prolonging-into-weak}
\end{figure}

As $|V(G')| \le |M'| + |V(G^*)|$, we are going to bound  $|V(G')|$ by bounding $|V(G^*)|$. The main  difficulty is to establish 
that for any tree $T$ of $G'-M'-I_{M'}$ we have
$|V(T)| = \O(\numtypes(T))$.
Before proving this in \autoref{cl:sizeTwdeg}, we have to introduce preliminary notions.

Consider an arbitrary connected component $T$ of $G - M'-I_{M'}$ such that $\types(T) = \{P_1,\ldots,P_d\}\subseteq \alltypes$, for some $d\ge 2$. Then, root $T$ at an arbitrary vertex~$r$. For any vertex $v\in V(T)$ denote by $T_v$ the rooted subtree of $T$ rooted at $v$
and let us denote by $\types^T(v)$ the set $\bigcup_{u\in V(T_v)} \types(u)$. Note that $\types^T(v) \subseteq \{P_1,\ldots,P_d\}$. The vertex of $G^*$ corresponding to a path $P\in \alltypes$ will be denoted $v_P$. Two vertices $u,v$ of $T$ are said \emph{incomparable} if $u\notin T_v$ and $v\notin T_u$.

The graph $G^*$ being planar we have the following two properties.
\begin{claim}\label{cl:k33}
    For any three, pairwise incomparable, vertices $u,v,w$ of $T$, there is no pair $i,j\in \{1,\ldots,d\}$ such that $\{P_i,P_j\}$ is a subset of $\types^T(u)$, $\types^T(v)$, and $\types^T(w)$.
\end{claim}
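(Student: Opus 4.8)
The plan is to derive this from planarity of $G^*$ by exhibiting a $K_{3,3}$-subdivision if the statement fails. Suppose for contradiction that there are three pairwise incomparable vertices $u,v,w$ of $T$ and indices $i,j$ such that $\{P_i,P_j\}\subseteq \types^T(u)\cap\types^T(v)\cap\types^T(w)$. Consider the two vertices $v_{P_i}$ and $v_{P_j}$ of $G^*$. I want to show that in $G^*$ there are six internally disjoint paths: from each of $u,v,w$ to each of $v_{P_i}$, $v_{P_j}$, with all these paths pairwise internally disjoint except that they may share the endpoints $u,v,w,v_{P_i},v_{P_j}$. This is exactly a $K_{3,3}$-subdivision with branch vertices $\{u,v,w\}$ on one side and $\{v_{P_i},v_{P_j}\}$ on the other — wait, that is only $K_{3,2}$, which is planar. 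So I need to be more careful: the correct obstruction is that $u,v,w$ together with $v_{P_i},v_{P_j}$ and possibly $r$ (or the tree $T$ itself contracted appropriately) yield a forbidden minor.

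The cleaner route: since $u,v,w$ are pairwise incomparable in the rooted tree $T$, the subtrees $T_u$, $T_v$, $T_w$ are pairwise disjoint, and in $T$ there is a vertex (or the root region) from which three disjoint paths lead into $T_u$, $T_v$, $T_w$ respectively; equivalently, contracting $T$ appropriately gives a vertex $t$ joined by three disjoint paths to $u$, $v$, $w$. Now inside $T_u$ (which lies in $G'-M'-I_{M'}$) there are vertices realizing types $P_i$ and $P_j$, so following the hosted paths $\oHpp{\cdot}$ inside the disks of $T_u$ and then the small disks $\D_{v_{P_i}},\D_{v_{P_j}}$, I get paths in $G^*$ from $u$ (inside $T_u$) to $v_{P_i}$ and to $v_{P_j}$ that stay within $\mathcal{R}_{T_u}$ (union of the prolonged disks of $T_u$) except for their endpoints $v_{P_i},v_{P_j}$; the same for $v$ and $w$. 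Since $\mathcal{R}_{T_u}$, $\mathcal{R}_{T_v}$, $\mathcal{R}_{T_w}$ are pairwise disjoint (as $T_u,T_v,T_w$ are disjoint subtrees of a component of $G'-M'-I_{M'}$, and by the disjointness conditions in the construction of $G^*$), these six paths are internally disjoint except at the shared endpoints $v_{P_i},v_{P_j}$. Together with the three disjoint paths from $t$ to $u$, $v$, $w$ inside $T$, this yields a subdivision of $K_{3,3}$ with branch vertices $\{t, v_{P_i}, v_{P_j}\}$ on one side and $\{u,v,w\}$ on the other, contradicting planarity of $G^*$ (\cite{kratochvil1996intersection} plus Kuratowski).

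The main obstacle I anticipate is bookkeeping the internal disjointness of the nine paths and making precise that the prolonged disks of the subtrees $T_u,T_v,T_w$ are genuinely disjoint and meet $\D_{v_{P_i}},\D_{v_{P_j}}$ only at the designated lens regions — i.e. that a vertex realizing type $P_i$ inside $T_u$ really produces a $G^*$-path to $v_{P_i}$ not using any vertex of $T_v$ or $T_w$. This relies on the three defining conditions of $G^*$'s construction (prolongations avoid new pseudo-disks; the $\D_{v_P}$ are pairwise disjoint; $\D_u$ meets $\D_{v_P}$ in lenses exactly when $P\in\types(u)$) together with the fact that distinct subtrees of $T$ have disjoint unions of pseudo-disks. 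A secondary subtlety: I must ensure $u,v,w$ are distinct from $v_{P_i},v_{P_j}$ and from $t$; incomparability gives $u,v,w$ distinct, and the $v_P$ are newly added vertices so they differ from all tree vertices, while $t$ can be taken in $T$ and hence $\ne v_{P_i},v_{P_j}$; a degenerate case where some $T_u$ is a single vertex equal to a branch point is handled by routing $t$-to-$u$ as a trivial path. Once disjointness is established, invoking Kuratowski's theorem against the planarity of $G^*$ closes the argument.
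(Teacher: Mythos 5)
Your proof is correct and essentially the paper's own argument: the paper simply contracts $T_u$, $T_v$, $T_w$ onto $u,v,w$ and the rest of $T$ onto the root $r$, so that these three contracted vertices together with $r$, $v_{P_i}$ and $v_{P_j}$ form a $K_{3,3}$ minor in the planar graph $G^*$. Only phrase the obstruction as a minor rather than a subdivision (the two routes inside $T_u$ towards $v_{P_i}$ and $v_{P_j}$ need not be internally disjoint), which the contraction formulation you already give handles immediately.
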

\begin{proof}
    Indeed, otherwise by contracting $T_u$, $T_v$, $T_w$, and the rest of $T$ on $u, v, w$, and $r$, respectively, these four vertices with the vertices $v_{P_i}$ and $v_{P_j}$ would form a $K_{3,3}$ minor in $G^*$.\cqed \qed
\end{proof}
\begin{claim}\label{cl:toomanyedges}
There are at most $3d$ vertices $v$ in $T$ that are pairwise incomparable and that have distinct pairs $\{P_i,P_j\}$ in their set $\types^T(v)$.
\end{claim}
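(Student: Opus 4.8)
The plan is to build a planar graph from the vertices $v$ described in the claim together with the $d$ "type vertices" $v_{P_1},\dots,v_{P_d}$ of $G^*$, and then apply the linear bound on the number of edges of a simple planar graph. Let $W=\{w_1,\dots,w_t\}$ be a maximal set of pairwise incomparable vertices of $T$ with pairwise distinct pairs in their sets $\types^T(\cdot)$; my goal is to show $t\le 3d$. For each $w\in W$, fix a pair $\{P_i,P_j\}\subseteq \types^T(w)$ that is specific to $w$ (all these pairs are distinct), and I want to charge $w$ to this pair. The key observation is that, since $\{P_i,P_j\}\subseteq\types^T(w)$, there are vertices of $T_w$ whose directed hosted graphs $\oHpp{\cdot}$ contain root-to-leaf paths of types $P_i$ and $P_j$; starting from $v_{P_i}$ and from $v_{P_j}$ one can route two curves into $\D_w$ (after contracting $T_w$ to $w$ in $G^*$) realising edges $wv_{P_i}$ and $wv_{P_j}$ in a planar fashion. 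So I would form the bipartite plane graph $B$ with one side $W$ and the other side $\{v_{P_1},\dots,v_{P_d}\}$, placing each $v_{P}$ in the face of $\PSprime$ corresponding to the high-$M'$-ply endpoint of $P$ (as in the proofs of \autoref{cl:few-degree-two-T} and \autoref{cl:few-large-degree-T}), and for each $w\in W$ drawing exactly the two edges to the endpoints of its private pair.

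The crucial point is planarity of $B$: the two curves drawn for $w$ stay inside $\mathcal{R}_{T_w}\cup(\text{vicinity of }\oHpp{\cdot})$, and because $w_1,\dots,w_t$ are pairwise incomparable, the subtrees $T_{w_1},\dots,T_{w_t}$ are pairwise disjoint and the regions $\mathcal{R}_{T_{w_i}}$ are pairwise disjoint (this uses \autoref*{cl:pseudo-disk-abc}\eqref{e:ee}, that each $\mathcal{R}_{T_v}\setminus\RMprime$ is connected, and \eqref{e:aa}), so edges drawn for different $w_i$ do not cross, exactly as in the earlier claims. Moreover $B$ is simple: each $w$ contributes exactly two edges, to two distinct type-vertices, and no two $w$'s contribute the same pair by choice of private pairs — so there are no multiple edges. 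Once $B$ is simple, bipartite and planar, Euler's formula gives $|E(B)|\le 2(|W|+d)-4$. On the other hand $|E(B)|=2t=2|W|$ since each $w$ has degree exactly $2$. Combining, $2t\le 2(t+d)-4$ is vacuous, so I must instead use the finer girth-$4$ planar bound $|E(B)|\le 2(t+d)-4$ rearranged differently — actually with $|E(B)|=2t$ this gives no information, so the right move is to note that every $w\in W$ has degree $2$ while $B$ is planar and bipartite, hence $2t = |E(B)| \le 2(t+d) - 4$, which is useless; the correct bound must come from \autoref{cl:k33}, which says no two type-vertices have three common $W$-neighbours, i.e. $B$ has no $K_{2,3}$ with the $2$-side among the $v_P$'s.

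So the real engine is \autoref{cl:k33}: it forbids $K_{3,3}$-minors of the relevant shape, which translates to the statement that in $B$, no pair $\{v_{P_i},v_{P_j}\}$ has three common neighbours in $W$; equivalently, the pairs $\{P_i,P_j\}$ realised by the $w$'s, viewed as edges of the graph on vertex set $\{P_1,\dots,P_d\}$, form a \emph{simple} graph (no repeated edge — that is the "distinct pairs" hypothesis) which is moreover planar (drawing edge $\{P_i,P_j\}$ through $w$ as above, using incomparability for non-crossing). A simple planar graph on $d$ vertices has at most $3d-6<3d$ edges, and each $w\in W$ contributes one distinct edge of this graph, hence $t = |W| \le 3d$. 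I would therefore present it this way: define the plane graph $H$ on vertex set $\{v_{P_1},\dots,v_{P_d}\}$, with one edge per $w\in W$ joining the two type-vertices of $w$'s private pair, drawn by routing through $\mathcal{R}_{T_w}$ and the prolongations $\oHpp{\cdot}$; argue it is simple (distinct pairs) and planar (disjoint subtrees by incomparability, plus the ordering from \autoref{cl:same-type-consecutive} to place the $v_P$'s consistently), then conclude $|W| = |E(H)| \le 3d - 6 \le 3d$.

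I expect the main obstacle to be the careful verification that the curves realising the edges of $H$ can be drawn without crossings: one must check both that the portions running through the arborescences $\oHpp{u_1},\oHpp{u_2}$ do not cross (here the non-crossing of the $\oHpp{\cdot}$ from \autoref{cl:oHpp} and the consecutiveness from \autoref{cl:same-type-consecutive} are essential, since the $v_P$'s must be placed around each maximal-ply face in that consistent cyclic order), and that the portions inside the regions $\mathcal{R}_{T_w}$ for distinct incomparable $w$ are disjoint (which is where incomparability and \autoref*{cl:pseudo-disk-abc}\eqref{e:ee}–\eqref{e:aa} come in). The rest — simplicity of $H$ from the "distinct pairs" hypothesis, and the $3d-6$ bound from Euler's formula — is routine.
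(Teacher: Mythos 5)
Your final argument --- contracting each (pairwise disjoint, by incomparability) subtree $T_w$ to realise one edge between the two type-vertices of $w$'s private pair, yielding a simple planar minor of $G^*$ on $\{v_{P_1},\ldots,v_{P_d}\}$ with one edge per such vertex, hence at most $3d-6\le 3d$ of them --- is exactly the paper's proof. The detour through the bipartite graph $B$ and the appeal to \autoref{cl:k33} are unnecessary (and \autoref{cl:k33} is not what gives simplicity here; the distinct-pairs hypothesis does, as your closing paragraph already states), so they can simply be dropped.
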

\begin{proof}
    Indeed, otherwise one could construct a minor of $G^*$, with vertex set $\{v_{P_1},\ldots,v_{P_d}\}$ and more than $3d$ edges, which is impossible for a planar graph.\cqed \qed
\end{proof}

Let $L\subseteq V(T)$ ($L$ for large) be the set of vertices $v$ such that $|\Tilde{N}^T(v)| \ge 2$.
\begin{claim}\label{cl:TI-bounded}
    If $d\ge 2$, we have $|L| = \O(d)$.    
\end{claim}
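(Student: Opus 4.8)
The plan is to bound $|L|$ by a charging argument that distributes the vertices of $L$ among the paths in $\types(T)=\{P_1,\dots,P_d\}$, using the planarity of $G^*$ through \autoref{cl:k33} and \autoref{cl:toomanyedges}. Recall that $\Tilde{N}^T(v)$ should be understood as the set of ``branches'' of $T$ at $v$ whose union of types is non-trivial, so a vertex $v\in L$ is one at which the set $\types^T$ genuinely splits in at least two directions; intuitively, $L$ is the set of branching vertices of a ``virtual tree'' on the leaves witnessing the $d$ types, and such a tree has $\O(d)$ branching vertices.

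First I would make this precise. For each $v\in V(T)$, consider the children $c_1,\dots,c_s$ of $v$ in the rooted tree $T_v$, together with the ``upward part'' (the complement $T\setminus T_v$ seen from $v$); to each of these $s+1$ pieces associate the union of the types of its vertices. Then $v\in L$ exactly when at least two of these pieces have non-empty associated type set — equivalently, $v$ is a vertex where at least two of the paths $P_1,\dots,P_d$ first separate, or where a single $P_i$ is reached from two different subtrees. The key structural observation is that, since $\types^T$ is monotone along root-to-leaf paths (it only shrinks as we descend), each path $P_i$ determines a connected subtree $T^{(i)}$ of $T$ consisting of the vertices $v$ with $P_i\in \types^T(v)$; and $L$ is essentially the union over $i$ of the branching vertices of $T^{(i)}$ together with the pairwise ``meeting points'' of the $T^{(i)}$'s.

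Next I would bound each of these contributions. A vertex of $L$ is charged to the (at least one) pair $\{P_i,P_j\}$ that it separates: if $v\in L$ then there are at least two pieces at $v$ with non-empty type set, so picking $P_i$ from one and $P_j$ from another (with $P_i\ne P_j$ if possible; the degenerate case where the same $P_i$ appears in two pieces is handled separately and also contributes only $\O(1)$ per index by monotonicity) gives a pair. Now I claim each pair $\{P_i,P_j\}$ receives only $\O(1)$ charge: the vertices charged to $\{P_i,P_j\}$ lie on the (unique) path in $T$ between the ``top'' of $T^{(i)}$ and the ``top'' of $T^{(j)}$, or are vertices where $T^{(i)}$ and $T^{(j)}$ branch apart; by \autoref{cl:k33} three pairwise incomparable such vertices cannot all carry $\{P_i,P_j\}$ in their $\types^T$, so at most $\O(1)$ incomparable vertices are charged to a fixed pair, and along a path the comparable ones also reduce to $\O(1)$ many genuine branching events. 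Summing, $|L| = \O(\text{number of charged pairs})$, and by \autoref{cl:toomanyedges} the number of distinct pairs $\{P_i,P_j\}$ arising at pairwise incomparable vertices is $\O(d)$; combined with the fact that along any root-to-leaf path the type set strictly decreases at most $d$ times, the total number of relevant pairs is $\O(d)$, giving $|L|=\O(d)$.

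The main obstacle I expect is bookkeeping the ``comparable'' branching vertices correctly: \autoref{cl:k33} and \autoref{cl:toomanyedges} are phrased for pairwise incomparable vertices, so I must argue separately that along a single root-to-leaf path there are only $\O(d)$ vertices of $L$ — this follows because descending the path the set $\types^T(v)$ is non-increasing and can lose elements at most $d$ times, and each vertex of $L$ on the path either loses an element or is the site where a child-subtree peels off a type, an event that can be charged to the lost element. Making the interface between the ``along a path'' bound and the ``incomparable'' bound airtight (so that every vertex of $L$ is counted in exactly one of the two regimes) is the delicate part; once that is set up, the planarity inputs \autoref{cl:k33} and \autoref{cl:toomanyedges} close the argument immediately.
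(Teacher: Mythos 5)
Your handling of the incomparable regime is sound and matches the paper: $L$ is closed under taking ancestors, so $T[L]$ is a subtree containing the root, its leaves are pairwise incomparable, and \autoref{cl:k33} together with \autoref{cl:toomanyedges} bounds the number of leaves, hence of branch vertices, of $T[L]$ by $\O(d)$. The gap is in the comparable regime. You claim that along a single root-to-leaf path there are only $\O(d)$ vertices of $L$ because $\types^T(v)$ is non-increasing and ``each vertex of $L$ on the path either loses an element or is the site where a child-subtree peels off a type, an event that can be charged to the lost element.'' This charging is invalid: the type that peels off into a child subtree need not disappear from the type set of the continuation, so the same type can peel off arbitrarily many times. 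Concretely, take a path $v_1,\dots,v_m$ in $T$ where every $v_i$ carries a pendant leaf $u_i\in B_{M'}$ with $\types(u_i)=\{P_1\}$ and the subtree below $v_m$ contains a vertex of type $P_2$; then $\types^T(v_i)=\{P_1,P_2\}$ for every $i$, all the $v_i$ lie in $L$, they are pairwise comparable, and they would all be charged to the single pair $\{P_1,P_2\}$ while no element of the type set is ever lost. Planarity of $G^*$ does not exclude this (all the pendant disks simply meet the one disk $\D_{v_{P_1}}$, a star), so \autoref{cl:k33} and \autoref{cl:toomanyedges}, which only speak about pairwise incomparable vertices, cannot repair the step, and your ``each pair receives $\O(1)$ charge'' assertion fails.

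The missing ingredient is exactly the one your proof never invokes: the kernelization rules. The statement $|L|=\O(d)$ is simply false for a non-reduced instance (in the example above $|L|$ can be $\Omega(|V(T)|)$ with $d=2$), so no argument based solely on planarity and monotonicity of $\types^T$ can succeed. In the paper, the degree-two chains of $T[L]$ are cut into subpaths of order $60$; rule \ruleref{e:r8pd} guarantees that every full-length subpath carries at least three distinct types (otherwise the rule would have fired and deleted vertices), and \autoref{lem:planar-linear-size} applied to the planar bipartite minor of $G^*$ obtained by contracting these subpaths and keeping $v_{P_1},\dots,v_{P_d}$ bounds their number by $2d$; configurations like the pendant-chain example are likewise destroyed by rules such as \ruleref{e:r5pd}. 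To fix your proof you must bring these reduction rules into the comparable-chain analysis (or reproduce the paper's cutting-into-length-$60$-paths argument); as written, the argument has a genuine hole at its central counting step.
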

\begin{proof}
    Observe that if $v \in L$ and $u\in V(T)$ is an ancestor of $v$ (i.e., $v \in T_u$), then $u \in L$.
    Also, as $d\geq 2$, $r\in L$. So $T[L]$ is connected and the leaves of $T[L]$ form a set of incomparable vertices.
    By \autoref{cl:k33} and \autoref{cl:toomanyedges}, we deduce that $T[L]$ has at most $6d$ leaves. This implies that the number of vertices of $T[L]$ with degree at least three (in $T[L]$) is also bounded by $6d$. The same bound of $6d$ holds on the number of connected components in the subgraph of $T[L]$ induced by the vertices having degree two in $T[L]$. In the following, let us refer to these connected components as \emph{long paths}.
    
    To bound the number of vertices of degree two (in $T[L]$), let us now cut each long path into order-60 subpaths whenever possible, and let us refer to these paths as the \emph{short paths}. Each long path leading to at most one short path of order smaller than 60, these short paths contribute in total for at most $6d\cdot 59 = \O(d)$ vertices in $T[L]$. Let us now bound the number of short paths of order exactly 60. By  \ruleref{e:r8pd}, for every such short path $P$ we have that $\numtypes(P)\ge 3$.
    This implies that the minor of $G^*$ obtained by contracting these short paths and keeping the vertices in $\{v_{P_1},\ldots,v_{P_d}\}$ has at most $2d$ contracted vertices (by \autoref{lem:planar-linear-size}).
    So, there are at most $2d$ short paths of length 60, and in total, the size of $L$ is bounded by $6d + 6d + (6d\cdot 59) + (2d\cdot 60) = \O(d)$.\cqed \qed
\end{proof}

\begin{claim}\label{cl:TminusI}
    For every vertex $v \in L$ and every $P \in \types^T(v)$, the subtree $T'$ of $T$ rooted at $v$ and containing every tree $T_u$ such that $u$ is a son of $v$ and $\types^T(u)=P$ has constant size.
\end{claim}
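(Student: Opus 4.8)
The plan is to show that $T'$ is a subtree all of whose vertices have $P$-monotone neighborhoods (in the sense of Definition~\ref{def:KmonoNeighborhood} with $K=K_P$), and then argue that our kernelization rules forbid such a tree from being large. First I would observe that if $u$ is a descendant of $v$ lying in $T'$, then by construction $\types^T(u)\subseteq\{P\}$; since $u\in B_{M'}\cup O_{M'}$ and $u$ is not isolated in $G-M'$ (it lies in a tree with at least two types of paths, hence $T$ has $\ge 2$ vertices), we have $\numtypes(u)\le 1$. If $\numtypes(u)=1$, then by Claim~\ref{cl:wd1} the vertex $u$ has a $K_P$-monotone neighborhood; and if $\numtypes(u)=0$ then $u$ is an outer-$M'$-vertex with $N(u)\cap M'=\emptyset$, which is trivially $K_P$-monotone with $K_P$-index $0$. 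The vertex $v$ itself may have other types in $\types^T(v)$, so $v$ plays the role of the distinguished ``root'' vertex $r$ allowed to have neighbors outside $K_P\cup V(T')$.

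Next I would apply the monotone-neighborhood rules. Writing $K=K_P$, the tree $T'$ with distinguished vertex $v$ is exactly of the form handled by Rule~\ruleref{e:r5pd}: if $T'-\{v\}$ contained two vertex-disjoint paths whose four endpoints are all adjacent to $v_1\in K$ (equivalently with $K$-index at least $1$), Rule~\ruleref{e:r5pd} would have fired. Since the endpoints of such paths are leaves of $G-M'$ and hence (by Rule~\ruleref{e:r1pd}) adjacent to $M'$, and since a $K$-monotone vertex adjacent to nothing in $M'$ has $K$-index $0$ and degree $\le 1$ in $G-M'$ so is also removed by~\ruleref{e:r1pd}, after exhaustive application of the rules $T'-\{v\}$ has no two such disjoint paths; thus $T'-\{v\}$ is essentially a caterpillar-like structure with a single ``spine''. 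Combined with Rule~\ruleref{e:r6pd}, which bounds by two the number of leaves of $G-M'$ attached to any fixed vertex that have $K$-monotone neighborhoods, and Rule~\ruleref{e:r7pd}, which forbids a length-six path of degree-two (in $G-M'$) vertices all with $K$-monotone neighborhood, we get that this spine has bounded length and bounded ``width''; a short case analysis on the possible shapes (a path of bounded length, with at most two extra leaves hanging at each of its vertices) then yields $|V(T')|=O(1)$.

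The main obstacle I anticipate is bookkeeping the interaction between the vertices of $T'$ and the distinguished vertex $v$: Rules~\ruleref{e:r5pd},~\ruleref{e:r6pd},~\ruleref{e:r7pd} are all phrased for vertices whose \emph{entire} relevant neighborhood lies in $K\cup V(T)$, and one must check that $v$ is the only vertex of $T'$ allowed to violate this, which is precisely what the definition of $T'$ (only sons $u$ of $v$ with $\types^T(u)=\{P\}$, together with their whole subtrees) guarantees. A secondary subtlety is that Rule~\ruleref{e:r2pd} may have contracted degree-two vertices, so ``degree two in $G-M'$'' must be read in the reduced graph; but since all rules are applied to exhaustion on the final graph, this causes no circularity. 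Once these points are settled, the constant bound on $|V(T')|$ follows by combining the three structural restrictions, completing the proof.
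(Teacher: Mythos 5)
Your setup coincides with the paper's: the vertices of $T'\setminus\{v\}$ have $K_P$-monotone neighborhoods (via \autoref{cl:wd1}, with $v$ playing the role of the special vertex $r$ in \ruleref{e:r5pd}), and the size bound must come from the monotone-neighborhood rules. However, two steps of your plan do not hold as written. First, your claim that a $K$-monotone vertex with no neighbor in $M'$ has degree at most one in $G-M'$ (and is therefore removed by \ruleref{e:r1pd}) is false: an outer-$M'$-vertex may have three or more neighbors inside its tree of $G-M'$ and survives every rule. What is actually available, via the remark following \ruleref{e:r2pd} and the fact that $G-M'$ is a forest, is that any vertex of degree two in $G-M'$ outside $M'$ must have a neighbor in $M'$, hence (being $K_P$-monotone) is adjacent to the first vertex $v_1$ of $K_P$; this is the fact needed about leaves and degree-two vertices, and it does not extend to arbitrary index-$0$ vertices. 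Second, and more seriously, the step ``\ruleref{e:r5pd} yields a caterpillar with a single spine, and \ruleref{e:r7pd} bounds the spine'' fails: \ruleref{e:r7pd} requires every vertex of the length-six path to have degree exactly two in $G-M'$, whereas the spine vertices of the caterpillar you allow carry hanging leaves and so have degree at least three in $G-M'$; the rule never fires on them, and your case analysis does not bound their number. Moreover the reduction to a caterpillar is asserted rather than proved; for instance a vertex with three legs of length two must be excluded, which again requires the $v_1$-adjacency of degree-two vertices plus a further application of \ruleref{e:r5pd}.

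The paper closes these holes by a different bookkeeping and never needs \ruleref{e:r7pd} here: it shows every vertex of $T'$ has at most three sons (four sons would give either three $K_P$-monotone leaf-sons, excluded by \ruleref{e:r6pd}, or two son-subtrees with at least two vertices each, hence by \ruleref{e:r1pd} and \ruleref{e:r2pd} two leaves each adjacent to $v_1$, producing the two disjoint paths of \ruleref{e:r5pd}), and that $T'$ has depth at most four (a root path $(v,t_1,\ldots,t_5)$ would, by \ruleref{e:r2pd}, give each $t_i$ a third neighbor leading to a leaf adjacent to $v_1$, yielding two disjoint paths avoiding $v$, contradicting \ruleref{e:r5pd}); bounded degree and depth then give constant size. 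To salvage your route you would need to (i) actually derive the structure of $T'\setminus\{v\}$ from the absence of two disjoint paths with $v_1$-adjacent endpoints, and (ii) bound the spine by noting that hanging leaves at four distinct spine vertices, or a bare stretch of six degree-two vertices, already trigger \ruleref{e:r5pd} (or \ruleref{e:r7pd} in the bare case); the work cannot be delegated to \ruleref{e:r7pd} alone.
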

    \begin{proof}
    Let $K_\alpha = (u_1,\ldots,u_{q})$, for some $q\le p$, be the clique corresponding to the border $\alpha$. As all the leaves of $T'$ have a non-empty $K_\alpha$-monotone neighborhood. Rules~\ruleref{e:r5pd} and \ruleref{e:r6pd} imply that any vertex $v$ of $T'$ has at most three sons in $T'$. Indeed, with four sons we would have either three leaves among them, or two subtrees with at least two vertices each. In the first case, Rule~\ruleref{e:r6pd} should have been applied. In the second case, by Rule~\ruleref{e:r2pd}, the two non-trivial subtrees have at least two leaves each. Since the leaves are connected to $u_1$ (by Rule~\ruleref{e:r1pd}), these subtrees contain two disjoint paths connecting neighbors of $u_1$. In that case, Rule~\ruleref{e:r5pd} should have been applied. 
    
    Also, $T'$ has depth at most four, as otherwise $T'$ would contain two disjoint paths avoiding $v$ and such that their endpoints are adjacent to $u_1$, and in such case Rule~\ruleref{e:r5pd} should have applied. If there was a path $P=(v,t_1,\ldots , t_5)$, there would be four vertex disjoint paths from $t_1,t_2,t_3,t_4$ to leaves in $T' -\{v\}$. Indeed, for any $i\in \{1,2,3,4\}$, the vertex $t_i$ has degree at least three (by Rule~\ruleref{e:r2pd}), this third neighbor (distinct from $t_{i-1}$ and $t_{i+1}$, with $t_0=v$) is either $u_1$ or it belongs to $T'$, and we denote it $x_i$. The subtree of $T'$ rooted at $x_i$ contains a path linking $x_i$ to a leaf, which is adjacent to $u_1$. Hence, in any case we have the path from $t_i$ to a leaf. 
    
    Finally,  $T'$ having bounded degree and bounded depth, it has bounded size.\cqed \qed
\end{proof}

Let us define the graph $G^*_L$ from $G^*[V(T)\cup\{v_{P_1},\ldots,v_{P_d}\}]$ by contracting every edge $uv$ such that $|\types^T(u)|=1$. 
As a minor of $G^*$, this graph is planar and has $\O(d)$ vertices by \autoref{cl:TI-bounded}
(since $V(G^*_L) = L\cup\{v_{P_1},\ldots,v_{P_d}\}$). Being planar, its number of edges is also $\O(d)$. 
Given a vertex $v \in L$ and a path $P \in \types^T(v)$, such that $v$ has a son $u$ with $\types^T(u)=P$, then
$v$ and $v_P$ are adjacent in $G^*_L$. Hence, the number of vertices of $V(T)\setminus L$ is bounded by the number of edges in $G^*_L$, times the constant of \autoref{cl:TminusI}. This set has thus also size $\O(d)$.
We can thus conclude the following.
\begin{claim}
    \label{cl:sizeTwdeg}
    For any tree $T$ of $G'-M'-I_{M'}$ such that $\numtypes(T)\ge 2$, we have $|V(T)| = \O(\numtypes(T))$.
\end{claim}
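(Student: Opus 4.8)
The plan is to assemble the bounds already proved in the preceding claims. Fix a tree $T$ of $G'-M'-I_{M'}$ with $d = \numtypes(T) \ge 2$, root it at an arbitrary vertex $r$, and recall the set $L\subseteq V(T)$ of vertices $v$ with $|\types^T(v)|\ge 2$. By \autoref{cl:TI-bounded} we have $|L| = \O(d)$. The remaining vertices, those in $V(T)\setminus L$, are exactly those $v$ with $|\types^T(v)| = 1$, i.e. whose whole rooted subtree $T_v$ contributes only a single path type. The strategy is to charge each such vertex to an edge of the planar minor $G^*_L$, using \autoref{cl:TminusI} to control how many vertices can be charged to the same edge.

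First I would make precise the contraction defining $G^*_L$: starting from $G^*[V(T)\cup\{v_{P_1},\ldots,v_{P_d}\}]$, contract every edge $uv$ of $T$ with $|\types^T(u)| = 1$ (say $u$ the son). Since $G^*$ is planar (established earlier via the ply-two weak pseudo-disk representation and \cite{kratochvil1996intersection}), $G^*_L$ is planar as a minor of $G^*$, and its vertex set is $L\cup\{v_{P_1},\ldots,v_{P_d}\}$, so it has $\O(d)$ vertices and hence $\O(d)$ edges by planarity. Next I would observe that each connected component of $T - L$ is a rooted subtree $T'$ attached to some vertex $v\in L$ through a son $u$ with $\types^T(u) = \{P\}$ for a single $P\in\types^T(v)$; by construction of the contraction, $v$ and $v_P$ are adjacent in $G^*_L$. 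So every vertex of $V(T)\setminus L$ lies in such a subtree $T'$, and each such subtree corresponds to a (vertex, path)-pair $(v,P)$ giving rise to an edge $vv_P$ of $G^*_L$.

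The key point is then that \autoref{cl:TminusI} bounds the size of each such subtree $T'$ by an absolute constant $c_0$. Each edge $vv_P$ of $G^*_L$ accounts for at most one such subtree $T'$ (the one collecting all sons $u$ of $v$ with $\types^T(u) = P$), so $|V(T)\setminus L| \le c_0 \cdot |E(G^*_L)| = \O(d)$. Combining with $|L| = \O(d)$ from \autoref{cl:TI-bounded} gives $|V(T)| = |L| + |V(T)\setminus L| = \O(d) = \O(\numtypes(T))$, as required.

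The main obstacle is purely bookkeeping: one must be careful that the vertices of $V(T)\setminus L$ are genuinely all covered by the subtrees $T'$ of \autoref{cl:TminusI} and that distinct subtrees map to distinct edges of $G^*_L$ (or at least that each edge receives boundedly many), so that the charging is valid. One also needs the edge count $|E(G^*_L)| = \O(d)$, which requires $G^*_L$ to be simple — if the contraction creates multi-edges one should instead invoke the linear edge bound for simple planar graphs on the underlying simple graph, which only loses a constant factor. Once these routine checks are in place, the claim follows immediately from \autoref{cl:TI-bounded} and \autoref{cl:TminusI}.
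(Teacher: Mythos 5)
Your proof is correct and is essentially the paper's own argument: bound $|L|=\O(d)$ via \autoref{cl:TI-bounded}, contract the single-type edges to obtain the planar graph $G^*_L$ with $\O(d)$ vertices and edges, and charge each component of $T-L$ to an edge $vv_P$ of $G^*_L$ with multiplicity bounded by the constant of \autoref{cl:TminusI}. Your extra bookkeeping about coverage of $V(T)\setminus L$ and about multi-edges (passing to the underlying simple graph) is a harmless refinement of the same approach.
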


By \autoref{cl:Dv-inside-bounded}, $|I_{M'}|=\O(p^4k)$.
By \autoref{cl:few-wedg}, the set $\bigcup_{T} V(T)$, where the union is taken over all the trees $T\in G'-M'-I_{M'}$ such that $\numtypes(T)\le 1$, has size $\O(p|M'|)=\O(p^3k)$.
By \autoref{cl:few-degree-two-T}, \autoref{cl:wdeg2-forest}, and \autoref{cl:sizeTwdeg}, the set $\cup_{T} V(T)$, where the union is taken over all the trees $T\in G'-M'-I_{M'}$ such that $\numtypes(T) =2$, has size $\O(p^2|M'|) = \O(p^4k)$.
By \autoref{cl:few-large-degree-T} and \autoref{cl:sizeTwdeg}, the set $\bigcup_{T} V(T)$, where the union is taken over all the trees $T\in G'-M'-I_{M'}$ such that $\numtypes(T)\ge 3$, has size $\O(p|M'|)=\O(p^3k)$. Hence, the total number of vertices in $G'$ is $\O(p^4k)$.
This concludes the proof of \autoref{lemma:red-fvs-pseudo}.

\section{Contact-segment graphs as pseudo-disk graphs}
\label{sec:contactsegp-arepseudo}
Intersection graphs of segments in $\RR^2$ are called \emph{segment graphs}. 
In a segment contact representation, any point belonging to two segments must be an endpoint of at least one of these segments.  
If a point belongs to several segments, the above property must hold for every pair of them. 
This defines \emph{\CONTACTSEG graphs}.
This class of graphs is a subclass of pseudo-disk graphs. This, up to the authors knowledge, does not appear in the literature. We hence prove it formally here.
\begin{theorem}\label{thm:conact-seg-are-pd}
    Every \CONTACTSEG graph is a pseudo-disk graph.
\end{theorem}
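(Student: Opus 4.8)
The plan is to replace each segment by a thin \emph{lens} around it --- a neighbourhood pinched to a point at each of its two endpoints --- and to verify that these lenses form a system of pseudo-disks with intersection graph $G$. Two observations drive the argument: pinching the neighbourhoods avoids the tangencies one would otherwise get from plain $\varepsilon$-neighbourhoods at a point shared by two segment endpoints; and the defining axiom of a contact representation guarantees that at every contact point $q$ at most one incident segment contains $q$ in its interior, so the troublesome ``through'' configuration never occurs on both sides of a contact at once.

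Concretely, I would fix a contact representation $\{s_v : v\in V(G)\}$ (assuming each segment non-degenerate), pick a parameter $\delta>0$, and for each $v$ let $\D_v$ be the intersection of the two disks that pass through both endpoints of $s_v$ and bulge by $\delta$ on either side of the line supporting $s_v$. Thus $\D_v$ is convex, homeomorphic to a closed disk, contains $s_v$ with the endpoints of $s_v$ on $\partial\D_v$, has width at most $2\delta$, and near each endpoint $p$ of $s_v$ is a wedge with apex $p$ whose opening angle tends to $0$ as $\delta\to0$. Since the representation is finite, $\delta$ can be chosen small enough that simultaneously: (i) $\D_u\cap\D_v=\emptyset$ whenever $s_u\cap s_v=\emptyset$; (ii) whenever $s_u$ and $s_v$ meet at a single point $q$, the set $\D_u\cap\D_v$ lies in an arbitrarily small ball around $q$ inside which both segments look like straight chords; and (iii) for any two segments sharing an endpoint $q$, the sum of their wedge angles at $q$ is smaller than the angle between them at $q$.

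It then remains to check that $\{\D_v\}_{v\in V(G)}$ is a system of pseudo-disks with intersection graph $G$. The intersection graph is $G$ because non-adjacent vertices are separated by (i) while adjacent ones share a contact point lying in both lenses. By (ii), verifying $|\partial\D_u\cap\partial\D_v|\le 2$ for an edge $uv$ is local to the contact point $q$ of $s_u$ and $s_v$, and since at most one of the two segments has $q$ in its interior there are exactly two cases. If $q$ is an endpoint of both, then near $q$ the lenses are two wedges with common apex $q$ and, by (iii), disjoint angular ranges, so $\D_u\cap\D_v=\{q\}$ and $\partial\D_u\cap\partial\D_v=\{q\}$. If $q$ is an endpoint of $s_u$ but interior to $s_v$, then $q$ is an interior point of $\D_v$, so the arc of $\partial\D_u$ made of the two sides of the lens $\D_u$ meeting at its apex $q$ lies in $\D_v$, while the rest of $\partial\D_u$ is outside $\D_v$ (it is a subset of $N_\delta(s_u)$, and $s_u$ meets $s_v$ only at $q$, so for $\delta$ small the part of $\D_u$ away from $q$ misses $\D_v$); hence $\partial\D_u$ enters and leaves $\D_v$ exactly once and $|\partial\D_u\cap\partial\D_v|=2$, with $\D_u\cap\D_v$ a lens bounded by one sub-arc of each boundary. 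In both cases the bound holds, so $\{\D_v\}$ is a system of pseudo-disks, and after the perturbation noted in \autoref{sec:graphCl} one obtains a representation with all the regularity one wants.

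The main obstacle is the second case: with plain (unpinched) neighbourhoods the tube around $s_v$ would run straight through $q$ with two strands, forcing $\partial\D_v$ to cross $\partial\D_u$ four times; the pinching is exactly what reduces the segment owning $q$ as an endpoint to a single boundary arc through $q$, and the contact axiom is what ensures only one side of the contact needs this treatment. A secondary technical point, which is why $\delta$ must be chosen after the representation, is ruling out extra boundary crossings between lenses of two nearly collinear segments that share an endpoint --- handled precisely by condition (iii).
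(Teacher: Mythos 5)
Your construction is correct, and it differs in an interesting way from the paper's. The paper first \emph{shortens} every segment by $\epsilon/2$ (making all segments pairwise disjoint) and then takes plain $\epsilon$-neighbourhoods, so every former contact becomes a generic overlap of two capsule-shaped regions whose boundaries cross exactly twice; the two contact configurations (endpoint--endpoint and endpoint--interior) are then checked directly. You instead keep the endpoints fixed and \emph{pinch} the neighbourhood of each segment into a lens with apexes at the segment's endpoints, so that an endpoint--endpoint contact becomes a single-point tangency and an endpoint--interior contact becomes a two-crossing overlap. Both yield valid systems of pseudo-disks with intersection graph $G$: a one-point boundary intersection is within the ``at most two points'' requirement, and your appeal to the perturbation remark of \autoref{sec:graphCl} correctly handles the passage to the normalized form with internal intersection points if desired. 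The trade-off is that the paper's shortening makes the near-collinear shared-endpoint situation a non-issue automatically, whereas you need your condition (iii) on wedge angles to rule out extra crossings there, and your endpoint--interior case needs the locality argument (your condition (ii) plus the observation that the far part of $\D_u$ misses $\D_v$) to ensure each side of the wedge crosses $\partial\D_v$ exactly once; you supply both, and at a level of detail at least matching the paper's, which at the corresponding step simply points to a figure. A small point either way: both proofs implicitly assume non-degenerate segments (the paper via ``every segment has length at least $\epsilon$''), so your explicit caveat is consistent with the paper.
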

\begin{proof}
Consider any contact system of segments $\mS$, and let $\epsilon >0$ be a sufficiently small value such that (1) every segment has length at least $\epsilon$, and (2) any two points in two non-intersecting segments are at distance more than $2\epsilon$ from each other.

The construction of the equivalent system of pseudo-disks goes in two steps. First, let us shorten every segment by $\epsilon/2$. We denote $\mS'$ the obtained system of segments. Note that in $\mS'$ all the segments are disjoint, but that $a',b'\in \mS'$ are at distance at most $\epsilon$, if and only if the corresponding original segments in $\mS$, $a,b$, were intersecting. Secondly, for each segment $a'\in \mS'$, let $\D_a$ be the set of points at distance at most $\epsilon$ from a point of $a'$. Clearly, every set $\D_a$ is homeomorphic to a disk, and any two such sets $\D_a,\D_b$ intersect if and only if the segments $a,b\in \mS$ intersect. Hence, we thus just have to show that if $\D_a$ and $\D_b$ intersect, then  $\pD_a$ and $\pD_b$ intersect on two points exactly. Let us consider the cases where $a$ and $b$ intersect at their endpoints, and the case where their intersection is an inner point for one of them. In both cases, \autoref{fig:contact-seg-pseudo-disks} clearly shows that there are only two intersection points.    \qed
\end{proof}
\begin{figure}
    \centering
    \includegraphics[width=0.65\textwidth]{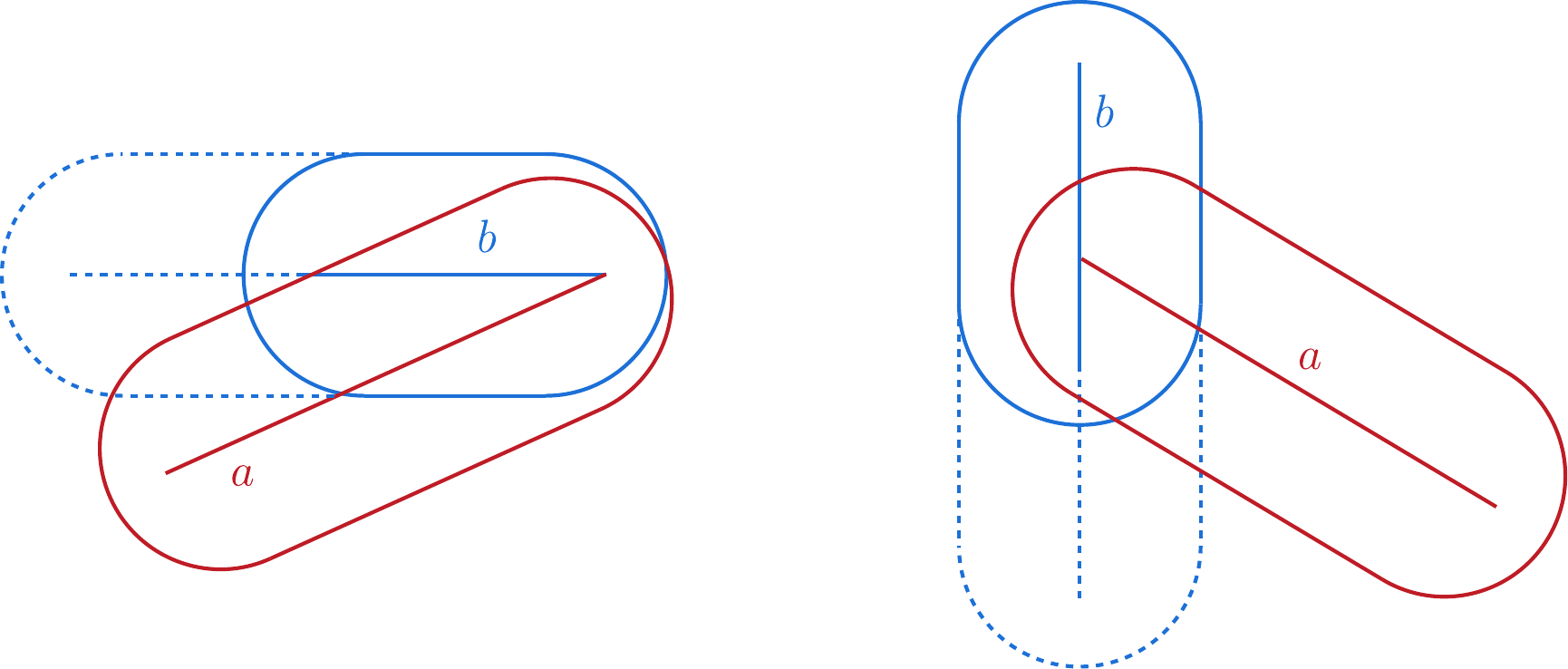}
    \caption{(left) Two segments $a,b$ intersecting at their endpoints, and the corresponding pseudo-disks. According to the lengths of $a$ and $b$, and the angle they form, the intersection points of $\pD_a$ and $\pD_b$ occur on their straight side or on their rounded extremity. (right) Two segments $a,b$ intersecting at an endpoint of $a$, and inside $b$. The corresponding pseudo-disks with border $\pD_a$ and $\pD_b$ intersecting on two points.}
    \label{fig:contact-seg-pseudo-disks}
\end{figure}

\begin{theorem}\label{thm:FVS-contact-seg}
    There is an algorithm that, given a $n$-vertex \CONTACTSEG graph with a representation and a parameter $k$, solves \FVS in time $2^{\O(k^{5/6} \log k)}n^{O(1)}$.
\end{theorem}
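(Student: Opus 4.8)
The plan is to feed \CONTACTSEG graphs into the very same three-stage pipeline used for \autoref{thm:FVS-pseudo}. By \autoref{thm:conact-seg-are-pd} a \CONTACTSEG graph is a pseudo-disk graph, and the representation built in the proof of \autoref{thm:conact-seg-are-pd} (shorten each segment by $\epsilon/2$, then take its $\epsilon$-neighborhood) is computable in polynomial time; moreover a short argument shows that a point lies in $q$ of the thickened regions only if the corresponding segments are pairwise in contact, so this representation has ply at most $\omega(G)$. Hence I would: (i) apply \autoref{cor:bothbranchings} (whose hypotheses hold by \autoref{lem:pseudo-ply-clique} after translating to pseudo-disks) with a value $p$ to be fixed, producing $2^{\O((k/p)\log k)}$ quadruples $(G',\PS,M,k')$ with $\omega(G')\le p$, $|M|=\O(pk)$, a ply-$p$ pseudo-disk representation, and — since induced subgraphs of \CONTACTSEG graphs are \CONTACTSEG graphs — also a contact-segment representation; (ii) kernelize each quadruple with the \CONTACTSEG analogue of \autoref{lemma:red-fvs-pseudo}; (iii) solve each kernel by treewidth dynamic programming \cite{Cygan2015Book}.

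The only quantitative gain over \autoref{thm:FVS-pseudo} must come from the kernel size: I claim the analogue of \autoref{lemma:red-fvs-pseudo} for \CONTACTSEG graphs yields $|V(G')|=\O(p^3 k)$ rather than $\O(p^4 k)$. To obtain this I would re-run the bookkeeping of \autoref{sssec:lemma} using two features that are special to (thickenings of) segments. First, a thin segment crosses the arrangement $\PSprime$ without branching, so every hosted graph $H_u$ is a \emph{path} (it is even enough that it has a bounded number of leaves), strengthening \autoref{cl:hosted-tree}; consequently $\numtypes(u)\le 2$ for every $u\notin M'$, which makes the analysis of trees $T$ with $\numtypes(T)\ge 2$ both simpler and tighter — in particular the per-pair count behind the $\O(p^2|M'|)$ bound in \autoref{sssec:notinM'size} should drop by a factor of $p$. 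Second, for families of segments the covering statement of \autoref{hyperpseudo} holds with the stronger bound $\O(m t^2)$ in place of $\O(m t^3)$ (this is the form such estimates take for ranges of bounded Davenport--Schinzel complexity), which, fed with $t=\O(p)$ and $m=|M|=\O(pk)$ into \autoref{cl:Dv-inside-bounded}, replaces the $\O(p^4k)$ bound on $|I_{M'}|$ by $\O(p^3 k)$. Combining the two, and checking that every intermediate $\O(p^4)$-type estimate in \autoref{sssec:notinM'size} loses the corresponding factor of $p$, gives a kernel $(G',k')$ with $k'\le k$, $G'$ a \CONTACTSEG (hence pseudo-disk) graph of ply at most $p$, and $|V(G')|=\O(p^3 k)$.

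Granting that, \autoref{cor:pseudo-tw} gives $\tw(G')=\O(\sqrt{p\cdot p^3 k})=\O(p^2\sqrt k)$, so each kernel is solved in time $2^{\O(p^2\sqrt k)}n^{\O(1)}$ and the total running time is
\[
2^{\O\left(\frac{k}{p}\log k\right)}n^{\O(1)}\ +\ 2^{\O\left(\frac{k}{p}\log k\right)}\cdot 2^{\O(p^2\sqrt k)}\, n^{\O(1)}.
\]
Setting $p=k^{1/6}$ balances $k/p$ against $p^2\sqrt k$ at $k^{5/6}$, which yields the claimed bound $2^{\O(k^{5/6}\log k)}n^{\O(1)}$. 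The crux — and the only real work beyond transcribing \autoref{sssec:lemma} — is the kernelization: one must verify that the path structure of the hosted graphs and the segment version of \autoref{hyperpseudo} really do shave a factor of $p$ off \emph{every} bound in \autoref{sssec:lemma} that is currently $\Theta(p^4 k)$ (notably the inner-vertex count and the $\numtypes(T)=2$ count), since a single surviving $p^4$ term would leave the exponent at $6/7$.
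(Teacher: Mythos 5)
Your overall pipeline and the final balancing ($p=k^{1/6}$, kernel $\O(p^3k)$, treewidth $\O(p^2\sqrt k)$) coincide with the paper's, but the way you try to shave the factor $p$ off the kernel has a genuine gap. The dominant $\O(p^2|M'|)$ term (the trees with $\numtypes(T)=2$) is the product of two quantities: $\O(p|M'|)$ pairs of types (\autoref{cl:few-degree-two-T}) and $\O(p)$ vertices per pair, where the per-pair $\O(p)$ counts the trivial and semi-trivial trees surviving rules \ruleref{e:r0pd} and \ruleref{e:r9pd}, i.e.\ the number of incomparable pairs of $K$- and $K'$-indices, which is governed by the clique size $p$ and has nothing to do with how many leaves the hosted graphs have. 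So even granting your (itself unproven) geometric claim that every $H_u$ is a path, $\numtypes(u)\le 2$ does not make this term drop by a factor of $p$; with $|M'|=\O(p|M|)=\O(p^2k)$, as you keep it, the term stays $\O(p^4k)$ and, as you yourself note, the exponent stays at $6/7$. Likewise your replacement of \autoref{hyperpseudo} by an $\O(mt^2)$ bound for segments is asserted, not proved, and the objects involved are thickened segments intersecting thickened segments, so it cannot simply be quoted as a known Davenport--Schinzel-type estimate.

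The paper gets the $\O(p^3k)$ kernel from two much simpler observations that your proposal misses. First, taking $\epsilon$ small enough in \autoref{thm:conact-seg-are-pd}, every thickened segment contains a point of ply one, hence no pseudo-disk of $V(G)\setminus M'$ can lie inside $\RMprime$ and $I_{M'}=\emptyset$ — no improved hypergraph theorem is needed at all. Second, and this is the step that actually drives the dominant term down, $|M'|=\O(|M|)=\O(pk)$ rather than $\O(p|M|)$: the segments of $M$ have only $\O(|M|)$ contact points, through each such point at most one segment outside $M$ passes (since $M$ hits every triangle twice), and every segment of $M'\setminus M$ passes through one of these points. With $|M'|=\O(pk)$ the unchanged bound $|I_{M'}|+\O(p^2|M'|)$ from \autoref{sssec:notinM'size} becomes $\O(p^3k)$, and the rest of your computation (treewidth, dynamic programming, $p=k^{1/6}$) then goes through exactly as you wrote it. To repair your proof, replace your two claimed savings by these two observations; as it stands, the key quantitative step is unsupported.
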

\begin{proof}
The algorithm is the same as the one of \autoref{thm:FVS-pseudo}, but the analysis changes in a few places. First, one should note that given an instance $(G,k)$ the branching at the beginning (obtained from \autoref{cor:bothbranchings}) produces a set $\C$ of instances $(G',M,k')$, where $G'$ is an induced subgraph of $G$, we can hence assume that $G'$ is provided with a \CONTACTSEG representation. Let us see now where the proof of \autoref{thm:FVS-pseudo} can be improved.

Note that by taking a sufficiently small $\epsilon$ in the proof of \autoref{thm:conact-seg-are-pd} one gets a pseudo-disk representation where every pseudo-disk contains a point of ply one. This implies that the set of inner-$M'$-vertices is necessarily empty, that is $|I_{M'}|=0$.

Note that in the \CONTACTSEG representation of $G'$, there are at most $2|M|$ points where segments of $M$ intersect, and all the segments going through such point but at most one belong to $M$ (because $M$ hits every triangle at least twice). Furthermore, every segment of $M'\setminus M$ intersects at least one of these points (assuming again that $\epsilon$ is sufficiently small in the construction of \autoref{thm:conact-seg-are-pd}). Hence, we have that $|M'|=\O(|M|)=\O(pk)$.

In the proof of \autoref{lemma:red-fvs-pseudo} it is shown that after the kernelization, the reduced instance has size $|I_{M'}|+p^2 |M'|$ (last paragraph of the proof), which is $\O(p^3k)$ for \CONTACTSEG graphs. Plugging this value in the proof of \autoref{thm:FVS-pseudo}, one gets the claimed complexity. \qed
\end{proof}

A finer analysis could probably lead to a time complexity of 
 $2^{\O(k^{4/5} \log k)}n^{O(1)}$. In any case, it still misses the lower bound of $2^{\o(\sqrt{n})}$ obtained for \CONTACTDEUXDIR\xspace  graphs, that are \CONTACTSEG graphs where the segments are either horizontal or vertical.

\section{Robust algorithms for \TH and \FVS in pseudo-disk graphs.}
\label{sec:robust}

In this section, we explain how to generalize the robust algorithms for \TH and \FVS in disk graphs of \cite{Faster2023Shinwoo} for pseudo-disk graphs. The first obstacle for generalization is that both algorithms use an EPTAS~\cite{Bonamy21EPTASCLIQUES} 
for the clique problem in disk graphs, and such algorithm is not known for pseudo-disk graphs. They use this EPTAS for the following branching process. If there is a clique of size $p$, for every subset of $p-2$ vertices, create a new instance where these vertices have been forced in the hitting set.
Actually, to apply this branching (without altering its time complexity), one only needs an algorithm running in time $2^{\O(\frac{k}{p}\log p)}n^{\O(1)}$ finding a clique of size $p$, if such clique exist. 
Achieving such time complexity with $k=1$ is impossible in general, but recall that the parameter $k$ comes from our instance of \TH or \FVS. We thus restrict ourselves to graphs $G$ for which the minimum triangle hitting set (resp. minimum feedback vertex set) is not obviously above $k$ (otherwise, we do not need to continue the branching process, and we just output that this is a no-instance).

\begin{lemma}\label{lem:p-clique-th-fvs}
    Given a graph $G$ and integers $3\leq p \leq k$, either $G$ does not have a feedback vertex (respectively a triangle hitting set) of size at most $k$ or we can find in time $2^{\O(p\log k)}n^{\O(1)}$ a clique of size $p$ in $G$ if it exists.
\end{lemma}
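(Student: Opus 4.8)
The plan is to exploit the fact that, thanks to the assumption that the minimum feedback vertex set (respectively triangle hitting set) of $G$ has size at most $k$, the graph $G$ is ``almost'' a forest (respectively ``almost'' triangle-free), so a large clique forces a small portion of the graph to absorb all the cliqueness. Concretely, suppose $G$ has a feedback vertex set $S$ of size at most $k$. Any clique $Q$ of $G$ intersects $S$ in at least $|Q|-2$ vertices, since $Q\setminus S$ induces a clique in the forest $G-S$ and hence has at most $2$ vertices. Therefore, if $G$ contains a $p$-clique at all, there is a subset $S'\subseteq S$ of size $p-2$ such that $S'$ is a clique and $S'$ extends to a $p$-clique by adding at most two more vertices from $V(G)$. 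In the triangle hitting case the argument is the same with ``forest'' replaced by ``triangle-free graph'': $Q\setminus S$ has at most $2$ vertices there as well.

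First I would check whether $G$ has a feedback vertex set (resp.\ triangle hitting set) of size at most $k$ is \emph{consistent} with the instance — but note we are not asked to decide this; we are allowed to return ``$G$ has no such set'' whenever convenient. So the algorithm proceeds as follows. If $n \le$ some polynomial threshold we brute-force; otherwise we first run any fixed-parameter algorithm (or a crude bound) to detect the case where the minimum feedback vertex set clearly exceeds $k$, and in that case we simply output that and stop. Otherwise we \emph{obtain} a feedback vertex set $S$ of size $\O(k)$ (e.g.\ via the standard $2^{\O(k)}n^{\O(1)}$ FPT algorithm for \FVS, or the $\O(\log k)$-approximation, or even the kernelization of \autoref{cor:bothbranchings}; for \TH we use the trivial fact that the vertex set of any maximal set of vertex-disjoint triangles, of size $\le k$, together with one endpoint per remaining triangle, is unnecessary — simpler: run the $2^{\O(k)}n^{\O(1)}$ algorithm for \TH). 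With $S$ of size $\O(k)$ in hand, we enumerate all subsets $S'\subseteq S$ of size $p-2$; there are $\binom{\O(k)}{p-2} = 2^{\O(p\log k)}$ of them. For each $S'$ that induces a clique, we look at $C=\bigcap_{v\in S'}N(v)$, and test in polynomial time whether $G[C]$ contains an edge or a vertex that, together with $S'$, completes a $p$-clique (i.e.\ whether $G[C]$ has a clique of size $\ge 2$, which is just checking for an edge, plus the two endpoints being adjacent to all of $S'$, which holds by definition of $C$). If some $S'$ yields such a completion we output the resulting $p$-clique; if none does, we output that $G$ has no $p$-clique. Total time $2^{\O(p\log k)}n^{\O(1)}$.

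The correctness is immediate from the structural observation above: if $G$ has a $p$-clique $Q$ and $G$ has a feedback vertex set / triangle hitting set of size $\le k$, then the specific set $S$ we computed is such a set (of size $\O(k)$), $|Q\cap S|\ge p-2$, and taking any $(p-2)$-subset $S'$ of $Q\cap S$ we have $S'$ is a clique and $Q\setminus S' \subseteq C$ provides the completion. Conversely any clique we output genuinely has size $p$. The only subtlety is that we must be careful to allow the "no such hitting set" escape hatch: if the computed $S$ is \emph{not} a valid feedback vertex set (which can only happen if the FPT subroutine reported failure, i.e.\ no feedback vertex set of size $\le k$ exists), then we are in exactly the first alternative of the lemma statement and output accordingly.

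The main obstacle — really the only delicate point — is making sure the intermediate step of "getting a feedback vertex set of size $\O(k)$ or certifying none exists'' runs within the allotted $2^{\O(p\log k)}n^{\O(1)}$ budget, which for small $p$ (e.g.\ $p=\O(1)$) is \emph{not} enough room for a $2^{\O(k)}$-time FVS algorithm. The fix is to not insist on an exact $\le k$ set: we only need a feedback vertex set whose size is $\O(k)$ \emph{whenever} one of size $\le k$ exists, and for this a polynomial-time constant-factor (or $\O(\log n)$-factor) approximation for \FVS suffices, since if the optimum is $\le k$ the approximate solution has size $\O(k\log n)$, and $\binom{\O(k\log n)}{p-2} = 2^{\O(p\log k + p\log\log n)}$ — still within $2^{\O(p\log k)}n^{\O(1)}$ after absorbing the $\log\log n$ factor into the $n^{\O(1)}$ term (or by noting $\log\log n = \O(\log k)$ once $n$ is at most singly exponential in $k$, and handling larger $n$ by a preliminary polynomial-kernel step). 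For \TH, a polynomial-time $2$-approximation via maximal triangle packing plays the same role. Everything else is routine.
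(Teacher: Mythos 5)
Your proposal is correct and is essentially the paper's own proof: compute a polynomial-time constant-factor approximation of \FVS (resp.\ of \TH via a maximal triangle packing), reject if it exceeds $\O(k)$, and otherwise use the fact that any $p$-clique meets the approximate hitting set in at least $p-2$ vertices to enumerate $\binom{\O(k)}{p-2}=2^{\O(p\log k)}$ candidate subsets and complete each by at most two vertices in polynomial time. The only quibbles are cosmetic: the detour through a $2^{\O(k)}$ FPT subroutine and the $\O(\log n)$-approximation discussion are unnecessary once you settle on the constant-factor approximation (as the paper does, citing a $2$-approximation for \FVS and a greedy $3$-approximation for \TH), and the maximal-triangle-packing bound is a $3$-approximation rather than a $2$-approximation, which is immaterial here.
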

\begin{proof}
    We start by computing a $2$-approximation $S$ for the \FVS problem in $G$ in polynomial time using the algorithm in \cite{2approx-fvs-1} or \cite{2approx-fvs-2} (respectively a $3$ approximation $S$ for the \TH problem using a greedy algorithm). If $|S|>2k$ (respectively $|S|>3k$) we know that $G$ does not have a feedback vertex set (respectively a triangle hitting set) of size at most $k$. Otherwise suppose $G$ contains a clique $C$ of size $p$ in $G$. We know that $|C\setminus S|\leq 2$ as otherwise there would be a triangle not hit by the feedback vertex set $S$ (respectively the triangle hitting set) which is not possible. Guessing $p-2$ vertices of $C$ in $S$ and then the two remaining vertices in $V(G)$ takes time $|S|^{p-2}n^2=2^{\O(p\log k)}n^{\O(1)}$.\qed
\end{proof}
Notice that for the branching process we only look for cliques of size $p$, with $p\le k+3$, as a clique of size $k+3$ is already a certificate of a no-instance.
Furthermore, if $p$ is chosen such that $p=\O(\sqrt{k})$, the algorithm of \autoref{lem:p-clique-th-fvs} runs in time $2^{\O(\frac{k}{p}\log p)}n^{\O(1)}$, and with this value the branching holds, with the same time-complexity.
After this branching process, the algorithm for \TH and for \FVS differ.

For the case of triangle hitting a core (see \cite{Faster2023Shinwoo} for the definition) of size $\O(pk)$ can be found in the same manner as in the original article (using \autoref{lem:pseudo-number-edges} to obtain the bound on the size of the core), giving a generalization to pseudo-disks with the same complexity as the case of disk graphs.

For the \FVS problem, the running time of the robust algorithm is bounded using the following key lemma from \cite{lokSODA22}. It was stated for disk graphs, but the proof only uses the fact that disks are pseudo-disks and so the generalization holds. 
\begin{lemma}{(Stated for disk-graphs in \cite{lokSODA22})}
    Let $G$ be a pseudo-disk graph that has some realization of ply $p$, and let $M\subseteq V(G)$ be such
that for all $v \in M,~N_G(v) \setminus M$ is an independent set, there does not exist a vertex in $V (G) \setminus M$ whose neighborhood is contained in $M$, and $G-M$ has treewidth $w$. Then, the treewidth of $G$ is $\O(max\{\sqrt{|M| \cdot w} \cdot p^{2.5}, w \cdot p\})$.
\end{lemma}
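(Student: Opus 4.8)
The plan is to reduce this treewidth bound to an application of \autoref{cor:pseudo-tw} after suitably modifying $G$ so that the modification both controls $|E(G)|$ and does not increase the treewidth by more than the desired factor. Concretely, I would start from the graph $G$ and replace, for each vertex $v\in M$, the star centered at $v$ with leaf set $N_G(v)\setminus M$ by a more economical gadget. The point of the independence hypothesis on $N_G(v)\setminus M$ is that the only cycles of $G$ through $v$ that involve a vertex of $V(G)\setminus M$ must leave $M$ and come back, so the interaction between $M$ and $V(G)\setminus M$ is ``tree-like'' locally. The absence of a vertex in $V(G)\setminus M$ whose neighborhood is entirely inside $M$ is what prevents a blow-up in the number of such vertices; without it one could attach arbitrarily many pendant-like vertices to $M$.

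The key steps, in order, would be: (1) bound the number of vertices of $V(G)\setminus M$ that have at least two neighbors in $M$ — call this set $B$ — using a planarity/Euler-type argument applied to a planar graph derived from the pseudo-disk representation of $G[M]$ (exactly in the spirit of the arguments in \autoref{sssec:notinM'intro}), getting $|B|=\O(p|M|)$ or similar; (2) handle the vertices of $V(G)\setminus M$ with exactly one neighbor in $M$ and the vertices with no neighbor in $M$ — these live essentially inside $G-M$, which has treewidth $w$; (3) build an auxiliary pseudo-disk graph $G'$ (of controlled ply, at most a constant times $p$, using \autoref{lem:pseudo-ply-clique}) in which $M$ has been replaced by a sparse structure so that $|E(G')| = \O(|M|\cdot w\cdot p^{?} + \ldots)$, and such that a tree decomposition of $G'$ can be lifted back to a tree decomposition of $G$ with only a controlled increase in width (using the tree decomposition of $G-M$ of width $w$ to ``fill in'' the bags); (4) apply \autoref{cor:pseudo-tw} to $G'$ to get $\tw(G') = \O(\sqrt{p\cdot |V(G')|})$ and unwind the counting to obtain the stated $\O(\max\{\sqrt{|M|\cdot w}\cdot p^{2.5},\, w\cdot p\})$ bound, where the two terms of the max come respectively from the ``$M$ interacts with $G-M$'' part and from a fallback when $|M|$ is small compared to $w$.

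I expect the main obstacle to be step (3): correctly defining the sparsified auxiliary graph and proving the ``lifting'' of a tree decomposition. The subtlety is that $G-M$ may have treewidth $w$ that is not $\O(1)$, so one cannot simply throw $G-M$ into a single bag; instead one has to interleave a width-$w$ tree decomposition of $G-M$ with the separators coming from $M$ and the $\O(p|M|)$ vertices in $B$, and argue that each vertex $v\in M$ only needs to appear together with a bounded-size ``slice'' of this decomposition at a time, so that the new bags have size $\O(w\cdot p)$ plus the contribution of a balanced separator of the sparse $M$-structure, which is where the $p^{2.5}\sqrt{|M|w}$ term arises via \autoref{cor:pseudo-tw} / \autoref{th:sep-m-tw}. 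Since the statement explicitly attributes this lemma to \cite{lokSODA22} and notes that the proof there goes through verbatim because disks are pseudo-disks, I would in practice simply cite that proof and only verify that every ingredient it uses — the edge bound \autoref{lem:pseudo-number-edges}, the treewidth-from-edges bound \autoref{th:sep-m-tw}, and the fact that pseudo-disk graphs of bounded clique number have bounded ply (\autoref{lem:pseudo-ply-clique}) — is available in the pseudo-disk setting, which it is.
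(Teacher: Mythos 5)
Your concluding fallback is exactly what the paper does: the paper gives no independent proof of this lemma, merely observing that the proof in \cite{lokSODA22} only uses the fact that disks are pseudo-disks (together with ingredients like \autoref{lem:pseudo-number-edges} and \autoref{lem:pseudo-ply-clique}, which are available in the pseudo-disk setting), so citing that proof and verifying its ingredients is precisely the intended argument. Your preliminary from-scratch sketch is not a complete proof on its own (step (3), which you yourself flag, is left with unspecified exponents and an unproved lifting argument), but since you ultimately defer to the cited proof, your approach coincides with the paper's.
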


But being able to bound the size of the cliques and this key lemma are not enough, a last obstacle to generalize the robust algorithm of \cite{Faster2023Shinwoo} is to bound the number of ``deep and regular'' vertices (see \cite{Faster2023Shinwoo} for the definition). They are exactly the inner-$M'$-vertices whose hosted graphs are paths in our proof, and can be bound in the same way as in \autoref{cl:Dv-inside-bounded} using \autoref{hyperpseudo}. However, in the  Lemma 8 of \cite{Faster2023Shinwoo}, the number of such vertices in disk graphs is bounded by $\O(p^2k)$ while \autoref{hyperpseudo} gives the bound $\O(p^3k)$ for pseudo-disk graphs, resulting in a worse complexity.

The obtained results are summarized in \autoref{fig:results}.
\begin{table}[!ht]
\begin{center}
\renewcommand{\arraystretch}{1.4}
\begin{tabular}{|c|c|c|c|c|}
    \hline
     Class & Problem & Time complexity & Robust & \\
     \hline
     \multirow{4}*{Disk} & \multirow{2}*{\FVS} & $2^{\O(k^{7/8}\log k)}n^{\O(1)} $& No & \multirow{4}*{\cite{Faster2023Shinwoo}}\\ \cline{3-4}
      &  & $2^{\O(k^{9/10}\log k)}n^{\O(1)} $& Yes & \\ \cline{2-4}
      & \multirow{2}*{\TH} & $2^{\O(k^{2/3}\log k)}n^{\O(1)} $& No & \\ \cline{3-4}
      &  & $2^{\O(k^{3/4}\log k)}n^{\O(1)}$\protect\footnotemark& Yes & \\ \cline{1-5}
     \multirow{3}*{Pseudo-disk} & \multirow{2}*{\FVS} & $2^{\O(k^{6/7}\log k)}n^{\O(1)} $& No & \autoref{sec:fvspseu}\\
     \cline{3-5}
      &  & $2^{\O(k^{10/11}\log k)}n^{\O(1)} $& \multirow{2}*{Yes} & \multirow{2}*{\autoref{sec:robust}}\\\cline{2-3}
      & \TH & $2^{\O(k^{3/4}\log k)}n^{\O(1)} $&  & \\
     \cline{1-5}
     \multirow{2}*{Contact-segment} & \multirow{2}*{\FVS} & $2^{\O(k^{5/6}\log k)}n^{\O(1)} $& No & \autoref{sec:contactsegp-arepseudo}\\
     \cline{3-5}
      &  & $2^{\O(k^{7/8}\log k)}n^{\O(1)} $& \multirow{1}*{Yes} & \cite{berthe2023subexponential}
     \\\cline{1-5}

     \hline
\end{tabular}
\renewcommand{\arraystretch}{1}
\end{center}
\caption{Summary of the previous and new results.}
\label{fig:results}
\end{table}
\footnotetext{The published version of the paper gives a bound of $2^{O(k^{4/5} \log k)}n^{O(1)}$ but it can easily be improved to $2^{O(k^{3/4} \log k)}n^{O(1)}$, as confirmed to us by the authors of \cite{Faster2023Shinwoo} (private communication).}

Here we have only considered the generalizations of the robust algorithms of \cite{Faster2023Shinwoo}, as their analysis for the complexities of the algorithms taking a representation are mainly based on the result that disk graphs have balanced separators of clique-size $\O(\sqrt{n})$, but for pseudo-disk graphs only separators of clique-size $\O(n^{2/3}\log n)$ are known (see \cite{Clique2023Berg} for the definition of the clique-size of a separator, and the mentioned bounds), giving worse complexities than the robust cases.

\section{Discussion}\label{sec:disc}
In this paper, we provide an algorithm for solving \fvs in pseudo-disk graphs running in times $2^{\O(k^{6/7}\log k)}n^{\O(1)}$. This generalizes the previous results~\cite{Faster2023Shinwoo,Lokshtanov23Approx} on disk graphs, with an improvement on the running time.
On the other hand, as noted in the introduction, there is for both problems an ETH-based lower-bound of $2^{o(\sqrt{n})}$~\cite{berthe2023subexponential}. So a natural problem is to get matching upper- and lower-bounds, in particular for \CONTACTSEG graphs. We have no evidence to believe that our upper-bounds could be tight.
Besides, our algorithm requires a pseudo-disk representation of the input graph. So a second open problem is to provide a robust subexponential FPT algorithm for \FVS in pseudo-disk graphs.
The bulk of our algorithm for \FVS consists in applying reduction rules to obtain an instance of size polynomial in the parameter. However, this is not strictly speaking a kernel as we do not reduce the input instance but the subexponential number of instances produced by the preprocessing step. It could however be interesting to investigate if the ideas from our reduction steps could be useful for kernelization in pseudo-disk graphs.

\bibliography{biblio}

\end{document}